\documentclass[11pt]{amsart}

\usepackage[T1]{fontenc}
\usepackage[utf8]{inputenc}
\usepackage{amsmath,amssymb,amsfonts,amsthm,mathtools}
\usepackage{mathrsfs}
\usepackage{enumitem}
\usepackage{booktabs}
\usepackage{microtype}
\usepackage{xcolor}
\usepackage{array}
\usepackage{graphicx}
\usepackage{tikz}
\usepackage{tikz-cd}
\usetikzlibrary{arrows.meta,calc,positioning}
\usepackage{xurl}
\usepackage{hyperref}
\usepackage[margin=1.15in]{geometry}
\usepackage[section]{placeins}
\usepackage{float}

\hypersetup{
  colorlinks=true,
  linkcolor=blue!50!black,
  citecolor=blue!50!black,
  urlcolor=blue!50!black
}

\newtheorem{theorem}{Theorem}[section]
\newtheorem{proposition}[theorem]{Proposition}
\newtheorem{lemma}[theorem]{Lemma}
\newtheorem{corollary}[theorem]{Corollary}
\newtheorem{definition}[theorem]{Definition}

\newtheorem{principle}[theorem]{Principle}
\newtheorem{problem}[theorem]{Problem}
\theoremstyle{remark}
\newtheorem{remark}[theorem]{Remark}
\newtheorem{example}[theorem]{Example}

\newcommand{\cA}{\mathcal A}

\newcommand{\cC}{\mathcal C}
\newcommand{\cD}{\mathcal D}
\newcommand{\cE}{\mathcal E}
\newcommand{\cF}{\mathcal F}
\newcommand{\cH}{\mathcal H}

\newcommand{\cS}{\mathcal S}
\newcommand{\cV}{\mathcal V}
\newcommand{\bC}{\mathbb C}
\newcommand{\bP}{\mathbb P}
\newcommand{\bZ}{\mathbb Z}
\newcommand{\one}{\mathbf 1}
\newcommand{\Hom}{\operatorname{Hom}}
\newcommand{\End}{\operatorname{End}}

\newcommand{\id}{\operatorname{id}}
\newcommand{\Prob}{\operatorname{Pr}}
\newcommand{\Synd}{\operatorname{Synd}}
\newcommand{\Foot}{\operatorname{fp}}
\newcommand{\supp}{\operatorname{supp}}
\newcommand{\ev}{\operatorname{ev}}

\newcommand{\Pic}{\operatorname{Pic}}

\newcommand{\Jac}{\operatorname{Jac}}
\newcommand{\Spec}{\operatorname{Spec}}
\newcommand{\Irr}{\operatorname{Irr}}

\title[A diagrammatic field theory of quantum error correction]{A diagrammatic field theory of quantum error correction}
\author{Steven Rayan}
\address{Centre for Quantum Topology and Its Applications (quanTA) and Department of Mathematics and Statistics, University of Saskatchewan}
\email{rayan@math.usask.ca}
\date{\today}
\subjclass[2020]{81P73 (Primary), 18M20, 57K16, 81T45, 81T40, 18M30, 81P70}
\keywords{Quantum error correction, fusion-space code, unitary fusion category, modular tensor category, footprint projector, syndrome algebra, Knill--Laflamme condition, topological quantum field theory, conformal field theory, conformal blocks, Ising modular category, ZX-calculus, defect network, field-theoretic decoder, Peierls threshold theorem, Hopf algebra, tube algebra, Yangian, Higgs bundle, spectral curve, Jacobian, abelian variety}

\begin{document}

\begin{abstract}
We develop a field-theoretic framework for quantum error correction centred on fusion-space codes in unitary fusion categories.  An admissible cluster of tensor factors determines total-charge sectors and orthogonal \emph{footprint projectors}, which record locally visible field-theoretic data left by an error history.  The central distinction is between diagnostic footprint algebras and \emph{syndrome-admissible} commuting algebras: the latter may be measured without revealing logical information and resolve the chosen error representatives into measured sectors.  For such algebras, exact correctability is equivalent to a fibrewise Knill--Laflamme condition.  Under a contractible-vacuum locality hypothesis, closed neutral composites give a categorical sufficient criterion for scalar action on the code.

The Ising theory supplies complementary finite examples.  For four \(\sigma\)-punctures, pair-charge footprints become Pauli \(Z\)- and \(X\)-type diagnostics after recoupling, and the \(F\)- and braid data give an exact one-qubit Clifford shadow.  For six \(\sigma\)-punctures, a proper two-dimensional code admits a syndrome-admissible pair-charge measurement and exact recovery from an explicit Majorana bilinear error.  A second bilinear has the same measured footprint but differs by a logical bit flip, giving a concrete nontrivial footprint fibre and an actual decoding ambiguity.  We also formulate conformal-block likelihood data and compute geometry-dependent Ising four-point weights.

For growing families, we prove a conditional Peierls-type threshold theorem.  Bounded connected-region growth, local stochastic noise, local neutralizability of small residual components, and component-wise decoder balance imply \(\Pr_L(\mathrm{fail})\le C|\Omega_L|e^{-cL}\) below a nonzero constant error rate.  Its conclusion is conditional on these hypotheses and does not extend automatically to arbitrary TQFT/CFT codes.  We conclude with representation-theoretic and algebro-geometric directions involving tube and Hopf algebras, Yangian-type structures, Higgs bundles, spectral curves, Jacobians, and abelian varieties.\end{abstract}

\maketitle

\tableofcontents

\section{Introduction}

Quantum error correction (QEC) is usually introduced through subspaces, stabilizers, parity checks, and recovery maps.  This language is indispensable.  It underlies the Knill--Laflamme conditions \cite{KnillLaflamme1997}, the stabilizer formalism \cite{Gottesman1997}, CSS codes \cite{CalderbankShor1996,Steane1996}, threshold theorems, and the engineering of scalable fault tolerance.  Many geometrically defined codes, however, also carry field-theoretic structure.  Toric and surface codes reflect topological gauge theory, while string-net codes are lattice fixed points built from fusion-category data.  Anyonic computation, Turaev--Viro codes, condensation constructions, and holographic tensor networks provide further instances in which field theory enters directly into the organization of the code.

We make one part of this programme precise by treating a quantum code as a state space assigned by an extended topological or conformal field theory, or by a categorical model of one, to a spatial datum with boundaries, punctures, defects, and a chosen realization.  The field theory does not by itself implement a device.  It supplies ideal state spaces, sectors, local boundary data, defect moves, and amplitude weights.  A physical realization \(\Gamma\), a measurement model, and a recovery protocol remain separate operational data.

We call the locally visible boundary datum left by an error history its \emph{footprint}.  The history may be a local operator, a field insertion, a defect segment, a braid, or a string-net process.  Relative to a specified measurement model, a syndrome is a measured footprint.  A measured syndrome therefore records local field-theoretic data induced by the error.  We use ``footprint'' for the underlying boundary datum whose measurement, when syndrome-admissible, becomes QEC syndrome data; established syndrome terminology in stabilizer, homological, and anyonic codes remains unchanged.  This use is also unrelated to the common engineering use of \emph{footprint} for qubit, area, space, or hardware overhead.

The usual stabilizer question is
\[
\boxed{\begin{minipage}{0.86\linewidth}
\centering
\text{Which Pauli error is compatible with the measured stabilizer syndrome?}
\end{minipage}}
\]
The field-theoretic version asks instead
\[
\boxed{\begin{minipage}{0.86\linewidth}
\centering
Among all defect histories compatible with an observed footprint, which histories dominate the relevant field-theoretic weight?
\end{minipage}}
\]
Here ``dominate'' depends on the model: it may refer to posterior probability, amplitude norm, effective action, state-sum weight, or likelihood under a physical noise model supplemented by field-theoretic data.  In a topological theory the weight may depend only on the topological class of a defect network.  In a conformal theory it may also depend on cross-ratios, scaling dimensions, modular parameters, and conformal blocks.

The mathematical core is finite and semisimple.  In a unitary fusion category, an admissible cluster of tensor factors determines a total-charge decomposition of a fusion space, and the associated orthogonal projectors are the footprint projectors.  Compatible projectors generate commutative measurement algebras, whereas overlapping clusters may generate noncommuting diagnostic algebras.  A QEC syndrome algebra is more restrictive: it must be syndrome-admissible, so that the no-error measurement reveals no logical information and the chosen error representatives are resolved by measured sectors.  Under this hypothesis, exact correctability is equivalent to a fibrewise Knill--Laflamme condition.  The phrase ``locally visible boundary datum'' expresses the broad field-theoretic intuition.  The theorem-level definition used below is the finite semisimple categorical one in terms of total-charge sectors and their orthogonal projectors.

\subsection{Contributions and structure of the article}\label{subsec:contributions}

The scope should be set against several established strands of categorical and field-theoretic QEC.  Surface codes already show how topology can protect logical information: local errors create local excitations, whereas noncontractible strings can act logically.  Kitaev's anyonic models provide a field-theoretic setting for this picture \cite{Kitaev2003,Kitaev2006}, and Dennis--Kitaev--Landahl--Preskill relate topological memory to the statistical mechanics of error chains \cite{DennisKitaevLandahlPreskill2002}.  Levin--Wen models and Turaev--Viro codes replace the abelian toric-code input by general fusion-category data \cite{LevinWen2005,KoenigKuperbergReichardt2010,SchotteZhuBurgelmanVerstraete2022}.  Nonabelian error correction has also been studied directly through decoding and simulation of anyonic histories, including Fibonacci, Ising, and broader nonabelian settings \cite{BurtonBrellFlammia2017,WoottonHutter2016,DauphinaisPoulin2017}.  In a different operational role, measurement-only topological quantum computation uses topological-charge measurements deliberately as computational primitives \cite{BondersonFreedmanNayak2008}; this is a particularly clear precedent for the diagnostic/syndrome distinction made below.  The distinction also has close relatives in operator and subsystem QEC, where one separates preserved logical information from gauge or otherwise operationally irrelevant degrees of freedom \cite{KribsLaflammePoulin2005,BlumeKohoutNgPoulinViola2010}.  Our use of categorical multiplicity spaces should be read against that background.  Recent work has further established fault-tolerant anyonic computation under explicit active-correction schemes \cite{LyonsBrown2026}.  Anyon condensation and spacetime domain walls have been used to describe fault-tolerant gates and dynamically driven codes \cite{KesselringEtAl2024}, while holographic codes interpret tensor-network encoders through bulk/boundary reconstruction \cite{PastawskiYoshidaHarlowPreskill2015}.  Here we isolate a common local datum that appears across these settings.

The main contribution is organizational.  Total-charge decompositions and Knill--Laflamme equations conditioned on measured syndromes are established ingredients.  We organize the code around one field-theoretic datum: the footprint of an error.  This separates a latent error history from the local trace available to measurement.  Stabilizer syndromes, boundaries of error chains, and total-charge or fusion-channel measurements are established constructions; what is new here is their treatment as instances of a common intermediate datum, together with an explicit criterion distinguishing diagnostic measurements from syndrome-admissible ones.  The later Peierls hypotheses are intended as a portable checklist for labelled, non-chain models; the counting mechanism itself is familiar from surface-code arguments.  We then ask how exact recovery depends on the measured footprint sectors.  The main object is the syndrome-admissible footprint algebra.  For fusion-space codes in a unitary fusion category, we construct total-charge footprint projectors, distinguish diagnostic algebras from syndrome algebras, and prove a fibrewise Knill--Laflamme theorem.  In the contractible vacuum sector, the same condition is expressed by scalarity of closed neutral composites in \(\End_{\cC}(\one)\cong\bC\).

The same construction has diagrammatic and analytic forms.  We treat the ZX-calculus as the stabilizer shadow of a selected defect calculus, and conformal blocks as sources of geometry-dependent likelihood factors once a Hermitian conformal-block datum has been fixed.  The Ising calculations separate two roles that are easy to conflate.  The four-puncture qubit makes the diagnostic role explicit: its fusion-space encoding, recoupling matrices, braid matrices, complementary footprint observables, circuit shadow, and conformal-block likelihood weights are all computable.  A six-puncture subspace then supplies a genuine syndrome-admissible example in which a pair-charge measurement detects a specified error and a conditional recovery restores an arbitrary logical state; enlarging the error family produces two same-footprint representatives whose residual difference is logical.  Here ``diagrammatic'' has three related meanings.  At the stabilizer level, the ZX-calculus furnishes a graphical language for qubit processes, complementary observables, and surface-code lattice surgery \cite{CoeckeKissinger2017,deBeaudrapHorsman2020,ChancellorEtAl2016,BombinLitinskiNickersonPastawskiRoberts2023}.  At the topological level, string-net, Wilson-line, and defect-network diagrams encode the local and global structure of topological codes.  Finally, at the conformal level, the same diagrams acquire analytic weights through conformal blocks and correlation functions.  Writing TQFT and CFT for topological and conformal field theory, respectively, the schematic progression is
\begin{equation}\label{eq:containment}
\text{ZX diagrams}
\rightsquigarrow
\text{defect/string-net diagrams}
\rightsquigarrow
\text{extended TQFT/CFT amplitudes}.
\end{equation}
Equivalently, ZX-calculus is the stabilizer or Pauli shadow of the more general field-theoretic diagrammatics in those models where an appropriate qubit stabilizer sector has been selected. (For some prior work on the interface between TQFTs and quantum computing, at the foundational level of gate synthesis rather than error correction but also with a view to lifting the ZX-calculus, we point the reader to \cite{AzamRayan2024}.)

Beyond these finite-dimensional statements, we prove a conditional threshold theorem for families of footprint codes satisfying explicit local hypotheses.  The theorem has a conditional scope: if a growing family has bounded local geometry, local stochastic noise, neutralizability of small closed residual histories, and a decoder whose residual components contain a definite fraction of actual faults, then logical failure is exponentially suppressed below a nonzero noise strength.  This connects the categorical formalism to the architecture-level question of threshold behaviour.  We then work out two Ising models.  Four Ising \(\sigma\)-insertions with total vacuum charge define a two-dimensional logical space
\[
\cH_L=\Hom(\one,\sigma^{\otimes 4})\cong\bC^2.
\]
The nontrivial \(F\)-move is the Hadamard matrix, the pair-charge observables become complementary diagnostics, and the two four-point conformal blocks determine geometry-dependent relative weights for the competing fusion channels.  For six \(\sigma\)-insertions, the vacuum fusion space is four-dimensional.  Choosing a two-dimensional fixed-pair-charge subspace gives redundancy: the \((5,6)\)-charge is constant on the code and can therefore serve as a syndrome.  An explicit parity-transfer error moves the code isometrically into the orthogonal measured sector, so the footprint measurement and conditional recovery can be verified exactly.

The later sections turn to larger architectures and to representation-theoretic and geometric structures that may refine decoding.  A first geometric layer already appears on the Riemann sphere \(\bP^1\): the conformal-block example varies over the moduli space of marked spheres, with the planar anyon configuration appearing as one fibrewise picture.  Parallel transport is controlled by flat, generally meromorphic connections, notably the Knizhnik--Zamolodchikov connection, whose monodromy encodes braid and fusion data.  In semiclassical or algebro-geometric regimes, related structures are described by Higgs fields and spectral curves over \(\bP^1\).  This leads to a concrete question for decoding: how should insertion positions, collision divisors, monodromy, and spectral data enter the relative weights of compatible error histories?

The paper is organized as follows.  Section 2 recalls ordinary QEC and the homological prototype of footprint decoding.  Section 3 gives the categorical core: field-theoretic code data, categorical footprint projectors, the footprint projector algebra, and the fibrewise Knill--Laflamme theorem.  Sections 4--6 treat defect-network errors, the ZX shadow, and conformal-block likelihood weights.  Section 7 works through the four-puncture Ising diagnostic example and a six-puncture syndrome-admissible code with exact recovery.  Section 8 turns to scalable architectures and proves the Peierls-type threshold theorem.  Sections 9 and 10 develop representation-theoretic and algebro-geometric directions, and Section 11 collects concrete open problems.  Readers primarily interested in the formal QEC results may pass from the categorical setup and Theorem~\ref{thm:footprint-algebra-KL} in Section~3 directly to the Ising examples in Section~7 and the threshold criterion in Section~8.

\section{Classical quantum error correction and its topological enlargement}

We begin by recalling the elementary algebraic skeleton of quantum error correction.  A quantum code is a subspace \(\cH_L\subset \cH_P\) of a physical Hilbert space.  Let \(P\) denote the orthogonal projection onto \(\cH_L\).  A finite set of errors \(\cE=\{E_a\}\subset \End(\cH_P)\) is correctable precisely when the Knill--Laflamme condition holds:
\begin{equation}\label{eq:KL}
P E_a^\dagger E_b P = \lambda_{ab}P
\end{equation}
for some Hermitian matrix \((\lambda_{ab})\).  In words, the code corrects a family of errors if the code cannot internally distinguish the composite error \(E_a^\dagger E_b\) except through a scalar.  This condition is the algebraic heart of exact quantum error correction.  More general subsystem and operator-algebraic formulations place the same issue in a broader information-preserving framework \cite{KribsLaflammePoulin2005}; the present paper remains in the subspace setting except where gauge or multiplicity degrees of freedom are explicitly noted.

The topological version of \eqref{eq:KL} replaces the phrase ``acts as a scalar on the code space'' by a geometric mechanism.  Local operators supported in a contractible region act trivially, or create locally detectable excitations.  Logical operators are represented by operators with nontrivial topology: loops around handles, strings connecting suitable boundaries, or defect networks whose endpoints cannot be locally neutralized.  The toric code is the archetype.  On a closed surface \(\Sigma\), its logical operators are governed by the homology of \(\Sigma\), and local error strings are detected by their endpoints.  The stabilizer syndrome records the boundary of an error chain, while a logical error corresponds to a cycle with nontrivial homology class.

This picture can be expressed schematically as
\[
\begin{array}{ccl}
\text{physical qubits} &\rightsquigarrow& \text{edges or cells of a lattice}\\
\text{stabilizer checks} &\rightsquigarrow& \text{local flatness or charge constraints}\\
\text{syndrome} &\rightsquigarrow& \text{boundary of an error chain}\\
\text{logical operator} &\rightsquigarrow& \text{nontrivial cycle or Wilson line}
\end{array}
\]
The topological memory of Dennis--Kitaev--Landahl--Preskill can be read in precisely this way: error correction is an inference problem over chains with fixed boundary, modulo the homological distinction between correctable and logical chains \cite{DennisKitaevLandahlPreskill2002}.

Three layers will be kept separate throughout.  The \emph{algebraic} layer consists of the projection \(P\), the error operators \(E_a\), and the condition \(P E_a^\dagger E_bP=\lambda_{ab}P\).  The \emph{measurement} layer extracts classical information that distinguishes some error classes without revealing the encoded state.  The \emph{inference} layer chooses a recovery representative among errors compatible with the measured data.  In a stabilizer code these are the code projector, the syndrome map, and the decoder.  In a topological code they are the ground-state projector, the boundary or endpoint data of an error chain, and the choice of a compatible homology class.

A syndrome is a deliberately coarse local witness rather than a complete microscopic record of the error.  A more detailed record would usually carry unnecessary information and could disturb the encoded data.  In a surface code, an error chain \(c\) and another error chain \(c+c_0\), where \(c_0\) is a cycle, have the same boundary.  They therefore have the same local syndrome, but they may differ by a logical operator if \(c_0\) is homologically nontrivial.  The decoder consequently solves an equivalence problem: it selects a representative in a class of chains with a fixed local boundary.  The proposed footprint terminology is designed to preserve exactly this distinction in the field-theoretic setting.  A footprint records the local boundary datum left by an error history.

One may write the surface-code situation, at the coarsest level, as
\[
  \Foot(c)=\partial c,
  \qquad
  c\sim c' \quad\Longleftrightarrow\quad \partial c=\partial c'.
\]
The logical ambiguity is then measured not by the footprint but by the class of \(c-c'\) in homology.  This elementary formula is the prototype for the more general constructions below.  The boundary operator \(\partial\) will be replaced by a local field-theoretic boundary map, homology classes will be replaced by topological or categorical classes of defect networks, and stochastic chain weights will be replaced, or supplemented, by topological and conformal amplitudes.

\begin{proposition}[Homological prototype of footprint decoding]\label{prop:homological-prototype}
Consider a CSS surface-code model in which a family of Pauli \(X\)-type errors is represented by \(1\)-chains \(c\in C_1(\Gamma;\bZ_2)\), and in which the measured \(Z\)-type syndrome is the boundary \(\partial c\in C_0(\Gamma;\bZ_2)\).  Let \(P\) be the code projector.  If two chains \(c,c'\) have different boundaries, then they lie in different ideal syndrome sectors.  If they have the same boundary, then \(c+c'\) is a cycle.  The composite Pauli error \(X(c)X(c')=X(c+c')\) is correctable on the code precisely when the homology class of \(c+c'\) acts trivially on the encoded space, equivalently when the cycle is a stabilizer boundary in the chosen surface-code realization.
\end{proposition}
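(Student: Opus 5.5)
The argument is a direct unpacking of the CSS structure, and I will treat the three assertions in turn.

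\emph{Distinct boundaries.} Each \(Z\)-type check is a vertex operator \(A_v=\prod_{e\ni v}Z_e\). From the single-qubit relation \(ZX=-XZ\) one gets \(A_vX(c)=(-1)^{(\partial c)_v}X(c)A_v\). Let \(\psi\) be any code state, so \(A_v\psi=\psi\). Then \(X(c)\psi\) is a simultaneous eigenvector of all the \(A_v\), with eigenvalues \((-1)^{(\partial c)_v}\). The ideal syndrome sector determined by a vector \(s\in C_0(\Gamma;\bZ_2)\) is the joint eigenspace \(\prod_v\tfrac12(1+(-1)^{s_v}A_v)\). Distinct eigenvalue patterns give orthogonal joint eigenspaces. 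Hence if \(\partial c\neq\partial c'\), the states \(X(c)P\) and \(X(c')P\) lie in orthogonal sectors. As a consequence \(PX(c)^\dagger X(c')P=0\), which is the Knill--Laflamme condition \eqref{eq:KL} with \(\lambda=0\).

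\emph{Equal boundaries.} The statement that \(c+c'\) is a cycle is immediate from linearity of \(\partial\) over \(\bZ_2\). The identity \(X(c)X(c')=X(c+c')\) holds because \(X\)-operators commute and square to the identity, so the exponent vector adds mod \(2\). Moreover \(X(c)^\dagger=X(c)\). It follows that the relevant Knill--Laflamme entry is \(PX(c+c')P\).

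\emph{Main step.} Let \(z=c+c'\in Z_1\). First, \(X(z)\) commutes with every \(A_v\), since \(\partial z=0\), so it preserves the code space. Second, it commutes with the plaquette \(X\)-checks \(B_f\), since all \(X\)-operators commute. The code is the joint \(+1\) eigenspace of all checks, so \(PX(z)P=X(z)P\), an operator on \(\cH_L\). I then use the standard surface-code fact that the commutant of the stabilizer group among \(X\)-type Paulis is \(\{X(z):z\in Z_1\}\). Within this set, the elements acting as \(\pm\) identity on \(\cH_L\) are exactly the stabilizer elements, namely \(X(\partial_2 b)\) for \(b\in C_2\), which act as \(+1\). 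The remaining cycles act as nontrivial logical operators. This gives the following chain of equivalences:
\[
PX(z)P=\lambda P \iff X(z)|_{\cH_L}\in\bC\cdot\id \iff [z]=0\in H_1(\Gamma;\bZ_2)=Z_1/B_1.
\]

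The expected obstacle is the backward direction of this equivalence. I must show that a homologically nontrivial \(z\) cannot act as a scalar on \(\cH_L\). To do so I will exhibit a \(Z\)-type logical \(Z(w)\), with \(w\) a dual cycle, whose intersection number with \(z\) is odd. Such a \(Z(w)\) commutes with all checks and therefore preserves \(\cH_L\). It anticommutes with \(X(z)\), and this is incompatible with \(X(z)\) acting as a nonzero scalar on \(\cH_L\). Producing this dual cycle requires Poincaré duality on the surface, that is, nondegeneracy of the intersection pairing \(H_1\times H_1^{\ast}\to\bZ_2\). I will treat this as the standing input from the chosen surface-code realization rather than reprove it. The intended phrase ``acts trivially'' means acting as a scalar. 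Because \(X(z)\) is an involution, any such scalar is \(\pm1\), and for a stabilizer boundary it is \(+1\).
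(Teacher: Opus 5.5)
Your proposal is correct and follows the same route as the paper's proof. The syndrome of $X(c)$ is read off as $\partial c$ from anticommutation with the vertex checks, distinct boundaries give orthogonal sectors, equal boundaries reduce the Knill--Laflamme entry to $PX(c+c')P$, and boundaries act trivially as products of plaquette stabilizers. Your anticommuting $Z(w)$ argument goes further than the paper by actually proving that homologically nontrivial cycles act non-scalarly, which the paper only asserts.

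The input you defer to Poincar\'e duality is just linear algebra over $\bZ_2$. $Z(w)$ commutes with every plaquette check exactly when $w\in(\operatorname{im}\partial_2)^\perp$. If $z\notin\operatorname{im}\partial_2=\bigl((\operatorname{im}\partial_2)^\perp\bigr)^\perp$, then some such $w$ has $\langle w,z\rangle=1$. So this step needs no surface topology.
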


\begin{proof}
Let us write the CSS syndrome calculation explicitly.  The chain
\(c=\sum_e c_e e\) specifies the Pauli operator \(X(c)=\prod_e X_e^{c_e}\).  A vertex \(Z\)-check anticommutes with precisely those incident \(X_e\)'s which occur in the support of \(c\).  Hence the measured sign at a vertex is the parity of the incident coefficients of \(c\), which is exactly the coefficient of \(\partial c\) at that vertex.  Thus the ideal syndrome of \(X(c)\) is \(\partial c\).  If \(\partial c\neq\partial c'\), the corresponding error spaces are separated by orthogonal eigenspaces of the commuting check algebra, so they are different ideal syndrome sectors.

If \(\partial c=\partial c'\), then \(\partial(c+c')=0\), since the chains are over \(\bZ_2\).  The two errors therefore differ by a cycle.  The Knill--Laflamme composite is
\[
  P X(c)^\dagger X(c')P=P X(c+c')P,
\]
because Pauli \(X\)-operators are self-adjoint and the product is addition of chains mod \(2\).  A closed chain which is a stabilizer boundary acts trivially on the code space; in the usual surface-code realization it is a product of plaquette-type stabilizers.  A closed chain representing a nontrivial homology class acts instead as a logical operator.  Consequently the boundary \(\partial c\) is the local footprint, and the only residual ambiguity inside a footprint fibre is the homology class of the difference cycle.
\end{proof}

The same pattern persists beyond the abelian toric code.  In a Levin--Wen string-net model associated to a unitary fusion category \(\cC\), the local Hilbert space is spanned by labels on edges, vertex constraints enforce admissible fusion, and plaquette terms impose a topological vacuum constraint.  Excitations are no longer merely \(e\)- and \(m\)-type defects.  They are described by the Drinfeld center \(Z(\cC)\), and errors are better understood as string operators or local violations carrying nontrivial anyon charge.  The passage from the toric code to a general string-net code is therefore already a passage from binary stabilizer syndrome to categorical fusion syndrome.

The resulting relation will recur throughout the paper:
\[
\boxed{\begin{minipage}{0.86\linewidth}
\centering
For topological stabilizer codes admitting an anyonic description, stabilizer syndromes form an abelian shadow of fusion-sector data.
\end{minipage}}
\]
In the abelian topological case, the fusion data are often governed by a finite abelian group and the syndrome is a list of violated checks.  In the nonabelian case, the syndrome can include topological charge labels and fusion-channel data.  This statement is deliberately scoped to topological code families: a general stabilizer code need not possess a natural fusion interpretation.  The close relation between Abelian anyon theories and Pauli topological subsystem codes has been made precise in complementary directions \cite{EllisonEtAl2023}.  In a conformal enhancement, the same fusion-channel alternatives may carry geometry-dependent weights.

\section{Field-theoretic code data}

We begin with a broad code datum that can specialize to surface codes, string-net codes, conformal-block encodings, and defect-network models, including realizations obtained by discretizing a continuum field theory or by choosing a tensor network.  The background formalism is the standard theory of quantum groups, modular categories, and three-dimensional state-sum or surgery TQFTs; see, for example, \cite{ReshetikhinTuraev1991,Turaev1994,BakalovKirillov2001,Kassel1995,TuraevViro1992}.

\begin{definition}[Field-theoretic quantum code datum]\label{def:FTcode}
A \emph{field-theoretic quantum code datum} is a tuple
\begin{equation}\label{eq:code-datum}
\mathfrak Q=(\cF,\Sigma,\cD,\Gamma,\cE,\mu)
\end{equation}
with the following constituents:
\begin{enumerate}[label=(\roman*)]
\item \(\cF\) is a field theory input.  Depending on context, this may be an extended topological quantum field theory, a rational conformal field theory, a modular functor, a unitary modular tensor category, a unitary fusion category together with its string-net/Turaev--Viro theory, or a defect field theory.
\item \(\Sigma\) is the spatial datum.  It may be a compact oriented surface, punctured surface, manifold, stratified space, or boundary component to which \(\cF\) assigns a state space.
\item \(\cD\) is the decoration datum.  It records labels of punctures, boundary conditions, condensable algebras, domain walls, defect lines, marked sectors, or prescribed total charge constraints.
\item \(\Gamma\) is a physical realization datum.  It may be a lattice, cellulation, spine, tensor network, hardware graph, or other discretization through which the continuum or categorical state space is represented in a physical Hilbert space.
\item \(\cE\) is an error model.  It is a specified class of local field insertions, local operators, defect segments, Pauli errors, anyon pair-creation events, or spacetime defect networks.
\item \(\mu\) is a weighting or inference structure on errors.  In a stochastic Pauli model, \(\mu\) is a probability distribution.  In a topological model, \(\mu\) may be a topological path-integral weight.  In a conformal model, \(\mu\) may be built from conformal-block norms, correlation functions, scaling dimensions, or modular data.
\end{enumerate}
The associated logical Hilbert space is the field-theoretic state space
\begin{equation}\label{eq:logical}
\cH_L(\mathfrak Q)=Z_\cF(\Sigma,\cD),
\end{equation}
while the associated physical Hilbert space is a realization
\begin{equation}\label{eq:physical}
\cH_P(\mathfrak Q)=\cH_P(\Gamma).
\end{equation}
A choice of discretized realization supplies an encoding map
\begin{equation}\label{eq:encoding}
V_\Gamma: Z_\cF(\Sigma,\cD)\longrightarrow \cH_P(\Gamma),
\end{equation}
usually required to be an isometry or an approximate isometry onto a protected subspace.
\end{definition}

Definition \ref{def:FTcode} separates the abstract code space from its physical implementation.  A topological field theory may assign a finite-dimensional Hilbert space to a decorated surface without specifying a microscopic qubit Hamiltonian; a lattice realization, tensor network, or hardware graph then supplies the physical degrees of freedom.  Conversely, different physical realizations may represent the same field-theoretic code space, just as different surface-code lattices, boundaries, and surgery protocols can realize the same homological logic.

There are two conventions implicit in Definition \ref{def:FTcode}.  First, the object \(Z_\cF(\Sigma,\cD)\) is meant to be the \emph{ideal} logical space.  It is the state space seen by the continuum or categorical theory after imposing the topological, conformal, or defect constraints.  The physical Hilbert space \(\cH_P(\Gamma)\) is usually much larger.  It contains microscopic degrees of freedom, ultraviolet choices, leakage sectors, and excited sectors that are not part of the protected logical space.  Thus the encoding map \(V_\Gamma\) should be included explicitly among the data.  In an exactly solvable lattice model it may be the inclusion of the ground-state sector into the full lattice Hilbert space.  In a tensor-network realization it may be the isometry defined by contracting the network with open physical legs.  In a conformal-block realization it may be an experimentally engineered embedding of a fusion or block space into a device-level Hilbert space.

Second, the field-theoretic datum is allowed to be exact or approximate.  A topological fixed-point model may supply an exact commuting-projector code, whereas a physical platform may only approximate the ideal state space up to finite correlation length, finite temperature, device noise, or truncation of a continuum theory.  In the exact case, one can ask for strict Knill--Laflamme conditions on a prescribed family of errors.  In the approximate case, the same formalism should be read as a blueprint for approximate quantum error correction: local neutral insertions act nearly as scalars on the low-energy or protected sector, and nonlocal defect histories account for logical failure modes.

The decoration datum \(\cD\) is broad because it changes the local operator theory.  In a modular functor it records labelled punctures and total-charge constraints; in a boundary TQFT, boundary conditions such as condensable-algebra data; and in a defect field theory, domain walls and junctions.  These labels determine which charges may end on a boundary, which measurements are available, and which defect moves are treated as gauge equivalences.  The logical Hilbert space is therefore attached to the decorated pair \((\Sigma,\cD)\), not to \(\Sigma\) alone.

The inference structure \(\mu\) is part of the code specification.  A code and a syndrome map do not determine how one should choose among distinct errors with the same syndrome; a noise or likelihood model is still needed.  Here a topological theory may identify representatives of a topological class, whereas a conformal or geometric refinement may distinguish them by distances, cross-ratios, scaling behaviour, or moduli.  The tuple \(\mathfrak Q\) records both the protected state space and the additional structure used for decoding.

Definition \ref{def:FTcode} is broader than the hypotheses of the main theorems.  Three specializations will be used repeatedly.  In the \emph{semisimple categorical regime}, \(\cF\) is represented by a semisimple unitary fusion category, often a unitary modular tensor category or its modular functor; the logical spaces are fusion spaces; and footprints are local total-charge or fusion-channel sectors.  In the \emph{conformal-block regime}, these spaces vary in a bundle over a configuration or moduli space of marked curves.  In the \emph{state-sum regime}, the physical realization is a string-net or Turaev--Viro lattice model.  The categorical and conformal-block results below are always subject to the corresponding semisimplicity and unitarity assumptions.

\subsection{Categorical footprint sectors and projector algebras}

The theorem-level statements in this part of the paper are made in a finite semisimple unitary setting.  Thus, throughout this subsection, \(\cC\) is a unitary fusion category: a finite semisimple rigid \(C^*\)-tensor category with simple tensor unit, finitely many isomorphism classes of simple objects, and unitary associativity constraints.  Braiding or modularity is imposed only when explicitly needed later.  This convention makes the footprint construction a finite-dimensional Hilbert-space construction rather than a purely formal graphical idea.

\begin{definition}[Admissible cluster]\label{def:admissible-cluster}
Let \(X_1,\ldots,X_n\in\cC\) and fix a parenthesized tensor product, equivalently a planar fusion tree.  An \emph{admissible cluster} is a subexpression represented by an internal edge of this fusion tree.  Equivalently, it is a collection of tensor factors that has been grouped by specified associators.  If \(\cC\) is braided, more general ordered subsets may be treated as admissible after specifying the braiding and recoupling convention used to bring the chosen factors into a single tensor factor.
\end{definition}

The phrase ``cluster'' will always mean admissible cluster in the categorical theorem-level statements below.  This convention is important in a non-braided fusion category, where arbitrary non-contiguous subsets cannot be regrouped without extra planar or braiding data.

\begin{definition}[Categorical footprint sectors]\label{def:categorical-footprints}
Let \(X_1,\ldots,X_n\) be objects of a unitary fusion category \(\cC\).  Set
\[
  \cH(\vec X)=\Hom_{\cC}(\one,X_1\otimes\cdots\otimes X_n),
\]
with the inner product determined by the unitary categorical structure.  For an admissible cluster \(I\), write \(X_I\) for the corresponding subexpression and \(X_{I^c}\) for the complementary tensor factor determined by the chosen cut or recoupling convention.  The coarse categorical footprint of the cluster \(I\) is its total simple charge \(a\in\Irr(\cC)\).  A refined footprint may additionally include a measured multiplicity label, or more generally a projection in the multiplicity space \(\Hom_{\cC}(a,X_I)\), whenever that space has dimension greater than one.
\end{definition}

\begin{theorem}[Categorical footprint decomposition]\label{thm:categorical-footprint-decomp}
Let \(\cC\) be a unitary fusion category.  For every admissible bipartition \(I|I^c\) of the labelled tensor product \(X_1\otimes\cdots\otimes X_n\), there is a natural total-charge decomposition
\begin{equation}\label{eq:categorical-footprint-decomp}
  \Hom_{\cC}(\one,X_1\otimes\cdots\otimes X_n)
  \cong
  \bigoplus_{a\in\Irr(\cC)}
  \Hom_{\cC}(a,X_I)\otimes
  \Hom_{\cC}(a^*,X_{I^c}).
\end{equation}
The decomposition is canonical at the level of total-charge isotypic summands.  The displayed tensor-factor identification is unique only up to the unitary recouplings determined by the associator, together with basis choices inside multiplicity spaces.  The orthogonal projector onto the summand indexed by \(a\) is the coarse footprint projector \(P_{I,a}\).  It measures the total charge \(a\) of the admissible cluster \(I\), while forgetting multiplicity data inside \(\Hom_{\cC}(a,X_I)\) unless the measurement is refined.
\end{theorem}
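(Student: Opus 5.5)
First I would use the associator to reduce to a two-factor problem. Because \(I\) is an admissible cluster, the chosen parenthesization has an internal edge carrying \(X_I\). Composing the associators (and, in the braided case, the chosen braiding) that move \(X_I\) into one tensor factor gives a unitary isomorphism
\[
\Phi\colon \Hom_{\cC}(\one,X_1\otimes\cdots\otimes X_n)\longrightarrow \Hom_{\cC}(\one,X_I\otimes X_{I^c}).
\]
Since the constraints are unitary, post-composition with \(\Phi\) preserves the inner product \(\langle f,g\rangle=g^\dagger f\in\End(\one)\cong\bC\). So it is enough to decompose \(\Hom_{\cC}(\one,Y\otimes Z)\) with \(Y=X_I\) and \(Z=X_{I^c}\).

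Next I would apply semisimplicity. Write \(\id_Y=\sum_{a}\sum_{\alpha}\iota_{a,\alpha}\iota_{a,\alpha}^\dagger\), where \(\{\iota_{a,\alpha}\}\) is an orthonormal basis of isometries in \(\Hom(a,Y)\); such a basis exists in a unitary semisimple category. Set \(p_a=\sum_\alpha \iota_{a,\alpha}\iota_{a,\alpha}^\dagger\). This is the isotypic projection of \(Y\). It is independent of the chosen basis, because it is the unique self-adjoint idempotent whose image is the \(a\)-isotypic summand, and it is central in \(\End(Y)\).

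Define \(P_{I,a}=\Phi^{-1}\circ\bigl((p_a\otimes\id_Z)\circ -\bigr)\circ\Phi\). The \(p_a\) are mutually orthogonal self-adjoint idempotents summing to \(\id_Y\). Post-composition is compatible with adjoints under the inner product above, so the \(P_{I,a}\) are orthogonal projectors summing to the identity. This already proves the canonical isotypic decomposition, and it does so without any basis choice inside the multiplicity spaces.

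To identify each summand, I would build the map
\[
\Hom(a,Y)\otimes\Hom(\one,a\otimes Z)\to \Hom(\one,Y\otimes Z),\qquad u\otimes v\mapsto (u\otimes\id_Z)\circ v.
\]
By Schur's lemma and the orthonormality of the \(\iota_{a,\alpha}\), this is a bijection onto the image of \(p_a\otimes\id\). After rescaling by the standard factor \(\sqrt{d_a}\), which comes from the convention for the inner product on \(\Hom(a,Y)\), it becomes an isometry.

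Finally I would use Frobenius reciprocity (rigidity): \(\Hom(\one,a\otimes Z)\cong\Hom(a^*,Z)\), implemented unitarily by the normalized evaluation and coevaluation maps. This yields \eqref{eq:categorical-footprint-decomp}. If \(a^*\) does not appear in \(Z\), the corresponding summand is simply zero.

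The main obstacle is keeping track of naturality and unitarity in the normalizations. One part is checking that the factor \(\sqrt{d_a}\) and the choice of duality maps make the final identification unitary. The other part is making precise the claim in the statement that the tensor-factor form is canonical only up to associator recouplings and basis changes in the multiplicity spaces. My plan there is to show that a different parenthesization changes \(\Phi\) by a unitary, namely a composite of \(F\)-moves, which preserves \(\Phi^{-1}(p_a\otimes\id)\Phi\) because of naturality of the associator in its first argument. A different choice of basis changes the identification of each summand only by \(U(\Hom(a,Y))\otimes U(\Hom(a^*,Z))\). In the braided case, the same argument goes through with the chosen braiding included in \(\Phi\).
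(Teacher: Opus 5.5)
Your proposal is correct and follows essentially the same route as the paper. Both arguments resolve \(X_I\) semisimply into isotypic summands, substitute into \(\Hom_{\cC}(\one,X_I\otimes X_{I^c})\), use rigid duality to pass from \(\Hom_{\cC}(\one,a\otimes X_{I^c})\) to \(\Hom_{\cC}(a^*,X_{I^c})\), get orthogonality from the \(C^*\)-structure, and check that the sector projector is invariant under unitary \(F\)-moves and multiplicity-space basis changes; you are more explicit than the paper about the basis-independent idempotents \(p_a\). One small normalization point: if the \(\iota_{a,\alpha}\) are isometries, then \(u\otimes v\mapsto(u\otimes\id)\circ v\) is already isometric, so the factor \(\sqrt{d_a}\) enters only at the duality step (through \(\operatorname{coev}_a^\dagger\operatorname{coev}_a=d_a\)), not at both steps.
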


\begin{proof}
The decomposition follows from semisimple resolution of the intermediate tensor factor.  Since \(\cC\) is finite semisimple, the object \(X_I\) admits a decomposition
\[
  X_I \cong \bigoplus_{a\in\Irr(\cC)} \Hom_{\cC}(a,X_I)\otimes a,
\]
where the vector space \(\Hom_{\cC}(a,X_I)\) records multiplicity.  Substituting this into
\(\Hom_{\cC}(\one,X_I\otimes X_{I^c})\) gives
\[
  \Hom_{\cC}(\one,X_I\otimes X_{I^c})
  \cong
  \bigoplus_a
  \Hom_{\cC}(a,X_I)\otimes
  \Hom_{\cC}(\one,a\otimes X_{I^c}).
\]
Rigid duality identifies the second factor with \(\Hom_{\cC}(a^*,X_{I^c})\).  Concretely, a vector in the summand labelled by \(a\) is obtained by choosing morphisms \(f:a\to X_I\) and \(g:a^*\to X_{I^c}\), and composing
\[
  \one \xrightarrow{\operatorname{coev}_a} a\otimes a^*
  \xrightarrow{f\otimes g} X_I\otimes X_{I^c}.
\]
Every vacuum vector is a finite sum of such composites.

The unitary structure supplies the graphical inner product.  Distinct simple labels give orthogonal isotypic summands because morphisms between nonisomorphic simple objects vanish and the category is \(C^*\)-semisimple.  The orthogonal projection onto the \(a\)-summand is therefore well defined.  Changing the fusion tree only changes the displayed tensor-factor coordinates by the unitary associator, i.e. by \(F\)-moves, and changing bases in the multiplicity spaces conjugates within the same summand.  Thus the coarse total-charge sector, and hence the projector \(P_{I,a}\), is intrinsic after the admissible cluster and its planar recoupling convention have been fixed.
\end{proof}

This theorem is the categorical prototype for the paper's use of the word footprint.  A local cluster measurement selects a total-charge summand already present in the fusion space.  The terminology emphasizes that this measured charge is local boundary data associated with a latent defect history.

\begin{definition}[Diagnostic and syndrome footprint algebras]\label{def:footprint-algebra}
For a fixed admissible cluster \(I\), define
\[
  \cA_I=\operatorname{span}_{\bC}\{P_{I,a}:a\in\Irr(\cC)\}\subseteq \End(\cH(\vec X)).
\]
This is a finite-dimensional commutative \(*\)-algebra with
\[
  P_{I,a}P_{I,b}=\delta_{ab}P_{I,a},
  \qquad
  \sum_{a\in\Irr(\cC)}P_{I,a}=\id_{\cH(\vec X)}.
\]
The generally noncommutative \(*\)-algebra generated by footprint projectors for several, possibly overlapping, clusters will be called the \emph{diagnostic footprint algebra} and denoted \(\cA_{\mathrm{diag}}\).  It records all selected footprint measurements as quantum observables, whether or not they can be measured simultaneously.

A \emph{footprint measurement algebra}, in the sense used for syndrome extraction, is a finite-dimensional commutative \(*\)-subalgebra
\[
  \cA_{\mathrm{fp}}=\operatorname{span}_{\bC}\{\Pi_s:s\in S\}
  \subseteq \End(\cH(\vec X))
\]
generated by a compatible family of footprint projectors, or by chosen coarse-grainings of them.  The projectors \(\Pi_s\) are orthogonal and sum to the identity on the measured Hilbert space.  Noncommuting footprint projectors may be diagnostically meaningful, but they do not jointly define a classical syndrome algebra without specifying an ordered or adaptive measurement instrument.
\end{definition}

\begin{definition}[Syndrome-admissible footprint algebra]\label{def:syndrome-admissible}
Let \(P\) be the projection onto a code subspace \(\cH_L\subseteq \cH(\vec X)\), let \(\cE=\{E_\alpha\}\) be an error family, and let
\[
  \cA_{\mathrm{fp}}=\operatorname{span}\{\Pi_s:s\in S\}
\]
be a commuting footprint measurement algebra.  We say that \(\cA_{\mathrm{fp}}\) is \emph{syndrome-admissible} for \((P,\cE)\) if the following two conditions hold.
\begin{enumerate}[label=(\roman*)]
\item The no-error measurement does not reveal logical information.  In the sharpest case, there is a distinguished sector \(s_0\in S\) such that
\[
  \Pi_{s_0}P=P,
  \qquad
  \Pi_sP=0\quad(s\neq s_0).
\]
More generally, the no-error outcome distribution of the measurement instrument should be independent of the encoded state.
\item The chosen error representatives are footprint-resolved: for each \(E_\alpha\in\cE\), there is a sector \(s(\alpha)\in S\) such that
\[
  \Pi_t E_\alpha P=\delta_{t,s(\alpha)}E_\alpha P.
\]
\end{enumerate}
\end{definition}

\begin{remark}[Resolving general errors into measured sectors]\label{rem:error-sector-resolution}
Condition~(ii) is a condition on the chosen representatives, not a claim that every physical error is sector-homogeneous.  For an arbitrary operator \(E\),
\[
  EP=\sum_{s\in S}\Pi_sEP.
\]
Thus one may replace a general error family by its resolved family \(\{\Pi_sE_\alpha:s\in S,\alpha\}\).  Exact correctability of the resolved family implies exact correctability of its linear span by the linearity of the Knill--Laflamme conditions.  This is the operator-level reason that sector-homogeneous representatives can be used without excluding coherent superpositions of measured footprints.
\end{remark}

Three pieces of data are sometimes conflated in stabilizer notation.  First, there is a 
\emph{sector decomposition}, encoded here by the projectors \(\Pi_s\).  Second, there is an 
\emph{instrument}, meaning a physical procedure whose Kraus operators realize or approximate this measurement.  Third, there is an 
\emph{interpretation} of the outcome as a footprint of some error history.  Definition~\ref{def:syndrome-admissible} is stated at the level of the first and third pieces: it specifies which sector information is allowed to become classical syndrome data.  A device-level construction must still implement the corresponding instrument without coupling to the encoded state in an uncontrolled way.

The sharp projective condition in (i) is stronger than mere independence of the outcome probabilities from the logical state.  It says that the code subspace is contained in one sector before any error occurs.  This is the direct analogue of the stabilizer convention that the codespace is the simultaneous \(+1\)-eigenspace of the check operators.  The more general instrument language is included because topological and conformal realizations may produce noisy or coarse measurements whose outcomes have state-independent probabilities but whose post-measurement states depend on the details of the detector.  Such measurements can still be useful, but then the detector itself becomes part of the error model.

A commuting footprint algebra can fail to be syndrome-admissible.  It may instead serve as a diagnostic, tomography, or gate-measurement algebra.  This distinction is essential in nonabelian fusion settings, where even natural local charge measurements may reveal logical information.

For a projective measurement \({\Pi_s}\), we say that it is \emph{selectively nondisturbing on the code} if, for every code state \(\psi\) and every outcome \(s\) of nonzero probability, the normalized post-measurement state represents the same ray as \(\psi\).

\begin{lemma}[No-error safety for projective footprint measurements]\label{lem:no-error-safe}
Let \(\cA_{\mathrm{fp}}=\operatorname{span}\{\Pi_s:s\in S\}\) be a projective footprint measurement algebra and let \(P\) be the code projection.  In the sharp projective case, the no-error measurement is selectively nondisturbing and reveals no logical information precisely when the code lies in a single measured sector, i.e. there is an \(s_0\) such that
\[
  \Pi_{s_0}P=P,\qquad \Pi_sP=0\quad(s\neq s_0).
\]
More generally, if a measurement instrument has several possible no-error outcomes with probabilities independent of the encoded state, the instrument is additional operational data; the bare projectors do not specify it.  Anyonic interferometry provides a canonical physical setting in which topological-charge measurements are implemented by a specified instrument rather than by an abstract projector alone \cite{BondersonShtengelSlingerland2008}.
\end{lemma}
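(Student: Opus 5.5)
The plan is to prove the sharp projective equivalence in both directions, treating the "more generally" sentence as a definitional remark that needs no proof. Throughout, a projective measurement \(\{\Pi_s\}\) applied to a normalized code state \(\psi\in\cH_L\) yields outcome \(s\) with probability \(p_s(\psi)=\|\Pi_s\psi\|^2\) and post-measurement ray \([\Pi_s\psi]\). I will read ``reveals no logical information'' as saying that each \(p_s(\psi)\) is independent of \(\psi\), and ``selectively nondisturbing'' as in the definition just before the lemma.

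\emph{Sufficiency.} Suppose \(\Pi_{s_0}P=P\) and \(\Pi_sP=0\) for \(s\neq s_0\). Then for every code state \(\psi\) we have \(\Pi_{s_0}\psi=\psi\) and \(\Pi_s\psi=0\) for \(s\ne s_0\). Hence \(p_{s_0}=1\) and \(p_s=0\) otherwise, so the outcome distribution does not depend on \(\psi\). The only outcome with nonzero probability is \(s_0\), and it returns \(\psi\) itself, so the measurement is selectively nondisturbing.

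\emph{Necessity.} This is the only step with content. Assume selective nondisturbance.
\begin{enumerate}[label=(\alph*)]
\item For each \(s\), the compression \(P\Pi_sP\) is a positive operator on \(\cH_L\).
\item For any code state \(\psi\) with \(\Pi_s\psi\neq0\), nondisturbance gives \(\Pi_s\psi=c\,\psi\) for some scalar \(c\). Because \(\Pi_s\) is a projector, \(c=1\).
\item So every code vector either lies in \(\operatorname{ran}\Pi_s\) or is killed by \(\Pi_s\). Now fix an orthonormal basis \(\psi_1,\dots,\psi_k\) of \(\cH_L\) and apply (b) to each \(\psi_i\) and to each sum \(\psi_i+\psi_j\). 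This forces a single dichotomy for the whole code space: either \(\Pi_sP=P\) or \(\Pi_sP=0\).
\item Since \(\sum_s\Pi_s=\id\), and the \(\Pi_s\) are mutually orthogonal, exactly one \(s_0\) has \(\Pi_{s_0}P=P\), while \(\Pi_sP=0\) for all other \(s\).
\end{enumerate}
Step (c) is where I expect the only real obstacle. If \(\psi_i\in\operatorname{ran}\Pi_s\) but \(\psi_j\in\ker\Pi_s\), then \(\Pi_s(\psi_i+\psi_j)=\psi_i\), which is nonzero but not proportional to \(\psi_i+\psi_j\). This contradicts (b), so the mixed case cannot occur.

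The no-information hypothesis alone is weaker: it only gives \(P\Pi_sP=p_sP\) with constants \(p_s\). I will note that the equivalence as stated needs the conjunction of both properties, and that nondisturbance is what upgrades \(P\Pi_sP=p_sP\) to the sharp condition. Finally, I will remark that for a general instrument the Kraus operators, not the projectors \(\Pi_s\), determine the post-measurement states. That part of the lemma is therefore a scoping statement about which data must be specified, not a claim requiring proof.
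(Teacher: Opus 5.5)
Your proof is correct, but the necessity direction takes a different route from the paper's.

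The paper uses the two hypotheses in sequence. State-independence of \(\langle\psi,\Pi_s\psi\rangle\) on unit code vectors, together with polarization, gives \(P\Pi_sP=p_sP\) with a single constant \(p_s\). Selective nondisturbance then makes every code vector an eigenvector of \(\Pi_s\), whose eigenvalue must equal \(p_s\), so \(p_s\in\{0,1\}\). Thus the uniformity of the \(0/1\) dichotomy across the code space comes from the constancy of \(p_s\).

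You use nondisturbance alone. You get uniformity from the observation that if \(\psi_i\in\operatorname{ran}\Pi_s\) and \(\psi_j\in\ker\Pi_s\), then \(\Pi_s(\psi_i+\psi_j)=\psi_i\) is a nonzero, non-proportional image, which is a disturbance. In other words, a subspace contained in \(\operatorname{ran}\Pi_s\cup\ker\Pi_s\) must lie in one of the two.

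Your route gives a slightly sharper statement. Selective nondisturbance by itself is already equivalent to the single-sector condition, and state-independence follows rather than being assumed. Your step (a) about positivity of \(P\Pi_sP\) is in fact never used and can be dropped.

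The paper's route instead isolates what the no-information hypothesis yields by itself, namely \(P\Pi_sP=p_sP\). That is the form relevant to the lemma's closing remark about instruments with several state-independent no-error outcomes, which you also note in your final paragraph.
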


\begin{proof}
If the displayed condition holds, then for any code vector \(\psi\in P\cH\) the Born probabilities of the sharp measurement are
\[
  \|\Pi_s\psi\|^2=
  \begin{cases}
  \|\psi\|^2,&s=s_0,\\
  0,&s\neq s_0.
  \end{cases}
\]
The outcome is deterministic and the selective post-measurement state is again \(\psi\).  Hence the measurement is nondisturbing on the code and reveals no logical information.

Conversely, assume that the projective measurement is selectively nondisturbing on the code and that its outcome distribution is independent of the encoded state.  The restriction \(P\Pi_sP\) is a positive operator on the code space.  State-independence says that the quadratic form \(\langle\psi,\Pi_s\psi\rangle\) has the same value \(p_s\) on every unit vector in \(P\cH\).  By the polarization identity this forces
\[
  P\Pi_sP=p_sP.
\]
Selective nondisturbance now adds more than state-independent probabilities: whenever outcome \(s\) occurs, every code vector must be preserved up to normalization.  Hence every code vector lies in an eigenspace of the projector \(\Pi_s\).  A projector has only the eigenvalues \(0\) and \(1\), so \(p_s\in\{0,1\}\).  Orthogonality and \(\sum_s\Pi_s=1\) imply that exactly one scalar is equal to \(1\).  Thus the code lies in a single measured sector.  The last assertion is a warning about general instruments: a non-projective or noisy detector may have state-independent outcome probabilities without being encoded by one sharp sector projector, and its Kraus operators must then be included as part of the operational model.
\end{proof}

\begin{proposition}[Relations among footprint projectors]\label{prop:footprint-projector-relations}
Let \(\cC\) be a unitary fusion category and let \(\cH(\vec X)\) be a fusion space.
\begin{enumerate}[label=(\roman*)]
\item For a fixed admissible cluster \(I\), the projectors \(P_{I,a}\) are pairwise orthogonal and sum to the identity.
\item If clusters \(I\) and \(J\) are represented by disjoint vertices of a common fusion tree, the corresponding footprint algebras commute in that tree basis.  Their joint eigenspaces are the fusion paths with the specified intermediate charges.
\item If \(I\subset J\) and both clusters occur as vertices of a common fusion tree, then the projectors for \(I\) and \(J\) commute, and their products refine the total-charge decomposition according to the relevant fusion multiplicities.
\item For overlapping clusters which are not simultaneously represented in a common fusion tree, the corresponding footprint projectors need not commute.  They generate a noncommutative diagnostic footprint algebra; a classical syndrome algebra requires a compatible commuting family.
\end{enumerate}
\end{proposition}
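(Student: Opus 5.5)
The plan is to work throughout in the orthonormal fusion-tree basis of \(\cH(\vec X)\) attached to a fixed planar tree, and to read off all four statements from the form the projectors take in such a basis. By Theorem~\ref{thm:categorical-footprint-decomp}, once a tree \(T\) containing the internal edge of a cluster \(I\) is fixed, \(\cH(\vec X)\) has an orthonormal basis of labelled trees: simple labels on internal edges together with basis vectors of the vertex multiplicity spaces. Unitarity of the \(C^*\)-structure makes distinct edge labelings orthogonal, and after normalisation the basis is orthonormal. In this basis \(P_{I,a}\) is the diagonal projector onto the span of trees whose edge \(e_I\) carries the label \(a\). Statement (i) follows at once: different labels on one edge give disjoint sets of basis vectors, and every basis vector carries exactly one label on \(e_I\). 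This is the isotypic statement of Theorem~\ref{thm:categorical-footprint-decomp}, now made explicit.

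For (ii) and (iii), I will note that whenever two clusters \(I,J\) are both represented by edges \(e_I,e_J\) of one tree \(T\), both \(P_{I,a}\) and \(P_{J,b}\) are diagonal in the same orthonormal basis, so they commute. The product \(P_{I,a}P_{J,b}\) is the projector onto trees with labels \(a\) on \(e_I\) and \(b\) on \(e_J\), which identifies the joint eigenspaces as the fusion paths with these intermediate charges. The case \(I\subset J\) is handled identically. The additional claim there is that the product is nonzero only if the vertices between \(e_I\) and \(e_J\) admit nonzero fusion spaces. The dimension of the joint eigenspace is then a sum over the intermediate labels of products of \(N\)-multiplicities along the path from \(e_I\) to \(e_J\). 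This is the asserted refinement of the \(J\)-charge decomposition by the \(I\)-charge.

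One point needs care: the projector \(P_{I,a}\) must be independent of which tree containing \(e_I\) is used. Theorem~\ref{thm:categorical-footprint-decomp} already asserts this intrinsic nature once the cluster and its recoupling convention are fixed. I will add the observation that \(F\)-moves acting away from \(e_I\) are unitaries that preserve the \(e_I\)-label, so they commute with \(P_{I,a}\). Hence "commute in that tree basis" is really an operator statement.

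Statement (iv) is a "need not" assertion, so the work is to exhibit a single counterexample. I plan to use the Ising example anticipated in the introduction: four \(\sigma\) factors with \(I=\{1,2\}\) and \(J=\{2,3\}\). Passing from a tree containing \(e_I\) to one containing \(e_J\) is the \(F\)-move \(F^{\sigma\sigma\sigma}_\sigma\), which is the Hadamard matrix. Therefore \(P_{J,\one}=H P_{I,\one} H\). Since \(P_{I,\one}=\operatorname{diag}(1,0)\) and \(H\operatorname{diag}(1,0)H=\tfrac12\begin{pmatrix}1&1\\1&1\end{pmatrix}\), the two projectors fail to commute. The generated algebra is then all of \(\End(\bC^2)\), and it is noncommutative. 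The final clause, that a syndrome algebra needs a commuting family, is just Definition~\ref{def:footprint-algebra}.

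I expect the main obstacle to be this last step. Computing the Hadamard form of the relevant \(F\)-matrix requires fixing Ising gauge conventions, and that material only appears later in the paper. I would either cite the standard value or argue abstractly that a unitary \(F\)-move which is not a monomial matrix cannot preserve both diagonal projector families.
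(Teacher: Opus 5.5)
Your proposal is correct and follows essentially the same route as the paper: (i)--(iii) are read off from the diagonal form of the edge projectors in a common orthonormal fusion-tree basis, and (iv) is settled by the four-\(\sigma\) Ising example, in which the \((23)\)-projector is the Hadamard conjugate of the \((12)\)-projector and therefore does not commute with it. Your check that \(F\)-moves away from \(e_I\) preserve the \(e_I\)-label, and your explicit noncommutativity computation (which holds in any gauge because every entry of \(F^{\sigma\sigma\sigma}_\sigma\) has modulus \(1/\sqrt2\)), only spell out what the paper's proof leaves implicit.
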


\begin{proof}
For (i), Theorem \ref{thm:categorical-footprint-decomp} writes the fusion space as an orthogonal direct sum over the possible total charges of the fixed admissible cluster.  The projectors onto distinct summands are therefore orthogonal idempotents, and their sum is the identity on the whole fusion space.

For (ii) and (iii), choose the common planar fusion tree.  A basis vector, or more invariantly a joint summand, is specified by simple labels on compatible internal edges together with possible multiplicity data at vertices.  Projectors associated to those internal edges act by asking whether the corresponding edge label has a prescribed value.  Such diagonal operators commute.  If \(I\subset J\), then the label of \(I\) and the label of \(J\) are two labels in the same nested fusion path; imposing both conditions simply refines the summand according to the allowed fusion multiplicities between them.

For (iv), no common diagonalization is available in general.  Passing from one overlapping cluster to another requires an associator move, or a product of such moves, and the relevant projectors are related by conjugation by the corresponding \(F\)-matrix.  A diagonal projector and its conjugate by a nontrivial \(F\)-matrix need not commute.  The four-puncture Ising calculation below is the smallest visible instance: in the \((12)3\) channel the \((12)\)-footprint is represented by \(Z\), while the \((23)\)-footprint is represented after recoupling by \(X\).
\end{proof}

\begin{proposition}[Fusion-tree footprint algebra]\label{prop:fusion-tree-footprint-algebra}
Fix a planar fusion tree \(T\) for \(X_1\otimes\cdots\otimes X_n\).  Let \(E(T)\) denote its internal edges, and let \(P_{e,a}\) be the total-charge footprint projector associated to the admissible cluster cut by the edge \(e\).  The algebra
\[
  \cA_T:=\operatorname{alg}\{P_{e,a}:e\in E(T),\ a\in\Irr(\cC)\}
\]
generated by these projectors is a finite-dimensional commutative \(*\)-algebra.  Its joint spectral projectors are indexed by admissible labellings \(\gamma:E(T)\to\Irr(\cC)\) satisfying the fusion rules at every vertex.  Equivalently,
\[
  \cH(\vec X)=\bigoplus_{\gamma\in\operatorname{Path}(T;\vec X)} \cH_\gamma,
\]
where \(\operatorname{Path}(T;\vec X)\) is the Bratteli set of fusion paths compatible with the external labels.  The minimal central projectors of \(\cA_T\) are the projections onto these joint path sectors.  If the relevant fusion multiplicities are all zero or one, then the summands \(\cH_\gamma\) are one-dimensional and \(\cA_T\) is a maximal diagonal algebra in this fusion-tree basis.
\end{proposition}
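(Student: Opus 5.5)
The plan is to iterate the semisimple resolution of Theorem~\ref{thm:categorical-footprint-decomp} along the fixed tree \(T\), producing an explicit orthogonal decomposition of \(\cH(\vec X)\) indexed by edge labellings. Then identify each \(P_{e,a}\) as the sum of the path-sector projectors whose labelling has \(\gamma(e)=a\).

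\textbf{Step 1: build the decomposition.} Proceed by induction on the number of internal edges, reading the tree from the leaves toward the root. At each internal vertex with incoming edges carrying simples \(b,c\) and outgoing edge \(e\), decompose
\[
b\otimes c\cong\bigoplus_a \Hom_{\cC}(a,b\otimes c)\otimes a .
\]
Substituting recursively gives
\[
\cH(\vec X)\cong\bigoplus_{\gamma}\ \bigotimes_{v}\Hom_{\cC}(\gamma(\mathrm{out}_v),\gamma(\mathrm{in}_v)\otimes\gamma(\mathrm{in}'_v)),
\]
with the root vertex contributing the vacuum constraint. Only labellings with all vertex spaces nonzero survive, and these are exactly the admissible paths \(\operatorname{Path}(T;\vec X)\). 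Call the summand \(\cH_\gamma\). Orthogonality of distinct summands follows as in Theorem~\ref{thm:categorical-footprint-decomp}: at the first edge where two labellings differ, the inner product factors through \(\Hom\) between nonisomorphic simples, which vanishes. This uses the \(C^*\)-structure and the fact that the graphical inner product is computed by composing trees.

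\textbf{Step 2: identify the projectors.} For each edge \(e\), the cluster cut by \(e\) is admissible. Applying Theorem~\ref{thm:categorical-footprint-decomp} in the same tree coordinates, the \(a\)-isotypic summand for that cluster is \(\bigoplus_{\gamma(e)=a}\cH_\gamma\), because the resolution of \(X_I\) used there is refined by the resolution along the subtree below \(e\). Hence
\[
P_{e,a}=\sum_{\gamma:\gamma(e)=a}Q_\gamma,
\]
where \(Q_\gamma\) denotes the orthogonal projector onto \(\cH_\gamma\). The \(P_{e,a}\) are therefore self-adjoint and mutually diagonal, so \(\cA_T\) is a commutative \(*\)-algebra and is finite-dimensional.

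Conversely, each path projector can be written as
\[
Q_\gamma=\prod_{e\in E(T)}P_{e,\gamma(e)} .
\]
So \(\cA_T=\operatorname{span}\{Q_\gamma\}\). The \(Q_\gamma\) are nonzero orthogonal idempotents summing to the identity, so they are the minimal (central, since \(\cA_T\) is commutative) projectors. This gives the joint spectral decomposition, and recovers Proposition~\ref{prop:footprint-projector-relations}(ii)--(iii) as special cases.

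\textbf{Step 3: the multiplicity-free case.} If every fusion multiplicity is \(0\) or \(1\), each vertex factor of \(\cH_\gamma\) is at most one-dimensional, so \(\dim\cH_\gamma=1\). Then \(\cA_T\) consists of all operators that are diagonal in the fusion-tree basis \(\{v_\gamma\}\). This algebra has dimension \(\dim\cH(\vec X)\), the largest possible for a commutative \(*\)-subalgebra, and is therefore maximal abelian.

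\textbf{Main obstacle.} The delicate point is Step 2: checking that the coarse projector \(P_{e,a}\), which Theorem~\ref{thm:categorical-footprint-decomp} defines via a bipartition, agrees with the path-sector sum in the \emph{same} tree coordinates. This requires the recoupling convention for the cluster at \(e\) to be the one induced by \(T\) itself, so that no \(F\)-move intervenes. I would make this explicit by choosing the bipartition coordinates of Theorem~\ref{thm:categorical-footprint-decomp} to be those read off from \(T\), cutting at \(e\).
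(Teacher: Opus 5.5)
Your proposal is correct and is essentially the paper's argument: both rest on the fusion-path decomposition of \(\cH(\vec X)\) along \(T\), identify the joint projectors as \(Q_\gamma=\prod_{e}P_{e,\gamma(e)}\), and obtain maximality from rank-one sectors in the multiplicity-free case. The only difference is the order of presentation. You construct the path sectors first and derive \(P_{e,a}=\sum_{\gamma(e)=a}Q_\gamma\), which makes explicit both the tree-induced recoupling convention and the commutativity. The paper instead reads commutativity off as ``diagonal in the same fusion-tree decomposition'' and then forms the products of the edge projectors.
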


\begin{proof}
A fixed planar fusion tree is precisely a compatible collection of cuts.  Label its internal edges by simple objects.  For each edge \(e\), the projectors \(P_{e,a}\) ask for the value of one of these labels, so they are all diagonal in the same fusion-tree decomposition and commute.

Given a function \(\gamma:E(T)\to\Irr(\cC)\), form the product
\[
  Q_\gamma=\prod_{e\in E(T)}P_{e,\gamma(e)}.
\]
Because the factors commute, \(Q_\gamma\) is again a projection.  It is zero exactly when the labels assigned by \(\gamma\) violate a fusion rule at some vertex of the tree.  If all vertex constraints are satisfied, \(Q_\gamma\) projects onto the tensor product of the corresponding vertex multiplicity spaces.  The nonzero \(Q_\gamma\)'s are pairwise orthogonal and sum to the identity, giving the Bratteli, or fusion-path, decomposition
\[
  \cH(\vec X)=\bigoplus_{\gamma\in\operatorname{Path}(T;\vec X)}\cH_\gamma.
\]
The algebra generated by the edge projectors is therefore the algebra of functions on this finite set of joint sectors, with values constant on unresolved multiplicity spaces.  When all fusion multiplicities are zero or one, those unresolved factors are one-dimensional, so the joint projectors are rank one in the chosen fusion-tree basis and the algebra is maximal diagonal there.
\end{proof}

This proposition clarifies the relation with stabilizer syndrome algebras.  In an abelian stabilizer-type situation, the measured check algebra is commuting from the outset.  In a nonabelian fusion setting, a chosen fusion tree supplies a commuting classical footprint algebra, while changing to an overlapping cluster generally conjugates this algebra by nontrivial recoupling matrices and produces noncommuting diagnostics.

\begin{remark}[Syndromes versus diagnostics]\label{rem:commuting-noncommuting-footprints}
Unlike stabilizer checks in an ordinary stabilizer code, arbitrary categorical footprint projectors do not form a commuting syndrome algebra.  A QEC syndrome-extraction protocol must specify a commuting, syndrome-admissible footprint measurement algebra \(\cA_{\mathrm{fp}}\), or else an ordered/adaptive measurement instrument whose disturbance is part of the protocol.  Noncommuting footprint measurements are still meaningful: they may be alternative diagnostics, tomography measurements, logical operations, or gate primitives.  A measured syndrome therefore records a chosen compatible measurement protocol; incompatible local footprint projectors do not possess simultaneous measured values.
\end{remark}

\begin{theorem}[Footprint-algebra Knill--Laflamme theorem]\label{thm:footprint-algebra-KL}
Let \(P\) be the projection onto a finite-dimensional code subspace \(\cH_L\subseteq\cH(\vec X)\), and let \(\cA_{\mathrm{fp}}\subseteq\End(\cH(\vec X))\) be a syndrome-admissible footprint algebra for a finite error family \(\cE=\{E_\alpha\}\), with orthogonal sector projectors \(\{\Pi_s\}_{s\in S}\).  Then exact recovery conditioned on the measurement of \(\cA_{\mathrm{fp}}\) exists if and only if, for every measured sector \(s\) and all \(E_\alpha,E_\beta\) with \(s(\alpha)=s(\beta)=s\), there are scalars \(\lambda^{(s)}_{\alpha\beta}\) such that
\begin{equation}\label{eq:footprint-algebra-KL}
  P E_\alpha^\dagger E_\beta P
  =\lambda^{(s)}_{\alpha\beta}P.
\end{equation}
Equivalently, after conditioning on the measured footprint sector, every indistinguishable neutral composite ambiguity must act as a scalar on the code.
\end{theorem}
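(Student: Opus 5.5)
The plan is to reduce the statement to the standard Knill--Laflamme theorem, applied sector by sector. Syndrome-admissibility condition~(ii) gives \(\Pi_tE_\alpha P=\delta_{t,s(\alpha)}E_\alpha P\). For \(\alpha,\beta\) in different sectors this yields \(PE_\alpha^\dagger E_\beta P = PE_\alpha^\dagger\Pi_{s(\alpha)}\Pi_{s(\beta)}E_\beta P=0\). So the cross-sector Knill--Laflamme entries vanish automatically, and only the fibrewise equations \eqref{eq:footprint-algebra-KL} carry content. We write \(\cE_s=\{E_\alpha: s(\alpha)=s\}\) for the footprint fibre over \(s\).

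For sufficiency, assume \eqref{eq:footprint-algebra-KL} in each sector. The matrix \(\lambda^{(s)}\) is Hermitian and positive semidefinite, since \(\sum \bar c_\alpha c_\beta\lambda^{(s)}_{\alpha\beta}P = P F^\dagger F P\ge 0\) for \(F=\sum c_\beta E_\beta\). Diagonalize it by a unitary to obtain \(F_k=\sum_\beta u_{\beta k}E_\beta\) with \(PF_k^\dagger F_lP=d_k\delta_{kl}P\). For \(d_k>0\), the operators \(F_kP/\sqrt{d_k}\) are partial isometries with mutually orthogonal ranges inside \(\Pi_s\cH\), because \(\Pi_sF_kP=F_kP\) by condition~(ii). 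Define the conditional recovery \(R_s\) with Kraus operators \(d_k^{-1/2}PF_k^\dagger Q_{s,k}\), where \(Q_{s,k}\) is the projector onto \(\operatorname{ran}(F_kP)\). We complete it on \(\Pi_s\cH\ominus\bigoplus_k\operatorname{ran}(F_kP)\) by any trace-preserving map into the code. The full protocol is then: measure \(\{\Pi_s\}\), and on outcome \(s\) apply \(R_s\). Checking that this recovers \(\rho\) for every \(\rho=P\rho P\) is the usual computation. The errors \(E_\beta\) with \(s(\beta)\neq s\) never contribute on outcome \(s\), by condition~(ii). Condition~(i) ensures that the no-error branch is consistent: if a trivial error is included, it lies in the sector \(s_0\).

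For necessity, suppose there are a measurement of \(\cA_{\mathrm{fp}}\) and conditional recoveries \(R_s\) with Kraus operators \(R_{s,k}\) such that the total channel \(\sum_s R_s(\Pi_s\,\cdot\,\Pi_s)\) corrects \(\cE\). By condition~(ii), \(\Pi_s\) acts as the identity on \(E_\beta P\) for \(\beta\in\cE_s\). Exactness then forces \(R_{s,k}E_\beta P = c^{(s)}_{k\beta}P\) for \(\beta\in\cE_s\). This is the step I expect to be the main obstacle. The route I will take is the standard one: the composite channel \(\rho\mapsto \sum_{k}R_{s,k}E_\beta\rho E_\beta^\dagger R_{s,k}^\dagger\), summed over the relevant branches, must equal \(\rho\) on the code. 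A channel acting as the identity on all of \(\cB(\cH_L)\) has every Kraus operator proportional to \(P\) on the code, so each \(R_{s,k}E_\beta P\) is a scalar multiple of \(P\). One subtlety is that the Knill--Laflamme condition is required for a linear family, not just for the individual \(E_\beta\). To handle this I apply the argument to the error channel with Kraus set \(\cE_s\), invoking Remark~\ref{rem:error-sector-resolution} to justify working with the resolved representatives. Then
\[
PE_\alpha^\dagger E_\beta P = \sum_k PE_\alpha^\dagger R_{s,k}^\dagger R_{s,k}E_\beta P = \Bigl(\sum_k \overline{c^{(s)}_{k\alpha}}c^{(s)}_{k\beta}\Bigr)P,
\]
where I use \(\sum_k R_{s,k}^\dagger R_{s,k}=\Pi_s\) on the measured sector together with \(\Pi_sE_\beta P=E_\beta P\). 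This gives \eqref{eq:footprint-algebra-KL}.

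Finally, the equivalent formulation at the end of the statement follows by interpreting the composite \(E_\alpha^\dagger E_\beta\), for \(\alpha,\beta\) in the same fibre, as footprint-neutral. It is unresolved by \(\cA_{\mathrm{fp}}\), and the fibrewise condition says precisely that every such composite is scalar on \(P\). Cross-fibre composites need no condition, because the measurement separates them.
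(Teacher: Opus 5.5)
Your proposal is correct and follows the paper's route. Condition~(ii) kills the cross-sector compressions, each fibre is then handled by the ordinary Knill--Laflamme theorem with the conditional recoveries assembled over the orthogonal sectors, and necessity comes from restricting the measure-then-recover protocol to a fixed outcome; you simply write out both halves of the standard argument inline (the diagonalization construction is recorded separately in the paper as Proposition~\ref{prop:conditional-recovery-fibre}, and for the Kraus-operator necessity step the paper just cites the ordinary theorem).
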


\begin{proof}
Let \(\Pi_s\) be the orthogonal sector projectors of the syndrome-admissible footprint algebra.  By definition, the no-error sector carries no logical information, and each chosen representative has a definite measured footprint:
\[
  \Pi_t E_\alpha P=\delta_{t,s(\alpha)}E_\alpha P.
\]
If \(s(\alpha)\neq s(\beta)\), then orthogonality of the measured sectors gives
\[
  P E_\alpha^\dagger E_\beta P
  =P E_\alpha^\dagger\Pi_{s(\alpha)}\Pi_{s(\beta)}E_\beta P=0.
\]
Thus cross-sector ambiguities have already been converted into classical syndrome information; a recovery need not preserve coherence between distinct measured outcomes.

Fix a sector \(s\).  On the subfamily \(\{E_\alpha:s(\alpha)=s\}\), the problem is exactly the usual finite-dimensional error-correction problem, but conditioned on the observed sector.  The ordinary Knill--Laflamme theorem says that a recovery acting after this conditioning exists precisely when
\[
  P E_\alpha^\dagger E_\beta P=\lambda^{(s)}_{\alpha\beta}P
\]
for all representatives in that sector.  If the equations hold in every sector, the corresponding conditional recovery maps assemble over the mutually orthogonal syndrome sectors to give a recovery for the whole syndrome-resolved family.  Conversely, any recovery conditioned on the footprint measurement restricts to an exact recovery for each fixed sector, so the same scalar equations are necessary in every fibre.
\end{proof}

\begin{remark}[Equivalence with ordinary Knill--Laflamme after sector resolution]\label{rem:fibrewise-KL-equivalence}
Under syndrome-admissibility, Theorem~\ref{thm:footprint-algebra-KL} is equivalent to the ordinary Knill--Laflamme criterion applied to the sector-resolved error family.  Indeed, if \(s(\alpha)\neq s(\beta)\), then the cross-sector compression vanishes automatically, so the corresponding Knill--Laflamme scalar is zero; within a fixed sector, the equations are exactly the usual scalar equations.  No stronger exact-correction criterion is claimed.  The additional structure is the measure-then-recover factorization: cross-sector coherence has become classical outcome data, while the remaining quantum ambiguity is isolated inside each measured fibre.
\end{remark}

The preceding proof is compressed to emphasize its relation with the ordinary Knill--Laflamme theorem.  We record the corresponding recovery picture explicitly, since it is the bridge between the algebraic theorem and the field-theoretic language of neutral composites.

\begin{proposition}[Conditional recovery inside a footprint fibre]\label{prop:conditional-recovery-fibre}
Fix a measured sector \(s\) and write \(\cE_s=\{E_\alpha:s(\alpha)=s\}\).  Suppose the fibrewise scalar equations
\[
  P E_\alpha^\dagger E_\beta P=\lambda^{(s)}_{\alpha\beta}P
  \qquad(E_\alpha,E_\beta\in\cE_s)
\]
hold.  Then, after the outcome \(s\) has been observed, the restricted noise channel on the code is correctable by a recovery depending only on \(s\).  More precisely, after diagonalizing the positive matrix \(\Lambda^{(s)}=(\lambda^{(s)}_{\alpha\beta})\), the corresponding linear combinations of error operators have mutually orthogonal images of the code and are corrected by partial isometries back to \(\cH_L\).
\end{proposition}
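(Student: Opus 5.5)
The plan is to run the standard constructive half of the Knill--Laflamme argument inside the fixed fibre \(\cE_s\), and then to check that the sector projector \(\Pi_s\) does not interfere with it.

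\emph{Step 1: positivity of \(\Lambda^{(s)}\).} First I would verify that \(\Lambda^{(s)}\) is Hermitian and positive semidefinite. Taking adjoints of \(P E_\alpha^\dagger E_\beta P=\lambda^{(s)}_{\alpha\beta}P\) gives \(\overline{\lambda^{(s)}_{\alpha\beta}}=\lambda^{(s)}_{\beta\alpha}\). For positivity, fix a unit code vector \(\psi\) and any coefficients \(c_\alpha\). Then
\[
\sum_{\alpha,\beta}\overline{c_\alpha}c_\beta\lambda^{(s)}_{\alpha\beta}=\Bigl\|\sum_\beta c_\beta E_\beta\psi\Bigr\|^2\ge 0 .
\]

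\emph{Step 2: orthogonal images.} Choose a unitary \(u\) with \(u^\dagger\Lambda^{(s)}u=\operatorname{diag}(d_k)\), where \(d_k\ge 0\), and set \(F_k=\sum_\alpha u_{\alpha k}E_\alpha\). Substituting into the fibrewise equations gives
\[
PF_k^\dagger F_lP=d_k\delta_{kl}P .
\]
When \(d_k=0\), this says \(F_kP=0\), so such \(F_k\) can be discarded. When \(d_k>0\), polar decomposition gives \(F_kP=\sqrt{d_k}\,U_kP\) with \(U_kP\) an isometry on \(\cH_L\). The ranges \(R_k=U_k\cH_L\) are mutually orthogonal because \(PF_k^\dagger F_lP=0\) for \(k\neq l\). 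Since every \(E_\alpha\) with \(s(\alpha)=s\) satisfies \(\Pi_sE_\alpha P=E_\alpha P\), each \(F_kP\) lands in \(\Pi_s\cH\), and hence \(R_k\subseteq\Pi_s\cH\).

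\emph{Step 3: the recovery.} Let \(Q_k\) be the projection onto \(R_k\). The conditional recovery on the outcome-\(s\) branch is
\[
\cR_s(\rho)=\sum_k U_k^\dagger Q_k\,\rho\,Q_kU_k+\text{(any completion on }\Pi_s\cH\ominus\textstyle\bigoplus_k R_k).
\]
Here the completion is, for instance, a fixed map into \(\cH_L\), chosen so that \(\cR_s\) is trace preserving on \(\Pi_s\cH\). This map depends only on \(s\).

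\emph{Step 4: correctness.} Let \(\rho\) be a code state. After outcome \(s\), the restricted noise channel is \(\rho\mapsto\sum_\alpha E_\alpha\rho E_\alpha^\dagger\) over \(\cE_s\), up to normalization. The unitary change of Kraus basis from \(E_\alpha\) to \(F_k\) leaves this channel unchanged, so it equals \(\sum_k d_kU_kP\rho PU_k^\dagger\). Applying \(\cR_s\) and using \(Q_jU_kP=\delta_{jk}U_kP\) returns \(\bigl(\sum_k d_k\bigr)\rho\). After normalization this is \(\rho\).

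The main obstacle is expected to be bookkeeping rather than any single hard estimate. One issue is justifying that the post-measurement restricted channel is exactly the sum over \(\cE_s\). This uses \(\Pi_tE_\alpha P=\delta_{t,s(\alpha)}E_\alpha P\) from Definition~\ref{def:syndrome-admissible}, and Remark~\ref{rem:error-sector-resolution} handles general errors. The other issue is handling the degenerate eigenvalues \(d_k=0\) and the completion of \(\cR_s\) to a trace-preserving channel.
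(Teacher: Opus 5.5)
Your proposal is correct and follows the paper's own argument: positivity of \(\Lambda^{(s)}\), a unitary change of error basis to operators \(F_k\) with \(PF_k^\dagger F_lP=d_k\delta_{kl}P\), discarding the \(d_k=0\) combinations because they vanish on the code, and recovering by projecting onto the mutually orthogonal images and then applying the adjoint partial isometries. The extra bookkeeping you supply fills in details the paper leaves implicit, and your index convention \(u^\dagger\Lambda^{(s)}u\) with \(F_k=\sum_\alpha u_{\alpha k}E_\alpha\) is exactly the one that makes the diagonalization come out correctly; that bookkeeping is Hermiticity, the images lying in \(\Pi_s\cH\), invariance of the channel under the change of Kraus basis, and the trace-preserving completion.
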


\begin{proof}
The matrix \(\Lambda^{(s)}\) is positive semidefinite because for any coefficients \(c_\alpha\) one has
\[
  P\Big(\sum_\alpha c_\alpha E_\alpha\Big)^\dagger
      \Big(\sum_\beta c_\beta E_\beta\Big)P
  =\Big(\sum_{\alpha,\beta}\overline{c_\alpha}c_\beta\lambda^{(s)}_{\alpha\beta}\Big)P,
\]
and the scalar multiplying \(P\) is nonnegative by positivity of the Hilbert-space inner product.  Choose a unitary matrix \(U\) diagonalizing \(\Lambda^{(s)}\), with eigenvalues \(d_j\ge 0\), and set
\[
  F_j=\sum_\alpha U_{j\alpha}E_\alpha .
\]
Then
\[
  P F_i^\dagger F_j P=d_j\delta_{ij}P.
\]
For \(d_j>0\), the operator \(d_j^{-1/2}F_jP\) is an isometry from \(\cH_L\) onto its image.  These images are mutually orthogonal.  A recovery for the sector \(s\) first projects onto the orthogonal sum of these images and then applies the adjoint partial isometry on each summand.  Operators with \(d_j=0\) vanish on the code and do not require correction.  This is the standard constructive proof of Knill--Laflamme, applied after the classical footprint sector has already been measured.
\end{proof}

The phrase ``inside a footprint fibre'' has a literal operator meaning.  The footprint measurement converts the cross-sector part of the error-correction problem into classical information, leaving a family of indistinguishable representatives within one sector.  Categorically, the matrix \(\Lambda^{(s)}\) is the Gram matrix obtained by closing pairs of compatible histories against the code.  In the contractible vacuum case, these closures are neutral diagrams and the matrix entries are their scalar evaluations.

The next proposition uses Proposition~\ref{prop:local-scalar}; its proof is independent of the results of the present section.

\begin{proposition}[Contractible-vacuum sufficient criterion]\label{prop:categorical-scalarity-correctability}
In the setting of Theorem \ref{thm:footprint-algebra-KL}, suppose that for every measured footprint sector \(s\) and every pair \(E_\alpha,E_\beta\) in that sector, the compressed composite \(P E_\alpha^\dagger E_\beta P\) is represented on the code by insertion of a defect-network composite contained in a contractible disk \(D\), disjoint from labelled punctures and unresolved multiplicity spaces, whose boundary carries total charge \(\one\).  Then
\[
  P E_\alpha^\dagger E_\beta P
  =\lambda^{(s)}_{\alpha\beta}P
\]
for some scalar \(\lambda^{(s)}_{\alpha\beta}\in\bC\).  Hence the error family satisfies the fibrewise Knill--Laflamme condition and is exactly correctable by a recovery conditioned on the measured footprint.
\end{proposition}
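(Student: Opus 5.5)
The plan is to reduce the statement to the local scalar principle of Proposition~\ref{prop:local-scalar}, and then invoke Theorem~\ref{thm:footprint-algebra-KL}. Fix a measured sector \(s\) and a pair \(E_\alpha,E_\beta\) with \(s(\alpha)=s(\beta)=s\). By hypothesis there is a defect-network composite \(N_{\alpha\beta}\) in a contractible disk \(D\), disjoint from the labelled punctures and from unresolved multiplicity spaces, with boundary charge \(\one\). Insertion of \(N_{\alpha\beta}\) represents \(PE_\alpha^\dagger E_\beta P\) on \(\cH_L\). The first step is to regard \(N_{\alpha\beta}\) as a morphism on the disk boundary. Since the boundary total charge is \(\one\), the composite factors through \(\End_\cC(\one)\), so it is encoded by a single element \(\langle N_{\alpha\beta}\rangle\in\End_\cC(\one)\cong\bC\). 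Here I would use that \(\cC\) has simple unit.

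The second step is to show that inserting \(N_{\alpha\beta}\) into any code vector \(\psi\in\cH_L\) multiplies \(\psi\) by \(\langle N_{\alpha\beta}\rangle\). The vector \(\psi\) is a morphism \(\one\to X_1\otimes\cdots\otimes X_n\), and \(D\) avoids the punctures. Using isotopy invariance of the graphical calculus, together with the unit constraint, the disk may therefore be isolated as a tensor factor \(\one\) adjacent to \(\psi\). There the closed neutral diagram evaluates to its scalar, independently of \(\psi\). This is exactly the content of Proposition~\ref{prop:local-scalar}, which I would cite rather than reprove.

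Setting \(\lambda^{(s)}_{\alpha\beta}:=\langle N_{\alpha\beta}\rangle\) then gives \(PE_\alpha^\dagger E_\beta P=\lambda^{(s)}_{\alpha\beta}P\). Two checks remain. First, hermiticity \(\overline{\lambda^{(s)}_{\alpha\beta}}=\lambda^{(s)}_{\beta\alpha}\) follows by taking adjoints, since the dagger of a scalar multiple of \(P\) is the conjugate multiple. Second, the cross-sector entries vanish automatically, as in the proof of Theorem~\ref{thm:footprint-algebra-KL}. The fibrewise condition then holds, and correctability follows from Theorem~\ref{thm:footprint-algebra-KL}, with the explicit recovery given by Proposition~\ref{prop:conditional-recovery-fibre}.

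The main obstacle is the second step: making precise that the compressed operator really equals the diagrammatic insertion and that it acts uniformly on all of \(\cH_L\). The disjointness hypotheses are what prevent the disk from entangling with unresolved multiplicity data, where a neutral composite could otherwise act as a nonscalar endomorphism of a multiplicity space. I would first try to phrase this as naturality: insertion in \(D\) commutes with every morphism supported outside \(D\), and hence acts via \(\End(\one)\).
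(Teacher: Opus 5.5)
Your proposal is correct and follows essentially the same route as the paper: apply Proposition~\ref{prop:local-scalar} to the neutral disk insertion to get a scalar in \(\End_{\cC}(\one)\cong\bC\) acting uniformly on every code vector, which gives \(PE_\alpha^\dagger E_\beta P=\lambda^{(s)}_{\alpha\beta}P\), and then invoke Theorem~\ref{thm:footprint-algebra-KL}. Your extra checks (hermiticity of \(\lambda^{(s)}\), vanishing of cross-sector entries, and the explicit recovery from Proposition~\ref{prop:conditional-recovery-fibre}) are harmless additions, and the ``main obstacle'' you flag is not something you need to prove, because the hypothesis already states that the compressed composite is represented on the code by the disk insertion.
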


\begin{proof}
Fix \(s\) and a pair \(E_\alpha,E_\beta\) in the corresponding footprint fibre.  By hypothesis, their residual composite is represented on the code by a network contained in a contractible disk \(D\) with vacuum total charge on \(\partial D\), and \(D\) is disjoint from punctures and unresolved multiplicity data.  Proposition \ref{prop:local-scalar} applies to this local insertion.  Its evaluation lies in
\[
  \End_{\cC}(\one)\cong\bC,
\]
so gluing the disk back into the ambient surface multiplies every code vector by a scalar \(\lambda^{(s)}_{\alpha\beta}\).  Therefore
\[
  P E_\alpha^\dagger E_\beta P
  =\lambda^{(s)}_{\alpha\beta}P.
\]
Theorem \ref{thm:footprint-algebra-KL} then gives exact recovery conditioned on the measured sector.
\end{proof}

\begin{remark}[Why no unconditional converse is claimed]\label{rem:no-scalarity-converse}
The converse to Proposition \ref{prop:categorical-scalarity-correctability} is false without an additional faithfulness hypothesis.  A residual operator may be non-scalar in the ambient field-theoretic or physical operator algebra and nevertheless compress to a scalar on the chosen code:
\[
  X\notin\bC I
  \qquad\text{but}\qquad
  PXP=\lambda P.
\]
Thus failure to reduce categorically to an element of \(\End_{\cC}(\one)\) is a warning that extra boundary, multiplicity, or topological data remain; it is not by itself a proof of QEC failure.  A converse becomes available only after one assumes, for example, that the relevant residual operator algebra preserves the code and acts faithfully modulo scalars on the protected subspace.  Under such an assumption a non-scalar residual class cannot disappear under compression.
\end{remark}

The contractible-vacuum hypothesis in Proposition \ref{prop:categorical-scalarity-correctability} is essential for that sufficient argument.  If the composite acts near labelled punctures, nontrivial boundary sectors, or unresolved multiplicity spaces, the relevant endomorphism algebra may be larger than \(\End_{\cC}(\one)\).  One may then refine the footprint algebra to measure the additional sector, treat it as gauge or leakage data, include it deliberately in the protected logical subsystem, or verify the compressed Knill--Laflamme equations directly.  Remark \ref{rem:no-scalarity-converse} explains why a non-scalar ambient representative is not automatically an obstruction.

\begin{corollary}[Contractible-vacuum categorical footprint correctability]\label{cor:clean-categorical-footprint-correctability}
Let \(\cA_{\mathrm{fp}}\) be a syndrome-admissible footprint algebra for a fusion-space code and an error family represented by contractible local defect networks.  Suppose that, inside each measured footprint fibre, every indistinguishable composite closes to a contractible neutral vacuum diagram.  Then the error family is exactly correctable by a recovery conditioned on \(\cA_{\mathrm{fp}}\).  Within the stated hypotheses, the contractible-vacuum evaluation supplies the required scalar equations.  Outside those hypotheses, noncontractible Wilson lines and unresolved boundary or multiplicity operators are possible sources of non-scalar compressed action and must be checked separately.
\end{corollary}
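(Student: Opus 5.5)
The corollary should follow by feeding the hypotheses into Proposition~\ref{prop:categorical-scalarity-correctability} and then into Theorem~\ref{thm:footprint-algebra-KL}. First, since \(\cA_{\mathrm{fp}}\) is assumed syndrome-admissible for \((P,\cE)\), condition (ii) of Definition~\ref{def:syndrome-admissible} attaches to each representative \(E_\alpha\) a measured sector \(s(\alpha)\). This partitions \(\cE\) into footprint fibres \(\cE_s=\{E_\alpha:s(\alpha)=s\}\). The cross-sector compressions \(PE_\alpha^\dagger E_\beta P\) with \(s(\alpha)\neq s(\beta)\) vanish by orthogonality of the \(\Pi_s\), exactly as in the first step of the proof of Theorem~\ref{thm:footprint-algebra-KL}. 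So only intra-fibre pairs need attention.

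Next, fix \(s\) and \(E_\alpha,E_\beta\in\cE_s\). The hypothesis says that the indistinguishable composite \(E_\alpha^\dagger E_\beta\), compressed to the code, is represented by a defect network in a contractible disk \(D\) with vacuum total charge on \(\partial D\). I would make explicit that ``closes to a contractible neutral vacuum diagram'' is meant to include the side conditions in Proposition~\ref{prop:categorical-scalarity-correctability}: \(D\) avoids labelled punctures and unresolved multiplicity spaces. The last sentence of the corollary, which excludes noncontractible Wilson lines and boundary or multiplicity operators, signals that this is the intended reading. With these conditions in place, Proposition~\ref{prop:categorical-scalarity-correctability}, via Proposition~\ref{prop:local-scalar}, evaluates the disk in \(\End_{\cC}(\one)\cong\bC\). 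Gluing the disk back gives
\[
PE_\alpha^\dagger E_\beta P=\lambda^{(s)}_{\alpha\beta}P .
\]

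Finally, these fibrewise scalar equations are exactly the hypothesis of Theorem~\ref{thm:footprint-algebra-KL}, which yields a recovery conditioned on the measurement of \(\cA_{\mathrm{fp}}\). Proposition~\ref{prop:conditional-recovery-fibre} makes that recovery explicit: in each fibre, diagonalize the Gram matrix \(\Lambda^{(s)}\) and apply partial isometries, then assemble over the orthogonal sectors. The remaining sentences of the corollary are interpretive and need no proof beyond pointing to Remark~\ref{rem:no-scalarity-converse} and the discussion after it.

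The only genuine obstacle is the identification step. I must confirm that the operator \(PE_\alpha^\dagger E_\beta P\) really equals the action of the closed diagram on the code. This requires that the compression by \(P\) be compatible with the diagrammatic gluing, so that the disk insertion commutes with the code projector and acts on each code vector by the same scalar. I would secure this by reading ``represented on the code'' literally, as in Proposition~\ref{prop:categorical-scalarity-correctability}, rather than attempting an independent derivation.
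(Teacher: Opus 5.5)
Your proposal is correct and follows essentially the same route as the paper. Proposition~\ref{prop:categorical-scalarity-correctability} (via Proposition~\ref{prop:local-scalar}) supplies the fibrewise scalar equations, these are fed into Theorem~\ref{thm:footprint-algebra-KL}, and the closing sentences are treated as interpretive; your explicit reading that the vacuum diagram avoids punctures and unresolved multiplicity spaces is exactly what the paper's proof assumes implicitly when it invokes that proposition.
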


\begin{proof}
By hypothesis, every indistinguishable composite inside a measured footprint fibre is represented by a contractible neutral vacuum diagram.  Proposition \ref{prop:categorical-scalarity-correctability} evaluates each such diagram as a scalar element of \(\End_{\cC}(\one)\).  Hence, for every pair of error representatives in the same measured fibre, the composite satisfies
\[
  P E_\alpha^\dagger E_\beta P=\lambda_{\alpha\beta}^{(s)}P.
\]
The fibrewise Knill--Laflamme equations of Theorem \ref{thm:footprint-algebra-KL} are therefore satisfied, and an exact recovery conditioned on the footprint measurement exists.

If a composite in a measured fibre is instead a noncontractible Wilson line, an operator on an unresolved multiplicity space, or a boundary-sector operator, then it is not forced by the vacuum evaluation to be scalar.  In that case the contractible-vacuum proof no longer applies.  Exact correction may still be possible after refining the syndrome algebra, changing the code subsystem, or treating the extra sector as gauge or leakage data, but it is not a consequence of contractible-vacuum scalarity alone.
\end{proof}

\begin{definition}[Footprint, field-theoretic syndrome, and decoder]\label{def:decoder}
Let \(\cE\) be the allowed class of error insertions or defect histories.  A \emph{footprint map} is a map
\begin{equation}\label{eq:footprint-map}
\Foot:\cE\longrightarrow \cS
\end{equation}
from errors to a space \(\cS\) of locally visible field-theoretic boundary data.  The footprint space may consist of stabilizer eigenvalue data, boundary data of chains, topological charge assignments, fusion-channel labels, defect endpoints, boundary-sector changes, or conformal-sector labels.  In an idealized measurement model, the syndrome map is the measurement of the footprint.  At that level one may write \(\Synd=\Foot\) as maps, while keeping their roles distinct: the footprint is the local field-theoretic datum induced by the error history, whereas the syndrome is the classical value exposed by the chosen measurement model.  Outside the idealized model the detector and its coarse-graining are additional data, so the two maps need not literally coincide.

Given an observed footprint, or syndrome, \(s\in\cS\), a \emph{field-theoretic decoder} is a rule assigning to \(s\) a recovery class \(\widehat E(s)\), usually modulo locally neutral networks and stabilizer-like equivalences, obtained from an optimization or inference problem of the form
\begin{equation}\label{eq:decoder}
\widehat E(s)\in
\operatorname*{arg\;max}_{E\in\cE:\,\Foot(E)=s}
\mu(E\mid s),
\end{equation}
where \(\mu(E\mid s)\) is the field-theoretic weight, probability, amplitude norm, or posterior assigned to the compatible error history \(E\).  In a microscopic implementation this recovery class must be represented by an actual physical recovery operation, for instance a Pauli-frame update, a code-deformation step, or a completely positive trace-preserving map.
\end{definition}

\begin{remark}[Representative MAP versus class maximum likelihood]\label{rem:representative-vs-class-ML}
Equation~\eqref{eq:decoder} is a most-likely-representative rule.  If the physical recovery depends only on an equivalence class \([E]\) of compatible histories and \(\mu\) is a posterior probability, class-level maximum-likelihood decoding instead uses
\[
  \widehat{[E]}(s)\in
  \operatorname*{arg\;max}_{[E]\subseteq\Foot^{-1}(s)}
  \sum_{E'\in[E]}\mu(E'\mid s).
\]
The distinction is consequential in degenerate codes, where a class of individually less likely representatives may carry more total posterior mass.  For amplitude-valued weights, aggregation over a class requires a specified physical inner product and interference or measurement rule; a probability sum should not be inserted by convention.
\end{remark}

In ordinary surface-code decoding, \(E\) is an error chain, its footprint is \(\Foot(E)=\partial E\), and \(\mu(E\mid s)\) is determined by the stochastic error model once the boundary syndrome \(s\) is observed.  In a TQFT decoder, \(E\) is a defect network in spacetime, and \(\mu\) depends on the topological amplitude of the corresponding decorated bordism.  In a conformal-block likelihood model, \(E\) is a family of insertions, the footprint records the compatible local channels or sectors, and \(\mu\) is built from the squared norm of conformal blocks or from a full correlation function.

\begin{principle}[Field-theoretic organization of operations]\label{principle:actor}
In a field-theoretic code, the field theory organizes the ideal structures underlying four operations:
\[
\begin{array}{rcl}
\text{encoding} &:& \parbox{0.63\linewidth}{state spaces assigned to decorated spatial data}\\[0.35em]
\text{extraction} &:& \parbox{0.63\linewidth}{measurements of error footprints: charges, channels, or boundary sectors}\\[0.35em]
\text{detection} &:& \parbox{0.63\linewidth}{local identification of nontrivial defect endpoints or violations}\\[0.35em]
\text{correction} &:& \parbox{0.63\linewidth}{inference over compatible defect histories weighted by field-theoretic amplitudes}
\end{array}
\]
\end{principle}

Principle~\ref{principle:actor} treats the field theory as organizing ideal state spaces and sector data rather than as implementing a device by itself.  The physical realization, measurement model, and recovery map remain part of the code datum.  In this formulation the protected space is attached to decorated spatial data, syndrome extraction exposes selected local sector information, and decoding compares compatible field histories.

\medskip
\noindent\textbf{Footprint fibres and exact correctability.}
It is worth making explicit how Definition \ref{def:decoder} reduces to the usual exact algebraic conditions when the field-theoretic model is realized by honest operators on a physical Hilbert space.  For each footprint value \(s\in\cS\), write
\[
  \cE_s=\Foot^{-1}(s)
\]
for the fibre of allowed errors with that footprint.  We impose the following explicit ideal footprint measurement hypothesis.  Assume that the measurement protocol determines orthogonal syndrome subspaces with projectors \(\Pi_s\), and that the chosen, syndrome-resolved error representatives satisfy
\[
  \Pi_t E_a P = \delta_{t,\Foot(E_a)} E_a P.
\]
Thus distinct footprint sectors are orthogonally distinguishable, while the logical state is protected from the information revealed by the measurement.  This is a hypothesis on the measurement model and on the chosen error representatives.  It is not automatic for an arbitrary named error set before one has resolved the syndrome sectors or diagonalized the corresponding Knill--Laflamme matrix.  Under this hypothesis one obtains the two requirements
\begin{align}
  P E_a^\dagger E_b P&=0,
  &&\Foot(E_a)\neq \Foot(E_b),
  \label{eq:foot-orthogonal}\\
  P E_a^\dagger E_b P&=\lambda_{ab}^{(s)}P,
  &&E_a,E_b\in\cE_s\text{ and }E_a^{-1}E_b\text{ locally neutral}.
  \label{eq:foot-KL}
\end{align}
with the obvious modification when the errors are not invertible and one replaces \(E_a^{-1}E_b\) by the composite \(E_a^\dagger E_b\).  Equation \eqref{eq:foot-orthogonal} says that different footprints are detectable without measuring the encoded state.  Equation \eqref{eq:foot-KL} says that the residual ambiguity inside a fixed footprint fibre is harmless exactly when the composite ambiguity acts as a scalar on the code.  Together these two formulas are the Knill--Laflamme condition reorganized after decomposing the error set by locally visible field-theoretic data.

In the surface-code case, \(\cE_s\) is the set of chains with boundary \(s\).  If \(c,c'\in\cE_s\), then \(c-c'\) is a cycle.  The cycle is harmless if it is a stabilizer boundary, and harmful if it represents a nontrivial homology class.  In a modular-category code, the same sentence becomes: two defect histories with the same footprint differ by a closed or globally neutral defect network, which is harmless precisely when it evaluates locally to a scalar and harmful when it acts as a nontrivial Wilson or fusion-channel operator.  This is the local-to-global ambiguity that the proposed decoder is designed to resolve.

A physical recovery map adds one further layer.  The field-theoretic decoder may output a class \([\widehat E(s)]\) of compatible histories rather than a unique microscopic operator.  A device-level recovery must then choose a representative physical operation \(R_s\).  Exact correction in the ordinary sense asks that
\[
  R_s E_a P = u_{a,s} P
  \qquad(E_a\in\cE_s)
\]
up to harmless scalars or unitary transformations on an auxiliary syndrome register.  The field-theoretic formalism therefore does not replace the operational recovery map.  It organizes the hypothesis space on which the recovery map is based.

\begin{remark}[Operator interpretation of the fibrewise theorem]\label{rem:operator-fibrewise-KL}
The ideal footprint-measurement hypothesis above is exactly the operator-level situation already covered by Theorem \ref{thm:footprint-algebra-KL}.  If \(\cE=\bigsqcup_s\cE_s\) and
\[
  \Pi_t E_aP=\delta_{t,\Foot(E_a)}E_aP,
\]
then orthogonality gives \(PE_a^\dagger E_bP=0\) for distinct measured fibres, while exact correction within a fixed fibre is equivalent to
\[
  PE_a^\dagger E_bP=\lambda^{(s)}_{ab}P.
\]
The constructive recovery is Proposition \ref{prop:conditional-recovery-fibre}: diagonalize the positive Gram matrix \(\Lambda^{(s)}\), obtain mutually orthogonal error images, and apply the inverse partial isometries after observing \(s\).  We do not repeat the theorem and proof in operator notation.  The operator formulation is the Hilbert-space shadow of the categorical statement, and the mathematical gain of the footprint language is the prior organization of the error family by measured field-theoretic sectors.
\end{remark}

\section{Errors as defect networks and syndromes as fusion data}

We next describe errors and syndromes in the field-theoretic picture.  Let \(\cC\) be a unitary modular tensor category with simple objects \(a,b,c,\ldots\), tensor unit \(\one\), fusion coefficients \(N_{ab}^c\), quantum dimensions \(d_a\), and duals \(a^*\).  The discussion below has analogues for fusion categories, rational conformal field theories, and extended TQFTs with defects.

A local error insertion may be represented by an object label \(a\), or by a small segment of defect line carrying label \(a\).  In a disk, topological charge conservation requires that a collection of labels \(a_1,\ldots,a_n\) have a possible total charge \(c\), encoded by
\begin{equation}\label{eq:fusion-space}
V_{a_1\cdots a_n}^c = \Hom(c,a_1\otimes\cdots\otimes a_n).
\end{equation}
A syndrome measurement in this setting asks for the total charge \(c\), or more finely for an intermediate fusion channel in a chosen fusion tree.

In footprint language, if \(E\) is an error insertion or defect network, then \(\Foot(E)\) is the local charge, endpoint, boundary-sector, or fusion-channel datum induced on the boundary of a small neighbourhood of the support of \(E\).  The footprint is boundary-like without necessarily being a literal boundary component of spacetime; it is the local datum through which the ambient field theory detects the insertion.

For a stabilizer code, the syndrome is a vector of signs.  For a modular category, the footprint is categorical and the syndrome is its measured value.  Pair creation of \(a\) and \(a^*\) in a contractible disk has total vacuum charge and may be locally invisible unless the pair is separated.  A single nontrivial charge cannot appear in isolation on a closed surface, but endpoints of an open error string can carry detectable charge.  Thus the categorical replacement for ``boundary of an error chain'' is ``the collection of nontrivial charges or defect endpoints left after local fusion.''

Choose a small regular neighborhood \(U\) of the support of an error network \(E\).  The complement sees the network through labelled data induced on \(\partial U\): total charge, endpoints of open strings, defect-junction labels, or boundary-condition changes.  In a fully extended theory, excision makes this picture literal: the error neighborhood and its complement are glued along \(\partial U\), and the measurement protocol accesses selected data in the boundary state assigned there.

This local record need not be complete.  Two networks inducing the same measured boundary sector may still differ by internal fusion multiplicity, braiding inside \(U\), or a global route taken before entering \(U\).  These are latent variables for the decoder.  In a semisimple theory the boundary state decomposes into simple sectors; with fusion multiplicities, one must decide whether multiplicity labels are measured syndrome data, unobserved environment data, leakage, or protected logical information.

There is a corresponding gluing picture.  If \(E_1\) and \(E_2\) are two error histories whose neighborhoods have compatible boundary labels, then composing the histories corresponds to gluing their footprint boundaries and summing over intermediate sectors.  The same gluing operation appears in TQFT state sums and in conformal-block factorization.  Footprint formation is therefore not an auxiliary operation imposed on the code, but a local boundary operation already present in the field theory.

The central protection mechanism is the following standard field-theoretic principle:

\begin{proposition}[Local neutrality acts scalarly]\label{prop:local-scalar}
Let \(Z\) be a semisimple unitary topological field theory associated to a modular tensor category \(\cC\).  Let \(\Sigma\) be a decorated surface and let \(D\subset \Sigma\) be a contractible disk disjoint from the marked boundary data.  Suppose a defect network \(N\subset D\) has total vacuum charge at \(\partial D\).  Then the local evaluation of \(N\) is an endomorphism of the tensor unit,
\[
  \operatorname{ev}(N)\in \End_{\cC}(\one)\cong\bC,
\]
and the operator induced by inserting \(N\) acts on the topological state space \(Z(\Sigma,\cD)\) by this scalar.
\end{proposition}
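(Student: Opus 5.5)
The plan is to use the locality (gluing) axiom of the extended theory to split the surface along \(\partial D\), evaluate the network inside the disk categorically, and then glue back. First I would make the evaluation precise. Because \(D\) is a contractible disk disjoint from the marked data, the network \(N\subset D\) is a morphism in \(\cC\) between the boundary label on \(\partial D\) and itself, read via the graphical calculus after choosing a planar projection and a fusion-tree presentation of \(N\); by the coherence theorem for (ribbon) modular categories this morphism depends only on the isotopy class of \(N\) relative to \(\partial D\). The hypothesis that \(\partial D\) carries total vacuum charge means that the boundary object seen by the complement is \(\one\), so after projecting onto the vacuum sector (or, equivalently, after fusing all strands meeting \(\partial D\) into their total charge, which is trivial by assumption) the evaluation is an element \(\ev(N)\in\End_\cC(\one)\). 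Since \(\cC\) is semisimple with simple unit and is a \(C^*\)-category over \(\bC\), Schur's lemma gives \(\End_\cC(\one)\cong\bC\), so \(\ev(N)=\lambda\,\id_\one\) for a scalar \(\lambda\).

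Second, I would transfer this scalar to the state space. Write \(\Sigma=D\cup_{\partial D}\Sigma'\) with \(\Sigma'=\overline{\Sigma\setminus D}\). The gluing axiom of the semisimple TQFT gives
\[
  Z(\Sigma,\cD)\cong\bigoplus_{c\in\Irr(\cC)} Z(D;c)\otimes Z(\Sigma';c^*,\cD),
\]
the same sector-decomposition mechanism as in Theorem~\ref{thm:categorical-footprint-decomp}. Since \(D\) is a disk with no punctures, \(Z(D;c)\cong\Hom_\cC(\one,c)\), which vanishes unless \(c=\one\) and is one-dimensional for \(c=\one\). Hence \(Z(\Sigma,\cD)\cong Z(D;\one)\otimes Z(\Sigma';\one,\cD)\) with \(Z(D;\one)\cong\bC\). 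The operator induced by inserting \(N\) is, by functoriality of the TQFT on the cylinder \(\Sigma\times[0,1]\) containing \(N\) inside \(D\times[0,1]\), of the form \(Z(N)\otimes\id_{Z(\Sigma')}\), where \(Z(N)\) acts on the one-dimensional space \(Z(D;\one)\) by \(\ev(N)=\lambda\). Thus the insertion acts as \(\lambda\cdot\id\) on \(Z(\Sigma,\cD)\).

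The main obstacle is the second step: justifying that the insertion operator genuinely factorizes as an operator on the disk factor tensored with the identity on the complement, and that the disk factor is one-dimensional, without leakage through the boundary. Here I would rely on disjointness of \(D\) from marked data and on total vacuum charge at \(\partial D\). Together these ensure that no nontrivial label crosses \(\partial D\) after local fusion, so only the \(c=\one\) summand contributes. Unitarity is then used only to ensure that the identification is by unitary recouplings, so that the scalar is basis-independent. If strands of \(N\) did cross \(\partial D\) with nontrivial labels fusing to \(\one\), I would first apply the completeness relation to replace them by a single \(\one\)-labelled strand, which is invisible, reducing to the closed case.
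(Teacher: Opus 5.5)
Your proposal is correct and is essentially the paper's argument: split along \(\partial D\) by the gluing axiom, use the vacuum hypothesis and Schur's lemma to get \(\ev(N)\in\End_{\cC}(\one)\cong\bC\), and glue back. Your explicit decomposition \(\bigoplus_c Z(D;c)\otimes Z(\Sigma';c^*,\cD)\), collapsing to the one-dimensional \(c=\one\) summand, simply makes precise the paper's remark that no boundary label or multiplicity index survives through which \(N\) could act.

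Your closing hedge about strands crossing \(\partial D\) is unnecessary, since \(N\subset D\), and should be dropped. Cutting such strands with the completeness relation only closes off the part inside \(D\). The remainder outside \(D\) need not act as a scalar; an arc completing a Wilson loop around a handle is an example.
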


\begin{proof}
Choose a small circle \(\partial D\) enclosing the network \(N\).  Locality, or equivalently the gluing axiom of the field theory, says that insertion of \(N\) factors through the state space assigned to this artificial boundary.  The hypothesis that the total boundary charge is vacuum means that the local contribution of \(N\) is a morphism from the tensor unit to itself.  In the categorical notation this is
\[
  \operatorname{ev}(N)\in\End_{\cC}(\one).
\]
Since \(\cC\) is a unitary fusion category with simple tensor unit, \(\End_{\cC}(\one)\cong\bC\).  Thus the disk containing \(N\) contributes only a complex scalar.

Gluing the disk back into \(\Sigma\) pairs this scalar local morphism with the complementary surface.  There is no remaining boundary label or multiplicity index through which the local insertion could act nontrivially on the external state space.  Hence the induced operator on \(Z(\Sigma,\cD)\) is multiplication by that scalar.  In ribbon-graph language, this is the usual evaluation of a closed neutral diagram inside a ball or disk before the diagram is removed from the ambient surface.
\end{proof}

The scalar mechanism can be seen concretely in the fusion-space notation.  Suppose that the boundary of a small disk carries total charge \(c\).  The local state space is a direct sum of spaces of the form
\[
  V_{a_1\cdots a_n}^{c}=\Hom(c,a_1\otimes\cdots\otimes a_n).
\]
If \(c=\one\) and the disk is otherwise isolated from marked data, a closed neutral network defines an element of
\[
  \End(\one)\cong \bC
\]
in a simple semisimple category.  Hence it is a scalar.  If instead \(c\neq \one\), the same disk has a nontrivial footprint: the complement sees charge \(c\) on the artificial boundary.  If the channel \(c\) is vacuum but a multiplicity space occurs, say
\[
  \dim V_{a_1\cdots a_n}^{\one}=m>1,
\]
then a local insertion may act by an \(m\times m\) matrix on this multiplicity space.  An operational code must then make a decision.  It may measure the multiplicity label and include it in the footprint; it may treat the multiplicity space as leakage; or it may encode in it deliberately.  The scalar conclusion of Proposition \ref{prop:local-scalar} therefore applies when multiplicity data are absent, resolved by the footprint, or excluded from the encoded subsystem.  It does not remove the need to specify how unresolved multiplicities are treated.

This distinction is often invisible in abelian stabilizer examples because the relevant local sectors are one-dimensional.  In a nonabelian theory it becomes important.  A coarse footprint might record only the total charge \(c\), while a refined footprint records \((c,\alpha)\), where \(\alpha\) is an intermediate channel or multiplicity label.  If the measurement is coarse, correctability requires that all unmeasured \(\alpha\)-ambiguities satisfy a Knill--Laflamme scalar condition.  If the measurement is refined, the decoder may condition separately on each \((c,\alpha)\).  Thus the footprint is not a purely mathematical label; it is a specification of which part of the local field-theoretic boundary data is actually observed.

Semisimplicity enters through the one-dimensional vacuum boundary sector.  If a theory assigns a higher-dimensional local state space to the same apparent boundary label, a locally neutral insertion may act nontrivially there.  QEC can still accommodate this possibility, but only after the additional sector is measured, energetically suppressed, treated as leakage, or included in the physical noise model.  Scalarity is therefore a statement about the chosen protected sector and allowed local errors, not a property of arbitrary field theories.

The same proof may be read as a local-to-global argument.  A contractible neutral network can be enclosed by a small circle, evaluated there, and then forgotten by the rest of the surface.  A noncontractible neutral network cannot be enclosed in this way without cutting through nontrivial topology or through marked sectors.  This is exactly the difference between a stabilizer-like operation and a logical Wilson line.  The field-theoretic version of distance should therefore measure how difficult it is for an allowed error network to evade all such local enclosures and nevertheless return a neutral footprint.

Proposition \ref{prop:local-scalar} is the field-theoretic analogue of the Knill--Laflamme scalar condition.  In the algebraic theory, correctable composites satisfy \(P E_a^\dagger E_b P=\lambda_{ab}P\).  In the topological theory, locally neutral networks in contractible regions evaluate to scalars.  Logical operators must therefore arise from networks that cannot be reduced to local neutral evaluations: noncontractible loops, strings connecting appropriate boundaries, nontrivial domain walls, or changes of global fusion channel.

\begin{remark}
The proposition is stated in a semisimple modular setting.  If multiplicity spaces occur, then a locally neutral disk insertion is scalar only after fixing the appropriate simple vacuum sector; more generally it may act on a local multiplicity space.  The point is structural: a field-theoretic code must specify which local sectors are treated as gauge, syndrome, leakage, or logical degrees of freedom.  Nonsemisimple TQFTs, logarithmic CFTs, higher-categorical defect theories, and nonunitary theories may also exhibit nilpotent or nonsemisimple local sectors.  Those cases are mathematically rich, but for quantum error correction the unitary semisimple setting is the most immediate starting point.
\end{remark}

\begin{proposition}[Multiplicity-resolved footprint sectors]\label{prop:multiplicity-footprints}
Let \(\cC\) be a semisimple unitary fusion category, and fix labels \(a_1,\ldots,a_n\) on a contractible cluster.  Suppose the local state space decomposes as
\[
  a_1\otimes\cdots\otimes a_n
  \cong \bigoplus_{c\in\operatorname{Irr}(\cC)} V_{a_1\cdots a_n}^{c}\otimes c.
\]
A footprint measurement which records only the total charge \(c\) has projectors onto the summands indexed by \(c\).  It is an exact local syndrome measurement for a code only if every unobserved operator acting inside the multiplicity space \(V_{a_1\cdots a_n}^{c}\) is either scalar on the protected sector or else is included in the error model and corrected by a fibrewise Knill--Laflamme condition.  Equivalently, if multiplicity labels are not measured, they remain latent variables inside the footprint fibre.
\end{proposition}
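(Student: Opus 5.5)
The plan is to reduce the statement to Theorem~\ref{thm:footprint-algebra-KL}, applied with the coarse total-charge projectors as the measured sectors. First I make the sector structure explicit. By semisimplicity, as in the proof of Theorem~\ref{thm:categorical-footprint-decomp}, the cluster decomposes as \(\bigoplus_c V^c_{a_1\cdots a_n}\otimes c\), and the coarse footprint projector \(P_{I,c}\) is the orthogonal projection onto the \(c\)-isotypic summand. This projector acts as the identity on the multiplicity factor \(V^c\). The endomorphism algebra of the cluster is \(\bigoplus_c\End(V^c)\otimes\id_c\). Since \(\End(c)\cong\bC\) for simple \(c\), a local operator that preserves total charge acts inside each summand by a matrix \(M_c\in\End(V^c)\), and \(P_{I,c}\) commutes with every such operator. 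So the coarse measurement algebra \(\cA_{\mathrm{fp}}=\operatorname{span}\{P_{I,c}\}\) sees only \(c\). It does not see the operators \(M_c\), which remain as unmeasured structure inside each fibre.

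Next I apply Theorem~\ref{thm:footprint-algebra-KL}, assuming \(\cA_{\mathrm{fp}}\) is syndrome-admissible. Exact recovery conditioned on the coarse outcome exists iff \(PE_\alpha^\dagger E_\beta P=\lambda^{(c)}_{\alpha\beta}P\) for all representatives in the same charge sector \(c\). Two representatives with the same \(c\) that differ by an operator \(M\) acting on \(V^c\) give a composite containing \(M\). There are two cases. If \(PMP\) is scalar, the equations hold trivially. If it is not, the equation can hold only if the composite is absorbed into the error family and the fibrewise scalar conditions still hold for the enlarged family. In both cases \(M\) is either scalar on the protected sector or covered by a fibrewise Knill--Laflamme condition, which is the asserted dichotomy. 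For the necessity direction, I take a multiplicity operator that is not scalar after compression and is not controlled by the error model. Remark~\ref{rem:no-scalarity-converse} warns that being non-scalar in the ambient algebra is not enough, so the argument must use non-scalarity of \(PMP\) itself. When \(PMP\) is not proportional to \(P\), the fibrewise equation fails, and exact correction conditioned only on \(c\) fails.

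The final sentence ("latent variables") is a reformulation. The fibre \(\Foot^{-1}(c)\) contains every history whose multiplicity labels differ, so \(\alpha\) is not syndrome data. If instead \(\alpha\) is measured by refining the algebra with the rank-type projectors inside \(V^c\), Proposition~\ref{prop:fusion-tree-footprint-algebra} gives finer sectors \((c,\alpha)\), and the same theorem then applies fibre by fibre.

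The main obstacle is making "only if" precise. The statement is loosely phrased, so the plan is to interpret it strictly as "exact recovery conditioned on \(\cA_{\mathrm{fp}}\) exists iff the fibrewise equations hold." The key step is showing that an uncorrected multiplicity operator \(M\) with \(PMP\notin\bC P\) genuinely occurs as some \(E_\alpha^\dagger E_\beta\) within one fibre. If it does not occur in that form, it is harmless by definition, and this matches the phrasing "scalar on the protected sector."
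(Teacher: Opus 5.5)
Your proposal is correct and follows essentially the paper's route: the semisimple decomposition shows the coarse projectors are central and blind to each multiplicity factor \(V^{c}_{a_1\cdots a_n}\), so histories differing by a multiplicity endomorphism share a footprint fibre, and Theorem~\ref{thm:footprint-algebra-KL} makes their harmlessness equivalent to scalarity of the compressed composites. You add the explicit \(\bigoplus_c\End(V^c)\) structure and the compressed-versus-ambient caveat, while the paper instead lists the remedies (refinement, suppression, leakage, or logical/gauge encoding). One tightening is needed: state the test as scalarity of \(PE_\alpha^\dagger E_\beta P\) rather than of \(PMP\), because for representatives \(E_\alpha\) and \(E_\beta=ME_\alpha\) with \(M\) acting in a non-code sector \(c\), the relevant entry is \(PE_\alpha^\dagger ME_\alpha P\), and \(PMP\) need not detect it.
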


\begin{proof}
The displayed decomposition is the semisimple decomposition of the tensor product into simple total-charge summands and multiplicity spaces.  A coarse charge measurement applies the projector onto the full isotypic summand with fixed label \(c\).  It therefore sees the simple object \(c\), but it does not distinguish vectors or operators inside the factor \(V_{a_1\cdots a_n}^{c}\).

Consequently, two local histories which differ only by an endomorphism of this multiplicity space have the same coarse footprint.  From the point of view of the decoder they lie in the same footprint fibre.  They are harmless exactly when the corresponding compressed composites act as scalars on the protected sector, i.e. when the fibrewise Knill--Laflamme equations hold after projection to the code.  If the equations fail, the unmeasured multiplicity variable carries information that is not being recorded by the syndrome.  The available remedies are precisely the alternatives stated in the proposition: refine the footprint measurement, suppress the multiplicity dynamically, treat it as leakage, or incorporate it deliberately as logical or gauge data.
\end{proof}

\begin{corollary}[Coarse-graining criterion]\label{cor:coarse-graining-criterion}
Let a fine footprint algebra have pairwise orthogonal projectors \(Q_r\), and let a coarse measured algebra be obtained by grouping them as
\[
  \Pi_s=\sum_{r\in R_s}Q_r .
\]
A coarse sector \(s\) is harmless for a code and error family exactly when all error representatives whose fine footprints lie in the same block \(R_s\) satisfy the Knill--Laflamme scalar equations after compression to the code.  Thus coarse-graining is allowed precisely when the fine information that has been forgotten is not logical information and does not distinguish uncorrectable representatives.
\end{corollary}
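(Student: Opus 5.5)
The plan is to reduce the corollary to Theorem~\ref{thm:footprint-algebra-KL}, applied to the coarse measured algebra \(\cA_{\mathrm{coarse}}=\operatorname{span}\{\Pi_s\}\). First I check that the grouped operators \(\Pi_s=\sum_{r\in R_s}Q_r\) form a legitimate commuting footprint measurement algebra. Since the \(Q_r\) are pairwise orthogonal and sum to the identity, and the blocks \(R_s\) partition the fine index set, the \(\Pi_s\) are again orthogonal projectors summing to the identity.

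Next I transfer syndrome-admissibility from the fine algebra to the coarse one. If the code lies in a single fine sector \(r_0\), it lies in the coarse sector \(s_0\) whose block contains \(r_0\). If a representative satisfies \(Q_tE_\alpha P=\delta_{t,r(\alpha)}E_\alpha P\), summing over \(t\in R_s\) gives \(\Pi_sE_\alpha P=\delta_{s,s(r(\alpha))}E_\alpha P\). So the coarse algebra is syndrome-admissible with coarse label \(s(\alpha)\), the block containing \(r(\alpha)\). I will state explicitly that the corollary presupposes syndrome-admissibility at the fine level, as in the surrounding results. If needed, I will use Remark~\ref{rem:error-sector-resolution} to pass to fine-resolved representatives \(Q_rE_\alpha\) first.

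Theorem~\ref{thm:footprint-algebra-KL} then says that exact recovery conditioned on the coarse measurement exists if and only if \(PE_\alpha^\dagger E_\beta P=\lambda^{(s)}_{\alpha\beta}P\) whenever \(s(\alpha)=s(\beta)=s\). That is exactly the condition for all pairs whose fine footprints lie in the same block \(R_s\), which gives the displayed equivalence. Pairs in the same fine sector already had to satisfy these equations for fine recovery. Pairs in different fine sectors of the same block previously had automatic zero compression, \(PE_\alpha^\dagger Q_rQ_{r'}E_\beta P=0\), and this still holds. So for such pairs the new requirement is only that this zero is consistent with a scalar, which it trivially is.

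The main subtlety is interpretive rather than computational: making the slogan ``forgotten information is not logical'' precise. I would argue this in two directions. If the equations hold on a block, the conditional recovery of Proposition~\ref{prop:conditional-recovery-fibre} for that coarse fibre never needs the fine label. Conversely, if the equations fail, some pair in the block has a nonscalar compressed composite, so coarse-conditioned recovery is impossible by the necessity half of the theorem. The forgotten fine label was then carrying information needed to distinguish uncorrectable representatives.
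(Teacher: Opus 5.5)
Your skeleton is the paper's: treat \(\{\Pi_s\}\) as the measured algebra, apply Theorem~\ref{thm:footprint-algebra-KL} to it, and get sufficiency from Proposition~\ref{prop:conditional-recovery-fibre}. The problem is how you obtain the theorem's hypothesis. You assume that the \emph{fine} algebra \(\{Q_r\}\) is syndrome-admissible and push this down to \(\{\Pi_s\}\). The statement does not give you that assumption, and the surrounding results do not require it. The theorem needs admissibility of the algebra that is actually measured, here the coarse one. The fine algebra plays the role of a diagnostic algebra that may be unsafe to measure, which is the reason to coarse-grain at all.

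Under your extra hypothesis the corollary loses its content. As your own cross-term computation shows, \(PE_\alpha^\dagger Q_rQ_{r'}E_\beta P=0\) for \(r\neq r'\). So the coarse equations on \(R_s\) are equivalent to the fine equations on each \(r\in R_s\) separately, and coarse-graining always costs nothing. This also contradicts your closing interpretation. If a coarse equation fails, the failing pair must share a fine label, so measuring the fine label would not have helped either.

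The fallback through Remark~\ref{rem:error-sector-resolution} does not cover the remaining case. That remark only says that correctability of the resolved family implies correctability of its span. Replacing \(E_\alpha\) by the pieces \(Q_rE_\alpha\) changes the error family, and it can destroy correctability precisely when the fine labels carry logical information.

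For instance, on \(\Hom(\one,\sigma^{\otimes4})\) with \(P_4=I_{\cH_L}\), take the fine projectors \(\Pi_{\one}^{(12)},\Pi_{\psi}^{(12)}\) grouped into the single coarse block \(I_{\cH_L}\). The coarse algebra is syndrome-admissible and the family \(\{I\}\) is trivially correctable. The resolved family \(\{\Pi_{\one}^{(12)},\Pi_{\psi}^{(12)}\}\), however, has \(P_4\Pi_{\one}^{(12)}P_4=\Pi_{\one}^{(12)}\notin\bC P_4\), so your route would declare the block harmful.

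The repair is short. Drop the fine-level assumption and assume syndrome-admissibility of \(\{\Pi_s\}\) directly, which is what the paper's proof tacitly uses when it applies the theorem to the coarse algebra. It then suffices that each representative have a definite coarse label, \(\Pi_tE_\alpha P=\delta_{t,s(\alpha)}E_\alpha P\); definite fine labels are not needed. With that hypothesis the equivalence is Theorem~\ref{thm:footprint-algebra-KL} for the coarse algebra, applied verbatim.
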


\begin{proof}
The coarse measurement identifies all fine sectors inside the same block \(R_s\).  Hence representatives with fine labels \(r,r'\in R_s\) are indistinguishable to the measured syndrome.  They must therefore be treated as lying in one footprint fibre.  Applying Theorem~\ref{thm:footprint-algebra-KL} to the coarse algebra gives exactly the stated scalar equations for all such pairs.  Conversely, if these equations hold in each coarse block, then the conditional recovery of Proposition~\ref{prop:conditional-recovery-fibre} corrects each block separately.  No further condition is imposed by the fine labels, since they are not measured.
\end{proof}

This corollary is a useful design rule.  A large diagnostic algebra may contain far more information than one can safely or practically measure.  The measured syndrome algebra should be a coarse quotient of that diagnostic algebra only along directions that are invisible to the logical subsystem after correction.  Stabilizer degeneracy is the abelian example: many Pauli errors with different microscopic supports share the same syndrome because their differences are stabilizers.  In a fusion category, the analogous degeneracies may come from locally neutral bubbles, associator-equivalent networks, or unmeasured multiplicity variables that act trivially on the protected sector.

A field-theoretic code therefore inherits a natural hierarchy of errors:
\begin{center}
\renewcommand{\arraystretch}{1.15}
\begin{tabular}{>{\raggedright\arraybackslash}p{0.43\linewidth}c>{\raggedright\arraybackslash}p{0.43\linewidth}}
locally neutral contractible networks &:& act as scalars or stabilizers\\
networks with detectable endpoints &:& produce observable footprints or syndromes\\
globally nontrivial neutral networks &:& act as logical operators\\
networks changing boundary or defect sectors &:& implement gates, leakage, or code deformation
\end{tabular}
\end{center}
The footprint only records the locally visible data.  The syndrome is the measured footprint.  Decoding requires choosing a compatible global class.  This is the topological origin of degeneracy in decoding: many networks can share the same footprint, but differ by nontrivial topology.

\section{ZX-calculus as the stabilizer shadow}

The ZX-calculus is a graphical language for qubit processes built from two complementary families of spiders, usually drawn green and red, corresponding to \(Z\)- and \(X\)-type classical structures.  It is complete for stabilizer quantum mechanics and extends to broader fragments of quantum computation \cite{Backens2014,CoeckeKissinger2017}.  For quantum error correction, the ZX-calculus is especially suggestive because surface-code lattice surgery has a direct ZX interpretation.  De Beaudrap and Horsman identify rough and smooth merges and splits with red and green spiders satisfying dagger special Frobenius algebra relations \cite{deBeaudrapHorsman2020}.  Chancellor--Kissinger--Zohren--Roffe--Horsman use ZX-based graphical structures to design and verify stabilizer error-correcting codes via coherent parity checks \cite{ChancellorEtAl2016}.  More recently, Bombin--Litinski--Nickerson--Pastawski--Roberts use ZX instrument networks to unify circuit-based, measurement-based, fusion-based, and Floquet models of stabilizer fault tolerance \cite{BombinLitinskiNickersonPastawskiRoberts2023}.

In the present article, we use ZX-calculus as a stabilizer or Pauli shadow of selected field-theoretic sectors.  The comparison is model-dependent and applies only when an appropriate lift has been specified.  The schematic progression already displayed in \eqref{eq:containment} is
\begin{equation}\label{eq:diagrammatic-hierarchy}
\boxed{
\text{ZX diagrams}
\rightsquigarrow
\text{defect/string-net diagrams}
\rightsquigarrow
\text{extended TQFT/CFT amplitudes}.
}
\end{equation}

The four-\(\sigma\) Ising sector gives one precise instance of this slogan.
\begin{proposition}[Exact one-qubit Clifford shadow in the Ising sector]\label{prop:ising-clifford-shadow}
On \(\Hom(\one,\sigma^{\otimes 4})\), choose the \((12)|(34)\) fusion basis.  Then, up to an overall phase,
\[
  Z_f^{(12)}=Z,
  \qquad
  F=H,
  \qquad
  B_1=S=\begin{pmatrix}1&0\\0&i\end{pmatrix},
\]
where \(F\) is the nontrivial Ising recoupling matrix and \(B_1\) is the braid of the first pair.  Consequently
\[
  FZ_f^{(12)}F^{-1}=X,
\]
and the projective image generated by \(F\) and \(B_1\) is the one-qubit Clifford group.
\end{proposition}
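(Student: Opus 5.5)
The plan is to fix the standard Ising data and compute directly in the two-dimensional space \(\Hom(\one,\sigma^{\otimes 4})\). Since \(\sigma\otimes\sigma\cong\one\oplus\psi\), the \((12)|(34)\) fusion basis consists of the two vectors \(|\one\rangle,|\psi\rangle\) labelled by the charge \(a\) of the pair \((12)\), with the pair \((34)\) forced to carry the same charge \(a\) by total vacuum charge. The footprint observable \(Z_f^{(12)}\) is \(P_{(12),\one}-P_{(12),\psi}\), built from the projectors of Theorem~\ref{thm:categorical-footprint-decomp}. In this basis it is \(\operatorname{diag}(1,-1)=Z\).

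For the recoupling, I will use the standard unitary gauge \([F^{\sigma\sigma\sigma}_\sigma]=\tfrac{1}{\sqrt2}\begin{pmatrix}1&1\\1&-1\end{pmatrix}=H\). This matrix relates the \((12)\) basis to the \((23)\) basis. I will note that other gauge choices change \(F\) only by diagonal phases, which is absorbed in the ``up to phase'' and basis conventions. For the braid of the first pair, \(B_1\) acts diagonally by the \(R\)-symbols \(R^{\sigma\sigma}_\one=e^{-\pi i/8}\) and \(R^{\sigma\sigma}_\psi=e^{3\pi i/8}\). Factoring out \(e^{-\pi i/8}\) leaves \(\operatorname{diag}(1,e^{\pi i/2})=\operatorname{diag}(1,i)=S\). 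With the opposite chirality one gets \(S^\dagger\), which generates the same group. The identity \(FZ_f^{(12)}F^{-1}=HZH=X\) is then immediate, and it matches item (iv) of Proposition~\ref{prop:footprint-projector-relations}.

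For the group statement, I will argue projectively in \(PU(2)\). \(H\) and \(S\) normalize the Pauli group, since \(HXH=Z\), \(SXS^\dagger=Y\) and \(SZS^\dagger=Z\). So \(\langle H,S\rangle\) lies in the Clifford group, whose projective image \(\mathcal C_1/U(1)\) has order \(24\) and acts on the Paulis as the symmetry group of the octahedron, \(S_4\). For the reverse inclusion, I will show that \(\langle H,S\rangle\) surjects onto that action. \(S\) induces a \(4\)-cycle about the \(Z\)-axis, since \(S^2=Z\) up to phase. \(H\) is an order-\(2\) rotation swapping the \(X\) and \(Z\) axes. Their images generate the full rotation group of the octahedron, because a \(4\)-cycle together with a transposition-type rotation not preserving its axis generates \(S_4\). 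Alternatively, I could simply enumerate the \(24\) projective elements. Finally, the kernel of the action on Paulis is trivial projectively, since only scalars commute with all Paulis. Hence the two groups coincide.

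The main obstacle is convention-dependence rather than computation. The braid and \(F\) matrices depend on gauge and chirality, and the paper's ``first pair'' braid must really be the diagonal \(R\)-action in the \((12)\) basis. I would state the gauge explicitly, and remark that diagonal gauge phases conjugate everything by a diagonal unitary. Such a conjugation preserves \(Z\) and maps the Clifford group to a conjugate, which is harmless for the projective statement as long as the gauge is chosen so that \(F=H\) exactly.
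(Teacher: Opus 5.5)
Your proposal is correct and follows the paper's proof. It reads off \(Z_f^{(12)}=Z\), \(F=H\) and \(B_1=e^{-i\pi/8}S\) from the Ising \(F\)- and \(R\)-data, then invokes \(HZH=X\) and the fact that \(H\) and \(S\) generate the one-qubit Clifford group projectively; the only difference is that you prove this last generation fact via the octahedral rotation group \(S_4\), whereas the paper simply cites it as standard.
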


\begin{proof}
The Ising data computed explicitly in Section~\ref{sec:ising-example} give
\[
  F=\frac{1}{\sqrt2}\begin{pmatrix}1&1\\1&-1\end{pmatrix}=H,
  \qquad
  B_1=e^{-i\pi/8}\begin{pmatrix}1&0\\0&i\end{pmatrix}=e^{-i\pi/8}S,
\]
and the \((1,2)\)-pair-charge observable is diagonal with eigenvalues \(\pm1\), hence equals \(Z\) in this basis.  Since \(HZH=X\) and \(H,S\) generate the one-qubit Clifford group projectively, the assertions follow.
\end{proof}

This proposition is the precise stabilizer content used later.  The remaining comparisons in this section are motivational and indicate structures that a model-specific lift would have to realize.

At the ZX level, we have qubits, Pauli measurements, stabilizer propagation, Clifford transformations, and lattice-surgery corrections. Comparatively, at the defect/string-net level, we have anyon lines, Wilson operators, fusion vertices, domain walls, condensable boundaries, and topological charge measurements.  Finally, at the TQFT/CFT level, one assigns amplitudes, conformal blocks, or path-integral weights to the diagrams.

To indicate what a stronger model-specific statement would look like, Figure \ref{fig:shadowtriangle} displays a schematic triangle.  Once a lift \(\iota\) has actually been constructed for a selected stabilizer fragment, commutativity is the condition one would ask it to satisfy.  No such universal lift is asserted here.

\begin{figure}[htbp]
\centering
\[
\begin{tikzcd}[column sep=huge, row sep=large]
\mathsf{ZX}_{\mathrm{stab}}
\arrow[r, "\iota"]
\arrow[dr, swap, "\ev_{\mathrm{stab}}"]
&
\mathsf{Def}_{\cF}
\arrow[d, "\ev_{\cF}"]
\\
&
\mathsf{Hilb}
\end{tikzcd}
\]
\caption{The compatibility triangle desired of a model-dependent stabilizer lift.  If a lift \(\iota\) is specified for an appropriate ZX fragment, one asks that direct stabilizer evaluation agree with evaluation after passage to the defect/string-net calculus.  The figure records the target compatibility condition for a specified lift \(\iota\).}
\label{fig:shadowtriangle}
\end{figure}

A second useful picture is supplied by the spiders themselves.  Figure \ref{fig:zxspiders} records the minimal local grammar that underlies many stabilizer and lattice-surgery manipulations: green and red spiders model complementary \(Z\)- and \(X\)-type merge/split structures, and the spider-fusion rule is the simplest diagrammatic shadow of topological composition.  In the present context, however, the spiders should not remain merely stabilizer-theoretic icons.  They should be read as the qubit-level shadows of genuinely field-theoretic operations.  Figure \ref{fig:liftspider} therefore gives a schematic lift from spider calculus to pair-of-pants and defect-junction pictures.

\begin{figure}[t]
\centering
\setlength{\tabcolsep}{0.8em}
\renewcommand{\arraystretch}{1.1}
\resizebox{0.98\textwidth}{!}{%
\begin{tabular}{ccc}
\begin{tikzpicture}[baseline=(current bounding box.center), line cap=round, line join=round, every node/.style={font=\small}]
  \path[use as bounding box] (-1.6,-1.45) rectangle (1.6,1.25);
  \fill[green!60!black] (0,0) circle (0.30);
  \draw[line width=1.1pt] (-1.20,0.90) -- (-0.20,0.16);
  \draw[line width=1.1pt] (-1.20,-0.90) -- (-0.20,-0.16);
  \draw[line width=1.1pt] (1.20,0.00) -- (0.30,0.00);
  \node at (0,0) {\color{white}\(\alpha\)};
\end{tikzpicture}
&
\begin{tikzpicture}[baseline=(current bounding box.center), line cap=round, line join=round, every node/.style={font=\small}]
  \path[use as bounding box] (-1.6,-1.45) rectangle (1.6,1.25);
  \fill[red!75!black] (0,0) circle (0.30);
  \draw[line width=1.1pt] (-1.20,0.00) -- (-0.30,0.00);
  \draw[line width=1.1pt] (1.20,0.90) -- (0.20,0.16);
  \draw[line width=1.1pt] (1.20,-0.90) -- (0.20,-0.16);
  \node at (0,0) {\color{white}\(\beta\)};
\end{tikzpicture}
&
\begin{tikzpicture}[baseline=(current bounding box.center), line cap=round, line join=round, every node/.style={font=\small}]
  \path[use as bounding box] (-1.9,-1.45) rectangle (5.5,1.25);
  \fill[green!60!black] (-0.70,0) circle (0.23);
  \fill[green!60!black] (0.70,0) circle (0.23);
  \draw[line width=1.0pt] (-1.65,0.78) -- (-0.86,0.12);
  \draw[line width=1.0pt] (-1.65,-0.78) -- (-0.86,-0.12);
  \draw[line width=1.0pt] (-0.47,0) -- (0.47,0);
  \draw[line width=1.0pt] (1.65,0.78) -- (0.86,0.12);
  \draw[line width=1.0pt] (1.65,-0.78) -- (0.86,-0.12);
  \node at (-0.70,0) {\color{white}\scriptsize \(\alpha\)};
  \node at (0.70,0) {\color{white}\scriptsize \(\beta\)};
  \node at (2.45,0) {\Large\(\Longrightarrow\)};
  \fill[green!60!black] (4.00,0) circle (0.4);
  \draw[line width=1.0pt] (3.15,0.78) -- (3.82,0.14);
  \draw[line width=1.0pt] (3.15,-0.78) -- (3.82,-0.14);
  \draw[line width=1.0pt] (4.85,0.78) -- (4.18,0.14);
  \draw[line width=1.0pt] (4.85,-0.78) -- (4.18,-0.14);
  \node at (4.00,0) {\color{white}\scriptsize \(\alpha+\beta\)};
\end{tikzpicture}
\\
\small\textbf{green spider} & \small\textbf{red spider} & \small\textbf{spider fusion} \\
\small\(Z\)-type merge/split & \small\(X\)-type merge/split & \small like-colored spiders fuse \\
\end{tabular}%
}
\caption{Basic colored spider diagrams in the ZX-calculus.  Left: a green spider encoding a \(Z\)-type merge/split structure.  Middle: a red spider encoding the complementary \(X\)-type structure.  Right: the spider-fusion rewrite for like-colored spiders, shown in the green case.  In the present article these spiders serve as stabilizer shadows of more general topological and conformal operations.}
\label{fig:zxspiders}
\end{figure}

\begin{figure}[t]
\centering
\begin{minipage}[t]{0.47\textwidth}
\centering
\textbf{ZX shadow \(\rightsquigarrow\) pair-of-pants cobordism}\par\medskip
\begin{tikzpicture}[x=1cm,y=1cm,line cap=round,line join=round,every node/.style={font=\scriptsize}]
  \path[use as bounding box] (-3.3,-1.9) rectangle (3.0,1.9);
  \fill[green!60!black] (-2.2,0) circle (0.23);
  \draw[line width=1.0pt] (-3.05,0.78) -- (-2.38,0.12);
  \draw[line width=1.0pt] (-3.05,-0.78) -- (-2.38,-0.12);
  \draw[line width=1.0pt] (-1.25,0) -- (-1.98,0);
  \node at (-2.2,0) {\color{white}\(\alpha\)};
  \node[below] at (-2.2,-1.10) {green spider};
  \node at (-0.15,0) {\Large\(\rightsquigarrow\)};
  \draw[line width=1.0pt] (1.05,1.05) .. controls (0.92,0.63) and (0.95,0.25) .. (1.06,-0.02);
  \draw[line width=1.0pt] (1.95,1.05) .. controls (2.08,0.63) and (2.05,0.25) .. (1.94,-0.02);
  \draw[line width=1.0pt] (1.06,-0.02) .. controls (1.22,-0.68) and (1.43,-1.00) .. (1.50,-1.15);
  \draw[line width=1.0pt] (1.94,-0.02) .. controls (1.78,-0.68) and (1.57,-1.00) .. (1.50,-1.15);
  \draw[line width=1.0pt] (1.05,1.05) arc[start angle=180,end angle=360,radius=0.30];
  \draw[line width=1.0pt] (1.95,1.05) arc[start angle=180,end angle=360,radius=0.30];
  \draw[line width=1.0pt] (1.20,-1.15) arc[start angle=180,end angle=360,radius=0.30];
  \node[below,align=center] at (1.50,-1.48) {pair-of-pants\\fusion vertex};
\end{tikzpicture}

\vspace{0.35cm}
\smallskip
\small Green spiders are interpreted as stabilizer shadows of fusion or merge operations in the ambient field theory.
\end{minipage}
\hfill
\begin{minipage}[t]{0.47\textwidth}
\centering
\textbf{ZX shadow \(\rightsquigarrow\) defect junction / condensable boundary}\par\medskip
\begin{tikzpicture}[x=1cm,y=1cm,line cap=round,line join=round,every node/.style={font=\scriptsize}]
  \path[use as bounding box] (-3.3,-1.9) rectangle (3.8,1.9);
  \fill[red!75!black] (-2.2,0) circle (0.23);
  \draw[line width=1.0pt] (-3.05,0) -- (-2.42,0);
  \draw[line width=1.0pt] (-1.32,0.78) -- (-2.04,0.12);
  \draw[line width=1.0pt] (-1.32,-0.78) -- (-2.04,-0.12);
  \node at (-2.2,0) {\color{white}\(\beta\)};
  \node[below] at (-2.2,-1.10) {red spider};
  \node at (-0.25,0) {\Large\(\rightsquigarrow\)};
  \draw[line width=1.0pt] (1.45,0) -- (2.45,0);
  \draw[line width=1.0pt] (1.45,0) -- (0.85,0.88);
  \draw[line width=1.0pt] (1.45,0) -- (0.85,-0.88);
  \draw[line width=1.0pt,dashed] (2.45,0) -- (3.10,0);
  \fill (1.45,0) circle (1.8pt);
  \node[above left] at (0.82,0.88) {sector \(a\)};
  \node[below left] at (0.82,-0.88) {sector \(b\)};
  \node[above] at (2.45,0.08) {sector \(c\)};
  \node[below,align=center] at (1.92,-1.32) {defect junction\\condensable boundary};
\end{tikzpicture}

\smallskip
\small Red spiders are interpreted as shadows of complementary defect-junction, measurement, or wall operations.
\end{minipage}
\caption{A schematic field-theoretic lift of ZX spiders.  Left: a green spider is interpreted as the stabilizer shadow of a pair-of-pants cobordism or fusion vertex.  Right: a red spider is interpreted as the shadow of a complementary defect-junction or condensable-boundary operation.  The lift is not unique; it is model-dependent and encodes the idea that familiar ZX generators arise from more geometric and categorical operations in the ambient field theory.}
\label{fig:liftspider}
\end{figure}

Figures \ref{fig:shadowtriangle}--\ref{fig:liftspider} summarize three aspects of the comparison with ZX.  Figure \ref{fig:shadowtriangle} records the compatibility condition that a concrete lift would have to satisfy.  Figure \ref{fig:zxspiders} records the local spider grammar.  Figure \ref{fig:liftspider} shows how the same local moves may be represented by pair-of-pants operations, fusion vertices, domain-wall junctions, or condensable-boundary manipulations.

The conceptual lift is summarized in Table \ref{tab:lift}.

\begin{table}[htbp]
\centering
\renewcommand{\arraystretch}{1.25}
\begin{tabular}{>{\raggedright\arraybackslash}p{0.23\linewidth}>{\raggedright\arraybackslash}p{0.30\linewidth}>{\raggedright\arraybackslash}p{0.35\linewidth}}
\toprule
\textbf{ZX-calculus} & \textbf{Topological code} & \textbf{Field-theoretic code} \\
\midrule
wire & qubit or logical patch & anyon line, boundary sector, or object of a tensor category \\
spider & merge/split/check operation & pair-of-pants cobordism, fusion vertex, or defect junction \\
color & \(X/Z\)-type complementary structure & dual boundary type, condensable algebra, or complementary charge sector \\
rewrite rule & fault-tolerant equivalence or circuit identity & topological invariance under defect-network deformation \\
phase & Pauli/Clifford phase data & topological spin, charge, conformal weight, or defect holonomy \\
measurement branch & syndrome outcome & measured footprint: fusion channel, topological charge, or field-insertion sector \\
correction & Pauli-frame update & defect-history selection modulo topological equivalence \\
\bottomrule
\end{tabular}
\caption{The lifting of ZX-calculus features to their corresponding structures and operations in topological and field-theoretic QEC.}
\label{tab:lift}
\end{table}

 Here, Green and red spiders are algebraic shadows of pair-of-pants operations.  ZX rewrite rules are stabilizer-level shadows of topological invariance.  Pauli phases are shadows of richer charge, spin, and conformal data, while measurement branches are shadows of fusion-channel decompositions.  However, the table is motivational and is not an assertion that every ZX diagram embeds fully faithfully into every field theory. A precise functorial lift will depend on a number of choices: the phase, its admissible boundaries, the condensable algebra data, and the fragment of ZX under consideration.

The ZX-calculus plays two roles here.  First, it is an established graphical language for stabilizer fault tolerance, so any field-theoretic extension should recover its reasoning in the appropriate abelian qubit sector.  Second, it sets a practical standard for diagrammatic calculi: equivalences should be local, compositional, and checkable from the diagram.  A nonabelian or conformal calculus should provide analogous control for defect networks and conformal-block decompositions.

\begin{principle}[ZX shadow principle]\label{principle:ZXshadow}
A field-theoretic quantum code should admit, whenever an appropriate qubit stabilizer sector is selected, a shadow diagrammatics compatible with ZX reasoning for the corresponding stabilizer processes.  Conversely, ZX diagrams should be regarded as a qubit-stabilizer collapse of a richer, model-dependent calculus of defects, boundaries, and field-theoretic amplitudes.
\end{principle}

The word ``collapse'' does not mean that the ZX-calculus is replaced.  It means that a selected stabilizer sector forgets some of the labels and amplitudes present in the ambient field-theoretic calculus.  In a purely stabilizer setting, diagrammatic equivalence is often enough: ZX diagrams related by the rewrite rules denote the same process.  In the field-theoretic setting, inequivalent defect histories with the same observed footprint may also carry different amplitudes.  The passage is therefore
\begin{equation}\label{eq:zx-to-cft}
\begin{gathered}
\text{ZX rewrite equivalence}
\quad\leadsto\quad
\text{topological defect equivalence}\\
\leadsto\quad
\text{conformal-block-weighted inference}.
\end{gathered}
\end{equation}
Conformal field theory enters at this step by assigning geometry-dependent weights to histories that remain distinct after the topological quotient.

\section{Conformal enhancement and field-theoretic decoder weights}

A topological field theory is insensitive to local geometry, and this insensitivity underlies topological protection: local deformations of a defect network do not change the encoded information.  Decoding, however, may benefit from geometric information.  When the noise process has geometric or energetic structure, the relative likelihood of competing histories may depend on distances, cross-ratios, local curvature, or boundary geometry.  Conformal field theory offers an intermediate regime, retaining strong locality and functoriality while permitting analytic dependence on conformal geometry.

In rational conformal field theory, correlation functions decompose into conformal blocks.  For primary insertions \(\phi_{a_i}(z_i)\), one has schematically
\begin{equation}\label{eq:blocks}
\langle \phi_{a_1}(z_1)\cdots \phi_{a_n}(z_n)\rangle
=\sum_{\alpha,\beta} C_{\alpha\beta}\,\mathcal F_\alpha(z_1,\ldots,z_n)\,
\overline{\mathcal F_\beta(z_1,\ldots,z_n)},
\end{equation}
where \(\alpha\) and \(\beta\) range over internal fusion channels or conformal blocks, and the matrix \(C_{\alpha\beta}\) depends on the chosen full CFT and Hermitian pairing.  In diagonal unitary examples, or after choosing a basis diagonalizing the relevant Hermitian form, this reduces to a sum of squared block norms.  In a chiral theory, one works directly with the vector space of conformal blocks.  In either case, the block label is a natural footprint-compatible hidden fusion-history variable.

This motivates the following definition.

\begin{definition}[Conformal-block likelihood datum]\label{def:CBdecoder}
A \emph{conformal-block likelihood datum} consists of a conformal-block bundle \(\cV\to B\) over a configuration or moduli space of marked curves, a projectively flat connection \(\nabla\), a Hermitian form \(h_z\) on each fibre, a collection of footprint projectors \(P_s(z)\), and a prior or detector likelihood model for the relevant histories.  In a local trivialization, with a chosen fusion-tree basis labelled by channels \(\alpha\), this datum gives normalized weights
\begin{equation}\label{eq:CBposterior}
\Prob(\alpha\mid z,s)
=\frac{w_\alpha(z;s)}{\sum_\beta w_\beta(z;s)},
\end{equation}
where \(z=(z_1,\ldots,z_n)\), \(s\) is the observed footprint, and the weights are built from conformal-block norms, full correlation-function contributions, or a specified physical noise model coupled to the conformal data.  In the simplest block-norm model,
\begin{equation}\label{eq:blocknorm}
 w_\alpha(z;s)=\|\mathcal F_\alpha(z)\|^2_{h_z}\,\chi_s(\alpha),
\end{equation}
with \(\chi_s(\alpha)=1\) when the channel \(\alpha\) is compatible with footprint \(s\), and \(0\) otherwise.
\end{definition}

The Hermitian structure is part of the likelihood datum.  In a unitary modular functor it is supplied by the unitary structure, while in a chiral CFT presentation it depends on normalization conventions and on the relation between chiral blocks and full correlators.  The coordinate-free version of the definition is useful, especially because conformal-block spaces come with many natural bases.  Let \(\cV_z\) be the conformal-block space at the configuration \(z=(z_1,\ldots,z_n)\), equipped with a Hermitian form, and let \(P_s\) be the orthogonal projector onto the subspace compatible with the measured footprint \(s\).  If \(v(z)\in\cV_z\) is the block vector determined by the external insertions and the chosen state, then the intrinsic footprint weight is
\begin{equation}\label{eq:intrinsic-block-weight}
  w_s(z)=\langle v(z),P_s v(z)\rangle.
\end{equation}
When a fusion tree diagonalizes the footprint, \(P_s\) becomes a sum of coordinate projectors and \eqref{eq:intrinsic-block-weight} reduces to a sum of squared block components.

There are physical settings in which this Hermitian interpretation is more than a formal ansatz.  For Moore--Read Pfaffian quasihole states, Bonderson--Gurarie--Nayak prove, under the plasma-screening assumptions of their analysis and for well-separated quasiholes, that the conformal-block wavefunctions are orthogonal with equal constant norms in the physical inner product \cite{BondersonGurarieNayak2011}.  In such a realization, once a prepared state is expanded in that orthonormal channel basis, squared channel amplitudes have the ordinary Born interpretation.  This does not make every pointwise formula of the form \eqref{eq:blocknorm} a universal physical law: the preparation amplitudes, detector model, regime of separation, and physical inner product still have to be specified.  It does show that block-resolved probabilistic weights occur in an explicit Ising-type quantum Hall setting.

\begin{proposition}[Basis covariance of block weights]\label{prop:block-covariance}
Let \(U:\cV_z\to\cV_z\) be a unitary change of conformal-block basis, such as a unitary \(F\)-move in a unitary modular functor.  If the block vector and footprint projector transform simultaneously by
\[
  v(z)\longmapsto Uv(z),
  \qquad
  P_s\longmapsto UP_sU^{-1},
\]
then the weight \(w_s(z)=\langle v(z),P_s v(z)\rangle\) is unchanged.
\end{proposition}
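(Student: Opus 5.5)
The plan is a direct computation with the Hermitian form \(h_z\) on \(\cV_z\). Write \(w_s(z)=\langle v(z),P_sv(z)\rangle\) as in \eqref{eq:intrinsic-block-weight}. After the simultaneous substitution \(v\mapsto Uv\) and \(P_s\mapsto UP_sU^{-1}\), the new weight is
\[
  w_s'(z)=\langle Uv(z),\,UP_sU^{-1}Uv(z)\rangle=\langle Uv(z),\,UP_sv(z)\rangle .
\]
The key step is that \(U\) is unitary for \(h_z\), so \(U^\dagger U=\id\) and hence \(U^{-1}=U^\dagger\). Moving \(U\) across the inner product then gives \(\langle Uv,UP_sv\rangle=\langle v,U^\dagger UP_sv\rangle=\langle v,P_sv\rangle\).

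Two consistency points should be recorded alongside this. First, \(UP_sU^{-1}=UP_sU^\dagger\) is again an orthogonal projector: it is idempotent, and it is self-adjoint because \(P_s\) is. So the transformed datum is still a footprint projector in the sense of Definition~\ref{def:CBdecoder}. Second, in the fusion-tree case the statement specializes to the familiar fact that summing \(|\text{component}|^2\) over the channels compatible with \(s\) does not depend on which unitary basis one uses, provided one also rotates the compatible subspace.

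The only possible obstacle is making sure that ``unitary'' refers to the same Hermitian form \(h_z\) that defines the weight. In a unitary modular functor, the \(F\)-moves are unitary with respect to the structure that supplies \(h_z\), so this is automatic. In a general chiral presentation, one has to assume this compatibility explicitly, which is what the hypothesis of the proposition does. With that settled, the proof is the one-line computation above.
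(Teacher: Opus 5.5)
Your proof is correct and is essentially the paper's own argument: substitute, cancel \(U^{-1}U\), and use unitarity of \(U\) with respect to \(h_z\) to get \(\langle Uv,UP_sv\rangle=\langle v,P_sv\rangle\). Your additional remarks are correct refinements that the paper leaves implicit: that \(UP_sU^{\dagger}\) is again an orthogonal projector, and that unitarity must be taken with respect to the same Hermitian form that defines the weight.
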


\begin{proof}
The statement is the invariance of a matrix coefficient under simultaneous conjugation.  Indeed,
\[
  \langle Uv,UP_sU^{-1}Uv\rangle
  =\langle Uv,UP_s v\rangle
  =\langle v,P_s v\rangle,
\]
where the second equality uses unitarity of \(U\).  Thus the number \(w_s(z)\) depends on the subspace selected by the footprint projector and on the vector \(v(z)\), not on the coordinates used to describe them.  If one changes only the block components and not the projector, one has changed the question being asked.  This is why recoupling must be accompanied by the corresponding transformation of the measured footprint sector.
\end{proof}

\begin{proposition}[Footprint factorization in sewing coordinates]\label{prop:footprint-factorization}
Let \(\cV_{\vec X}\to M_{0,n}\) be a genus-zero conformal-block bundle associated to a unitary modular functor with labels \(\vec X=(X_1,\ldots,X_n)\), equipped with its projectively flat connection and Hermitian structure.  Let \(D_{I|I^c}\subset \overline M_{0,n}\) be the boundary divisor corresponding to a stable partition \(I|I^c\), with \(|I|\ge 2\) and \(|I^c|\ge 2\).  In a formal sewing coordinate \(q\) transverse to this divisor, the modular-functor sewing axiom gives a completed factorization
\begin{equation}\label{eq:modular-factorization-footprint}
  \widehat{\cV}_{\vec X}
  \cong
  \bigoplus_{a\in\Irr(\cC)}
  \cV_{I,a}\,\widehat\otimes\,\cV_{I^c,a^*},
\end{equation}
up to the usual projective factors and conformal-weight powers of the sewing parameter in analytic CFT normalizations.  The summand indexed by \(a\) is the conformal-block footprint sector in which the cluster \(I\) carries total charge \(a\) across the node.
\end{proposition}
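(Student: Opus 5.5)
The plan is to derive the factorization \eqref{eq:modular-factorization-footprint} directly from the sewing (gluing) axiom of the modular functor. I will then identify each summand with the categorical footprint sector of Theorem~\ref{thm:categorical-footprint-decomp}. First, fix the stable partition \(I|I^c\) and the corresponding nodal curve in \(D_{I|I^c}\). Normalizing the node gives two marked components \(\bP^1\). The first carries the points of \(I\) plus a new point \(p\); the second carries the points of \(I^c\) plus a new point \(p'\). Stability is guaranteed by \(|I|,|I^c|\ge 2\), since each component then has at least three marked points. The sewing axiom of a (unitary) modular functor supplies a canonical isomorphism on the boundary fibre:
\[
  \cV_{\vec X}\big|_{D_{I|I^c}}\cong\bigoplus_{a\in\Irr(\cC)}\cV_{I,a}\otimes\cV_{I^c,a^*},
\]
where \(a\) is inserted at \(p\) and \(a^*\) at \(p'\). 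Unitarity makes this an orthogonal sum.

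Second, I will extend this isomorphism off the divisor in a formal neighbourhood. Use the plumbing family \(xy=q\) with sewing coordinate \(q\). The projectively flat connection has a regular singularity along \(D_{I|I^c}\), and its residue on the \(a\)-summand acts by the scalar \(h_a\), up to the universal central-charge shift. Solving the connection formally in \(q\) then extends each summand to a horizontal subbundle, normalized by \(q^{h_a}\) times a formal power series. The result is the completed isomorphism, with \(\widehat\otimes\) accounting for the power-series completion. The projective ambiguity and the powers \(q^{h_a}\) are precisely the normalization factors mentioned in the statement.

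Third, I will identify the summands with footprint sectors. At the level of topological (monodromy) data, the modular functor restricted to genus zero is the fusion space \(\Hom_\cC(\one,X_1\otimes\cdots\otimes X_n)\). Under this identification the sewing decomposition is exactly the semisimple resolution of the intermediate object \(X_I\) used in the proof of Theorem~\ref{thm:categorical-footprint-decomp}, with \(\cV_{I,a}\leftrightarrow\Hom_\cC(a,X_I)\) and \(\cV_{I^c,a^*}\leftrightarrow\Hom_\cC(a^*,X_{I^c})\). The orthogonal projector onto the \(a\)-summand is therefore \(P_{I,a}\). It commutes with the monodromy around \(D_{I|I^c}\), which acts on that summand by \(e^{2\pi i h_a}\) up to projective phase. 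This justifies calling it the sector in which the cluster carries charge \(a\) across the node.

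The main obstacle I expect is the second step. The eigenvalues \(h_a\) may differ by integers for distinct \(a\) (resonance), so the formal solution need not split canonically. I would first try to avoid this by defining the splitting algebraically from the sewing axiom itself, which the modular-functor axioms provide as a canonical isomorphism on the formal neighbourhood. The connection is then needed only to fix the \(q\)-normalization. If the splitting is canonical only up to the residue's generalized eigenspaces, this does no harm: the statement already allows the usual projective and conformal-weight factors.
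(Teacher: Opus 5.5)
Your proposal is correct and is essentially the paper's argument. The paper's proof simply invokes the modular-functor sewing axiom near $D_{I|I^c}$, notes that the $q$-completion, conformal-weight powers and projective factors do not alter the sector decomposition, and identifies the neck label $a$ with the total charge of $I$ as in Theorem~\ref{thm:categorical-footprint-decomp}. Your regular-singularity analysis only adds normalization detail.

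Your fallback of reading the splitting off the sewing axiom on the formal neighbourhood is the right one, and it is genuinely needed. You should therefore drop the closing remark that a splitting canonical only up to the residue's eigenspaces ``does no harm''. The connection alone cannot canonically separate distinct labels whose weights coincide or differ by an integer; for example, $j=0$ and $j=2$ in $SU(2)_4$ both occur in $1\otimes 1$ with $h_2-h_0=1$. The scalar normalizations allowed in the statement do not repair such a merged sector.
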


\begin{proof}
The statement is the sewing axiom of a modular functor near the boundary stratum determined by the stable partition \(I|I^c\).  Degeneration along this partition produces two stable components meeting at a node.  Gluing them requires a sum over a simple label \(a\) on one branch and the dual label \(a^*\) on the other, yielding the direct sum in \eqref{eq:modular-factorization-footprint}.

The completion in the sewing coordinate \(q\) indicates that the statement is local near the boundary divisor in \(\overline M_{0,n}\).  In analytic conformal-field-theoretic normalizations, one also sees powers of \(q\) determined by conformal weights, together with projective factors coming from the chosen normalization of the connection and Hermitian structure.  These factors do not change the sector decomposition.  The label \(a\) transmitted through the neck is exactly the total charge of the cluster \(I\), so the summand indexed by \(a\) is the conformal-block version of the footprint sector.  Thus the sewing axiom is the analytic counterpart of the categorical total-charge decomposition in Theorem \ref{thm:categorical-footprint-decomp}.
\end{proof}

\begin{proposition}[Sewing asymptotics as likelihood separation]\label{prop:sewing-likelihood-separation}
Assume, in the setting of Proposition~\ref{prop:footprint-factorization}, that a block-norm likelihood model is expressed in a sewing coordinate \(q\) and that the contribution of the channel \(a\) has leading form
\[
  w_a(q)=C_a |q|^{2\Delta_a}(1+O(|q|))
  \qquad(C_a>0),
\]
where \(\Delta_a\) is the exponent determined by the conformal weight of the intermediate label and by the normalization of the two components.  In the standard sewing/OPE normalization used later, this exponent is the corresponding weight difference \(h_a-h_I-h_{I^c}\), up to any common convention-dependent shift already absorbed into the prefactor.  Then for two compatible channels \(a\) and \(b\), the log-likelihood ratio satisfies
\[
  \log\frac{w_a(q)}{w_b(q)}
  =\log\frac{C_a}{C_b}+2(\Delta_a-\Delta_b)\log |q|+O(|q|)
  \qquad(q\to 0).
\]
Consequently, near a degeneration divisor, the leading conformal exponents determine which footprint sector is favoured unless the leading exponents coincide.
\end{proposition}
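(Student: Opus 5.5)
The plan is to compute the ratio directly from the assumed leading forms and then take logarithms. By hypothesis, for small \(|q|\),
\[
  \frac{w_a(q)}{w_b(q)}
  =\frac{C_a}{C_b}\,|q|^{2(\Delta_a-\Delta_b)}\,
  \frac{1+O(|q|)}{1+O(|q|)} .
\]
Since \(C_a,C_b>0\) and both correction factors tend to \(1\), both weights are strictly positive on a punctured neighbourhood of \(q=0\). The logarithm is therefore defined there. Taking it, the first two factors give \(\log(C_a/C_b)+2(\Delta_a-\Delta_b)\log|q|\).

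The only analytic point is controlling the error term. Write \(1+O(|q|)=1+\epsilon_a(q)\) with \(|\epsilon_a(q)|\le K|q|\) for \(|q|\le r\), and similarly for \(\epsilon_b\). For \(|q|\) small enough that \(|\epsilon_a|,|\epsilon_b|\le 1/2\), the elementary bound \(|\log(1+x)|\le 2|x|\) for \(|x|\le 1/2\) gives
\[
  \log(1+\epsilon_a)-\log(1+\epsilon_b)=O(|q|).
\]
Adding this to the main terms yields the displayed asymptotic formula. The identification of \(\Delta_a\) with \(h_a-h_I-h_{I^c}\) up to a common shift needs no proof. It is a normalization remark, and any common shift cancels in \(\Delta_a-\Delta_b\). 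This cancellation is the reason the statement is insensitive to that convention.

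The consequence follows by inspecting the dominant term. If \(\Delta_a\neq\Delta_b\), then \(2(\Delta_a-\Delta_b)\log|q|\) diverges as \(q\to0\), while the other terms stay bounded. If \(\Delta_a<\Delta_b\), the log-ratio tends to \(+\infty\), because \(\log|q|\to-\infty\). Hence \(w_a/w_b\to\infty\) and the posterior of Definition~\ref{def:CBdecoder}, restricted to these compatible channels, concentrates on \(a\). The opposite inequality favours \(b\). When \(\Delta_a=\Delta_b\), the ratio tends to \(C_a/C_b\), so the subleading data decide, which is the stated exception.

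There is no substantial obstacle. The only care needed is to note that the channels \(a\) and \(b\) are the summands of the factorization in Proposition~\ref{prop:footprint-factorization}, so the weights refer to well-defined footprint sectors. The \(O(|q|)\) terms must also be uniform on a neighbourhood of the divisor, and this is included in the hypothesis.
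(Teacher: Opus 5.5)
Your proposal is correct and follows the paper's argument: take logarithms of the assumed asymptotic forms, absorb \(\log(1+O(|q|))\) into \(O(|q|)\), subtract, and read off which channel dominates from \(\log|q|\to-\infty\) when \(\Delta_a\neq\Delta_b\). Your explicit bound on \(\log(1+x)\), the positivity check on a punctured neighbourhood, and the remark that a common shift cancels in \(\Delta_a-\Delta_b\) spell out details that the paper leaves implicit.
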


\begin{proof}
Taking logarithms of the assumed asymptotic expression gives
\[
  \log w_a(q)=\log C_a+2\Delta_a\log |q|+\log(1+O(|q|)).
\]
Since \(\log(1+O(|q|))=O(|q|)\) as \(q\to 0\), subtracting the corresponding formula for \(b\) gives the displayed expression.  The final assertion follows because \(\log |q|\to -\infty\) near the boundary.  If \(\Delta_a\ne\Delta_b\), the logarithmic term dominates the bounded constant term.  If the exponents agree, then the leading powers do not separate the channels and the constants or higher-order terms in the conformal blocks become relevant.
\end{proof}

The calculation shows exactly what the conformal enhancement adds.  A topological footprint records the channel label \(a\), but does not rank compatible histories sharing the same measured local data.  Conformal blocks attach analytic weights to those alternatives.  Near a boundary divisor, the sewing parameter \(q\) supplies a geometric scale and channels with different exponents separate at order \(|\log |q||\).  Away from the boundary, one must use the full block functions.

The normalization in \eqref{eq:CBposterior} is a finite-dimensional Bayesian update.  Let \(A\) be the set of channels compatible with the external insertions and let \(A_s\subset A\) be the subset compatible with the measured footprint.  In the sharp block-norm model,
\[
  \Prob(\alpha\mid z,s)=0\quad (\alpha\notin A_s),
  \qquad
  \Prob(\alpha\mid z,s)=\frac{\|\mathcal F_\alpha(z)\|^2}{\sum_{\beta\in A_s}\|\mathcal F_\beta(z)\|^2}
  \quad (\alpha\in A_s).
\]
If the detector is imperfect, the sharp subset \(A_s\) is replaced by likelihoods \(L(s\mid\alpha)\), as in ordinary statistical decoding.  Conformal blocks do not remove the need for a noise model.  They supply a geometry-dependent family of channel weights that can be combined with one.

A second elementary consistency check concerns changes of fusion tree.  If \(F\) is the recoupling matrix from one channel basis to another, then a vector of block amplitudes transforms by
\[
  \mathcal F^{(t)}_\beta(z)=\sum_\alpha F_{\beta\alpha}\,\mathcal F^{(s)}_\alpha(z).
\]
A footprint associated to the \(t\)-channel should therefore be compared with amplitudes in the \(t\)-basis, not with un-transformed \(s\)-channel amplitudes.  This is the conformal-block analogue of changing measurement basis before interpreting a stabilizer outcome.  It also prevents a common conceptual mistake: the footprint is local, but its coordinate description depends on a global choice of fusion tree.

The notation in \eqref{eq:CBposterior} suppresses two choices that are important in applications.  First, a conformal-block space is usually a vector bundle over a configuration or moduli space of marked curves rather than a single fixed vector space.  To compare weights at different points \((z_i)\), one should specify a trivialization, a Hermitian structure, or a projectively flat connection.  In unitary rational CFTs there are natural inner products and modular-functor structures, but the exact normalization conventions matter.  For the present purpose, only ratios of compatible block weights are used, and common prefactors cancel in the elementary Ising example below.

Second, the indicator \(\chi_s(\alpha)\) represents the compatibility of a hidden channel \(\alpha\) with the observed footprint \(s\).  In a noiseless idealization this is a sharp constraint.  In a realistic setting it should be replaced by a likelihood \(L(s\mid\alpha)\), accounting for measurement error, imperfect localization, and detector noise.  Thus a more general decoder would have
\[
  w_\alpha(z;s)=\|\mathcal F_\alpha(z)\|^2 L(s\mid\alpha)\pi(\alpha),
\]
where \(\pi(\alpha)\) is a prior coming from the physical noise model.  Definition \ref{def:CBdecoder} is the essential field-theoretic core obtained by setting \(L\) to a sharp compatibility function and absorbing simple priors into the block weights.

The word ``posterior'' in \eqref{eq:CBposterior} is conditional on the operational model.  A conformal block is not, by itself, an observed probability distribution.  It contributes to a decoding probability only after amplitudes, measurement events, and physical noise have been related.  Rational CFT nevertheless supplies two pieces of data unavailable in a purely topological decoder: a finite set of channel variables and analytic dependence on geometry.

The definition keeps only the data needed here.  A realistic decoder would combine conformal-block weights with a physical noise model and account for measurement imperfections, leakage, and finite-temperature processes.  The CFT contribution is a structured family of priors or likelihood factors on fusion histories.  In the topological limit, the dependence on \((z_i)\) disappears or becomes locally constant.  In the conformal regime, histories with the same topological footprint may still receive different geometric weights.

\begin{remark}[Conformal geometry as side information]
A conformal-block likelihood model should not be interpreted as replacing the standard statistical mechanics of decoding.  It supplies additional field-theoretic side information.  In a physical anyonic medium, the positions of quasiparticles, boundary defects, or measurement events are not irrelevant to the likelihood of a fusion history.  The conformal-block formalism is a mathematically natural way to package this geometry dependence.
\end{remark}

One may also describe the decoder in spacetime.  Let \(M\) be a spacetime cobordism from an initial code surface to a final code surface, decorated by an error network \(N\).  The field theory assigns an amplitude
\begin{equation}\label{eq:path-integral-decoder}
Z_\cF(M,N):Z_\cF(\Sigma_{\mathrm{in}},\cD_{\mathrm{in}})
\longrightarrow
Z_\cF(\Sigma_{\mathrm{out}},\cD_{\mathrm{out}}).
\end{equation}
Given an observed footprint \(s\), one sums or optimizes over networks \(N\) with \(\Foot(N)=s\).  In a semisimple TQFT this becomes a finite state-sum or ribbon-graph evaluation problem.  In a rational CFT it becomes an inference problem over conformal blocks.  In either case, the decoder weight is computed from a path-integral or state-sum object conditioned on observations; a complete decoder also includes the measurement model and the recovery decision rule.

The conformal enhancement can be stated directly.  Topological data determine which fusion histories are allowed; conformal blocks vary those histories analytically with the marked points; and the decoder combines the resulting weights with the observed footprint and a physical noise model.  Once a recovery decision rule is supplied, the same classical measurement record is interpreted in a hypothesis space that now varies over conformal geometry.

The topological limit gives a consistency check.  If geometric dependence is ignored, or only locally constant modular-functor transport is retained, histories in the same topological class receive the same field-theoretic weight.  The likelihood model then reduces to inference over fusion trees and topological charges.  When puncture or defect positions are known and physically relevant, conformal geometry can further distinguish histories with the same topological footprint.  Conformal decoding is therefore an enhancement of topological decoding, not a replacement for it.

\section{Worked example: the Ising conformal-block qubit}\label{sec:ising-example}

We turn to explicit Ising calculations.  The first is the minimal four-puncture conformal-block qubit.  It should be read as a local model of fusion-space encoding, recoupling, diagnostic footprint measurement, and geometry-sensitive likelihoods rather than as a high-distance code.  We then enlarge to six \(\sigma\)-punctures and choose a proper two-dimensional code subspace inside the four-dimensional vacuum fusion space.  That additional redundancy permits an actual syndrome-admissible pair-charge measurement and an exact conditional recovery for a specified local bilinear error.  Neither finite example makes a scalable threshold claim.  Scalability is addressed separately in Section~\ref{sec:scalable}, where the Peierls-type criterion is formulated for growing families.  We first analyze the four-puncture logical space, recoupling and braid operations, and complementary diagnostics; then we give the six-puncture correction example; finally we return to the four-point conformal blocks and their geometry-sensitive weights.  The Ising modular category has simple objects
\begin{equation}\label{eq:ising-objects}
\one,\qquad \sigma,\qquad \psi,
\end{equation}
with fusion rules
\begin{equation}\label{eq:ising-fusion}
\sigma\otimes\sigma=\one\oplus\psi,
\qquad
\sigma\otimes\psi=\sigma,
\qquad
\psi\otimes\psi=\one.
\end{equation}
The object \(\sigma\) is the nonabelian anyon.  The object \(\psi\) is a fermion.  The quantum dimensions are
\begin{equation}\label{eq:ising-dims}
d_\one=1,
\qquad
 d_\psi=1,
\qquad
 d_\sigma=\sqrt 2.
\end{equation}
This category appears in the Moore--Read/Ising anyon context and in Kitaev's honeycomb model in the appropriate nonabelian phase \cite{MooreRead1991,ReadMoore1992,Kitaev2006}.

The fusion rules already show the essential nonclassical feature.  Two \(\sigma\)-anyons do not have a single deterministic product: their pair can carry either vacuum charge or fermion charge.  A footprint measurement of a pair is therefore a measurement of a fusion channel.  The code space below is built precisely from this ambiguity.  It is small enough to compute explicitly, but it is already nonabelian in the sense relevant to the paper: the hidden variable is not a bit flip at a site but an internal fusion channel of a field-theoretic state space.

For orientation, one can count the dimension directly.  Since
\[
  \sigma\otimes\sigma=\one\oplus\psi,
\]
the tensor product of four \(\sigma\)'s contains a vacuum component in two ways:
\[
  ((\sigma\sigma)_\one(\sigma\sigma)_\one)_\one,
  \qquad
  ((\sigma\sigma)_\psi(\sigma\sigma)_\psi)_\one.
\]
The mixed possibilities \((\one,\psi)\) and \((\psi,\one)\) do not have total vacuum charge.  Thus the vacuum fusion space is two-dimensional.  This elementary count is the categorical analogue of saying that a code has one logical qubit.

Consider four \(\sigma\)-punctures on the sphere or disk with total vacuum charge.  The associated logical space is
\begin{equation}\label{eq:ising-logical}
\cH_L=\Hom(\one,\sigma^{\otimes 4}).
\end{equation}
Here the protected space is the entire four-anyon vacuum fusion space, so the code projection is \(P_4=I_{\cH_L}\).

\begin{corollary}[No nontrivial sharp projective syndrome on the full four-\(\sigma\) space]\label{cor:four-sigma-full-space-trivial-syndrome}
Let \(\cA_{\mathrm{fp}}=\operatorname{span}\{\Pi_s\}\) be a sharp projective footprint measurement algebra on \(\cH_L=\Hom(\one,\sigma^{\otimes4})\).  If it is syndrome-admissible in the sharp projective sense of Definition~\ref{def:syndrome-admissible}(i) with respect to the full-space code projection \(P_4=I_{\cH_L}\), then every \(\Pi_s\) is either \(0\) or \(I_{\cH_L}\).  Hence any nontrivial sharp projective footprint measurement on this full protected space is diagnostic rather than syndrome-admissible.
\end{corollary}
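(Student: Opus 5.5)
The plan is to apply the sharp projective condition of Definition~\ref{def:syndrome-admissible}(i) directly with \(P=P_4=I_{\cH_L}\). Syndrome-admissibility in the sharp sense gives a distinguished sector \(s_0\in S\) with
\[
  \Pi_{s_0}P_4=P_4,\qquad \Pi_sP_4=0\quad(s\neq s_0).
\]
Since \(P_4\) is the identity on the measured Hilbert space \(\cH_L\), these identities read \(\Pi_{s_0}=I_{\cH_L}\) and \(\Pi_s=0\) for every \(s\neq s_0\). This already proves the first assertion: every sector projector is either \(0\) or \(I_{\cH_L}\).

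For the second assertion, I would call a sharp projective footprint measurement on \(\cH_L\) \emph{nontrivial} if some \(\Pi_s\) is neither \(0\) nor \(I_{\cH_L}\). The example to keep in mind is the \((12)\)-pair-charge measurement, whose projectors \(P_{(12),\one}\) and \(P_{(12),\psi}\) have rank one by Theorem~\ref{thm:categorical-footprint-decomp} and the dimension count above. The contrapositive of the first part shows that such a measurement cannot satisfy Definition~\ref{def:syndrome-admissible}(i) with respect to \(P_4\). It remains a legitimate element of a commuting or diagnostic footprint algebra in the sense of Definition~\ref{def:footprint-algebra}, so it is diagnostic rather than syndrome-admissible.

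Lemma~\ref{lem:no-error-safe} gives the operational reading. A nontrivial projective measurement applied to the full code would be either state-dependent or disturbing. For instance, the \((12)\)-charge projects onto the \(Z\)-eigenbasis identified in Proposition~\ref{prop:ising-clifford-shadow}, and so reveals logical information.

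There is no real obstacle. The one point needing care is the convention that the projectors \(\Pi_s\) act on, and sum to the identity of, the measured space \(\cH_L\) itself, rather than some larger ambient space. With that convention, \(\Pi_sP_4=\Pi_s\), and the argument goes through.
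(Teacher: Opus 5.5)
Your proposal is correct and follows essentially the same route as the paper: the sharp condition of Definition~\ref{def:syndrome-admissible}(i), applied with \(P_4=I_{\cH_L}\), forces \(\Pi_s=\epsilon_s I_{\cH_L}\) with \(\epsilon_s\in\{0,1\}\). The paper then uses orthogonality and \(\sum_s\Pi_s=I_{\cH_L}\) to conclude that exactly one projector is nonzero. In your version that conclusion comes directly from the distinguished sector \(s_0\).
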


\begin{proof}
Condition~(i) of Definition~\ref{def:syndrome-admissible}, equivalently Lemma~\ref{lem:no-error-safe}, gives \(\Pi_sP_4=\epsilon_sP_4\) with \(\epsilon_s\in\{0,1\}\).  Since \(P_4=I_{\cH_L}\), one has \(\Pi_s=\epsilon_sI_{\cH_L}\).  Orthogonality and \(\sum_s\Pi_s=I_{\cH_L}\) leave exactly one nonzero outcome projector.
\end{proof}

Thus redundancy is necessary, not merely helpful, for obtaining a nontrivial sharp projective syndrome from this type of footprint measurement.  This is why the six-\(\sigma\) example below passes to a proper subspace of the ambient vacuum fusion space.

The notation in the next display is standard in the anyon literature, but it is worth spelling out why the displayed fusion trees are literally vectors in \(\cH_L\).  Since the Ising category is semisimple and all relevant fusion multiplicities are either \(0\) or \(1\), we may choose normalized nonzero splitting morphisms
\begin{equation}\label{eq:ising-splittings}
  i_a\in \Hom(a,\sigma\otimes\sigma),
  \qquad a\in\{\one,\psi\},
\end{equation}
identifying the two simple summands \(a\subset \sigma\otimes\sigma\).  Since \(\one^*\cong \one\) and \(\psi^*\cong \psi\), choose also the coevaluation morphism
\begin{equation}\label{eq:ising-coev}
  \mathrm{coev}_a\in \Hom(\one,a\otimes a).
\end{equation}
Up to the associator, which we suppress in the usual graphical convention, the composite
\begin{equation}\label{eq:ising-basis-morphism}
  e_a
  :=(i_a\otimes i_a)\circ \mathrm{coev}_a
  \in
  \Hom\bigl(\one,(\sigma\otimes\sigma)\otimes(\sigma\otimes\sigma)\bigr)
  \cong \Hom(\one,\sigma^{\otimes4})
\end{equation}
is therefore an honest morphism from the tensor unit into four \(\sigma\)-labels.  The two cases \(a=\one\) and \(a=\psi\) are precisely the two fusion trees drawn below.  Different normalized choices of the one-dimensional splitting and coevaluation morphisms change these vectors only by phases.  The associated channel projectors and footprint measurements are therefore independent of these harmless basis conventions.

This also gives a direct dimension count.  Expanding the two pairs first, one obtains
\begin{align}\label{eq:ising-hom-decomp}
\Hom\bigl(\one,(\sigma\otimes\sigma)\otimes(\sigma\otimes\sigma)\bigr)
&\cong
\bigoplus_{a,b\in\{\one,\psi\}}
\Hom(\one,a\otimes b)
\otimes \Hom(a,\sigma\otimes\sigma)
\otimes \Hom(b,\sigma\otimes\sigma).
\end{align}
The last two factors are one-dimensional for \(a,b\in\{\one,\psi\}\), and \(\Hom(\one,a\otimes b)\) is one-dimensional exactly when \(b\cong a^*\).  Since both \(\one\) and \(\psi\) are self-dual, only \((a,b)=(\one,\one)\) and \((a,b)=(\psi,\psi)\) contribute.  Hence \(\cH_L\) is two-dimensional, with basis \(e_\one,e_\psi\).

With this morphism-level interpretation understood, we write
\begin{align}
|0_L\rangle
&:=e_\one
=\bigl|((\sigma\sigma)_\one(\sigma\sigma)_\one)_\one\bigr\rangle,
\label{eq:ising-zero}\\
|1_L\rangle
&:=e_\psi
=\bigl|((\sigma\sigma)_\psi(\sigma\sigma)_\psi)_\one\bigr\rangle.
\label{eq:ising-one}
\end{align}
Thus the logical qubit is a fusion-channel qubit.  The two logical basis states differ by whether the first pair splits through, or equivalently is measured to fuse to, the vacuum channel \(\one\) or to the fermion channel \(\psi\), with the second pair constrained to match so that the total charge is vacuum.  This basis is illustrated in Figure \ref{fig:isingqubit}.

\begin{figure}[t]
\centering
\begin{minipage}{0.47\linewidth}
\centering
\begin{tikzpicture}[scale=0.92, every node/.style={font=\small}]
\foreach \x/\lab in {0/1,1.6/2,3.2/3,4.8/4}{
  \fill (\x,0) circle (2.2pt);
  \node[below] at (\x,-0.05) {$z_{\lab}$};
  \node[above] at (\x,0.20) {$\sigma$};
}
\draw[line width=0.8pt] (0,0.15) .. controls (0.8,1.0) and (0.8,1.0) .. (1.6,0.15);
\draw[line width=0.8pt] (3.2,0.15) .. controls (4.0,1.0) and (4.0,1.0) .. (4.8,0.15);
\draw[line width=0.8pt] (0.8,1.35) -- (2.4,2.15);
\draw[line width=0.8pt] (4.0,1.35) -- (2.4,2.15);
\node at (0.8,1.22) {$\one$};
\node at (4.0,1.22) {$\one$};
\node at (2.4,2.40) {$\one$};
\node at (2.4,2.82) {$|0_L\rangle$};
\end{tikzpicture}
\end{minipage}
\hfill
\begin{minipage}{0.47\linewidth}
\centering
\begin{tikzpicture}[scale=0.92, every node/.style={font=\small}]
\foreach \x/\lab in {0/1,1.6/2,3.2/3,4.8/4}{
  \fill (\x,0) circle (2.2pt);
  \node[below] at (\x,-0.05) {$z_{\lab}$};
  \node[above] at (\x,0.20) {$\sigma$};
}
\draw[line width=0.8pt] (0,0.15) .. controls (0.8,1.0) and (0.8,1.0) .. (1.6,0.15);
\draw[line width=0.8pt] (3.2,0.15) .. controls (4.0,1.0) and (4.0,1.0) .. (4.8,0.15);
\draw[line width=0.8pt] (0.8,1.35) -- (2.4,2.15);
\draw[line width=0.8pt] (4.0,1.35) -- (2.4,2.15);
\node at (0.8,1.22) {$\psi$};
\node at (4.0,1.22) {$\psi$};
\node at (2.4,2.40) {$\one$};
\node at (2.4,2.82) {$|1_L\rangle$};
\end{tikzpicture}
\end{minipage}
\caption{The four-\(\sigma\) Ising conformal-block qubit.  Four \(\sigma\)-punctures with total vacuum charge support a two-dimensional fusion space.  In the displayed fusion basis, \(|0_L\rangle\) and \(|1_L\rangle\) are distinguished by whether the pairs \((1,2)\) and \((3,4)\) fuse through the vacuum channel \(\one\) or the fermion channel \(\psi\), while the total charge remains \(\one\).}
\label{fig:isingqubit}
\end{figure}

The nontrivial associativity move on three \(\sigma\)'s is
\begin{equation}\label{eq:Fising}
F_{\sigma\sigma\sigma}^{\sigma}
=\frac{1}{\sqrt 2}
\begin{pmatrix}
1&1\\
1&-1
\end{pmatrix},
\end{equation}
where the basis is indexed by the intermediate channels \(\one\) and \(\psi\).  Thus a change of fusion tree acts as a Hadamard gate on the fusion-channel qubit.  This is already a small but striking manifestation of the ZX shadow principle: in the Ising theory, a purely topological recoupling move acts like a familiar Clifford operation.

The relevant braiding eigenvalues for two \(\sigma\)'s are, up to conventional phase choices,
\begin{equation}\label{eq:Rising}
R_{\sigma\sigma}^{\one}=e^{-\pi i/8},
\qquad
R_{\sigma\sigma}^{\psi}=e^{3\pi i/8}.
\end{equation}
Braiding and fusion therefore generate a Clifford-level representation, but not a universal gate set by braiding alone.  Universality is not needed for the present example.  What matters here is that encoding, footprint measurement, and correction can all be expressed in the same field-theoretic language.

Let us make the braid action explicit.  In the basis \((|0_L\rangle,|1_L\rangle)\), the braid exchanging the first two \(\sigma\)'s is diagonal:
\begin{equation}\label{eq:ising-braid1}
  B_1=\begin{pmatrix}
  R_{\sigma\sigma}^{\one}&0\\
  0&R_{\sigma\sigma}^{\psi}
  \end{pmatrix}
  = e^{-\pi i/8}
  \begin{pmatrix}
  1&0\\0&i
  \end{pmatrix}.
\end{equation}
Up to the global phase \(e^{-\pi i/8}\), this is the phase gate.  The braid exchanging the middle two \(\sigma\)'s is obtained by recoupling, applying the same diagonal braid, and recoupling back:
\begin{equation}\label{eq:ising-braid2}
  B_2 = F_{\sigma\sigma\sigma}^{\sigma} B_1
  F_{\sigma\sigma\sigma}^{\sigma}
  = e^{-\pi i/8}\frac{1}{2}
  \begin{pmatrix}
  1+i&1-i\\
  1-i&1+i
  \end{pmatrix}.
\end{equation}
Equations \eqref{eq:ising-braid1} and \eqref{eq:ising-braid2} show, in matrix form, how topological recoupling and braiding become logical Clifford operations on the fusion qubit.  The same matrices also govern how a local footprint measurement in one pairing is represented in another pairing.

The matrices \(F\) and \(R\) also illustrate the distinction between global and relative phase.  The common phase in a braid representation has no effect on a projective logical state.  The relative phase between the \(\one\)- and \(\psi\)-channels, however, is physically meaningful on the encoded qubit.  This is the same distinction that appears in elementary circuit language between global phase and a logical phase gate, but here it arises from topological spin and braiding.  In the present paper the emphasis is not on using braids to perform a universal computation.  It is on the fact that the same field-theoretic data governing braiding and recoupling also governs footprint measurement and decoding.

We may therefore summarize the four-puncture code by the dictionary
\[
\begin{array}{rcl}
\text{logical basis} &:& \text{intermediate fusion channel } \one \text{ or } \psi\\
\text{basis change} &:& F\text{-move}\\
\text{phase information} &:& R\text{-symbols and topological spins}\\
\text{syndrome/footprint} &:& \text{measured local fusion channel}\\
\text{decoder weight} &:& \text{conformal-block norm or related likelihood}
\end{array}
\]
This dictionary has a limited purpose.  It gives a fully explicit local model of the framework; it does not endow that model with macroscopic distance.  The next task is therefore to pass from logical structure to observable structure.  We do this in two steps: first by identifying the local fusion footprints themselves, and then by exhibiting a stabilizer-shadow circuit that extracts them in ordinary ancilla language.

\subsection{Diagnostic pair-charge footprints in the four-\texorpdfstring{\(\sigma\)}{sigma} qubit}

In the four-\(\sigma\) code, a local measurement of the total charge of a pair of nearby \(\sigma\)'s is a fusion-channel measurement.  If the measured pair is \((1,2)\), the possible outcomes are \(\one\) and \(\psi\), corresponding to the logical basis \eqref{eq:ising-zero} and \eqref{eq:ising-one}.  If the measured pair is \((2,3)\), then the measurement is diagonal in a different fusion basis related to the first by the \(F\)-matrix \eqref{eq:Fising}.  Thus different pair-charge footprint measurements are related by topological recoupling.  In the logical basis \((|0_L\rangle,|1_L\rangle)\), it is convenient to write
\begin{equation}\label{eq:Zf12}
Z_f^{(12)} := \Pi_{\one}^{(12)}-\Pi_{\psi}^{(12)}
= \begin{pmatrix}1&0\\0&-1\end{pmatrix},
\end{equation}
where \(\Pi_{\one}^{(12)}\) and \(\Pi_{\psi}^{(12)}\) are the projectors onto the two fusion channels of the pair \((1,2)\).  Measuring the footprint of the pair \((2,3)\) gives the recoupled observable
\begin{equation}\label{eq:Zf23}
Z_f^{(23)} := F\,Z_f^{(12)}\,F^{-1}
= \begin{pmatrix}0&1\\1&0\end{pmatrix}.
\end{equation}
Thus the two simplest local footprints in the Ising qubit are represented by the Pauli-type pair \((Z,X)\) on the logical fusion qubit.  This is a precise sense in which the stabilizer shadow of the field-theoretic syndrome calculus becomes visible in ordinary circuit language.

The projectors themselves make the noncommuting nature of the two pair measurements transparent.  In the \((12)\)-fusion basis one has
\[
  \Pi_{\one}^{(12)}=\frac{1}{2}(I+Z),\qquad
  \Pi_{\psi}^{(12)}=\frac{1}{2}(I-Z).
\]
After recoupling, the \((23)\)-projectors are
\[
  \Pi_{\one}^{(23)}=F\Pi_{\one}^{(12)}F^{-1}=\frac{1}{2}(I+X),\qquad
  \Pi_{\psi}^{(23)}=F\Pi_{\psi}^{(12)}F^{-1}=\frac{1}{2}(I-X).
\]
Thus \(\Pi_{\one}^{(12)}\) and \(\Pi_{\one}^{(23)}\) do not commute.  Their commutator is
\[
  [\Pi_{\one}^{(12)},\Pi_{\one}^{(23)}]
  =\frac{1}{4}[Z,X]
  =\frac{i}{2}Y,
\]
up to the usual Pauli convention for \(Y\).  This is a two-dimensional matrix calculation, but it carries the conceptual content of the example: overlapping local fusion footprints are not simultaneously classical data.  They become Pauli observables only after choosing a fusion basis and taking the stabilizer shadow.

In the Ising example, this shadow can be drawn explicitly.  Figure~\ref{fig:ising-circuit} gives an ancilla-based circuit that extracts two complementary local footprints and displays the role of the \(F\)-move as the bridge between their measurement bases.

\begin{figure}[t]
\centering
\begin{tikzpicture}[x=0.95cm,y=0.92cm, line cap=round, line join=round, every node/.style={font=\small}]
  \path[use as bounding box] (-0.4,-1.45) rectangle (13.1,3.25);
  % wire labels
  \node[left, fill=white, inner sep=1.2pt] at (0,2) {$|0\rangle_{a_{12}}$};
  \node[left, fill=white, inner sep=1.2pt] at (0,1) {$|\psi_L\rangle$};
  \node[left, fill=white, inner sep=1.2pt] at (0,0) {$|0\rangle_{a_{23}}$};

  % wires
  \draw[thick] (0.2,2) -- (6.2,2);
  \draw[thick] (0.2,1) -- (12.8,1);
  \draw[thick] (0.2,0) -- (12.8,0);

  % Titles above blocks
  \node[font=\normalsize, align=center, fill=white, inner sep=1.5pt] at (3.2,2.92) {measure footprint\\of pair $(1,2)$};
  \node[font=\normalsize, align=center, fill=white, inner sep=1.5pt] at (9.8,2.92) {measure footprint\\of pair $(2,3)$};

  % first ancilla H gate
  \draw[fill=white] (1.1,1.72) rectangle (1.7,2.28);
  \node at (1.4,2) {$H$};
  % first controlled gate
  \fill (2.7,2) circle (2.2pt);
  \draw[thick] (2.7,2) -- (2.7,1.32);
  \draw[fill=white, rounded corners=1.5pt] (2.15,0.72) rectangle (3.25,1.28);
  \node at (2.70,1.00) {$Z_f^{(12)}$};
  % second H and measurement
  \draw[fill=white] (3.9,1.72) rectangle (4.5,2.28);
  \node at (4.2,2) {$H$};
  \draw[fill=white, rounded corners=1.5pt] (5.3,1.70) rectangle (6.1,2.30);
  \node at (5.7,2.0) {$M$};
  \node[above, fill=white, inner sep=1pt] at (5.7,2.32) {$m_{12}$};

  % bridge on data wire: braid then F and F^-1 around second measurement
  \draw[fill=white, rounded corners=1.5pt] (6.55,0.72) rectangle (7.35,1.28);
  \node at (6.95,1.0) {$B_2$};
  \draw[fill=white] (7.8,0.72) rectangle (8.4,1.28);
  \node at (8.1,1.0) {$F$};

  % second ancilla H gate
  \draw[fill=white] (7.9,-0.28) rectangle (8.5,0.28);
  \node at (8.2,0) {$H$};
  % second controlled gate
  \fill (9.5,0) circle (2.2pt);
  \draw[thick] (9.5,0) -- (9.5,0.72);
  \draw[fill=white, rounded corners=1.5pt] (8.95,0.72) rectangle (10.05,1.28);
  \node at (9.5,1.0) {$Z_f^{(12)}$};
  % second H and measurement
  \draw[fill=white] (10.7,-0.28) rectangle (11.3,0.28);
  \node at (11.0,0) {$H$};
  \draw[fill=white, rounded corners=1.5pt] (11.75,-0.30) rectangle (12.55,0.30);
  \node at (12.15,0.0) {$M$};
  \node[above, fill=white, inner sep=1pt] at (12.15,0.32) {$m_{23}$};
  % F^-1
  \draw[fill=white] (10.7,0.72) rectangle (11.5,1.28);
  \node at (11.1,1.0) {$F^{-1}$};

  % explanatory labels beneath circuit
  \node[font=\small, align=center, fill=white, inner sep=2pt, text width=3.7cm] at (2.7,-0.80)
    {$Z_f^{(12)}=\Pi_{\one}^{(12)}-\Pi_{\psi}^{(12)}$};
  \node[font=\small, align=center, fill=white, inner sep=2pt, text width=3.9cm] at (9.5,-0.80)
    {$Z_f^{(23)}=F\,Z_f^{(12)}\,F^{-1}$};
\end{tikzpicture}
\caption{A stabilizer-shadow circuit for complementary footprint measurements in the four-\(\sigma\) Ising code.  The upper ancilla extracts the local fusion footprint of the pair \((1,2)\) by phase-kickback from the logical observable \(Z_f^{(12)}\).  As drawn, an optional braid gate \(B_2\) is inserted before the second readout.  The logical line is then recoupled by the \(F\)-move, the same controlled observable is measured again, and the basis is restored.  With \(B_2\) omitted, the lower ancilla measures \(Z_f^{(23)}\) on the original input state.  With \(B_2\) included, it measures the same recoupled footprint on the transported state.  The circuit makes explicit that distinct local footprints are related by field-theoretic recoupling rather than by an arbitrary choice of coordinates.}
\label{fig:ising-circuit}
\end{figure}

For an input state \(|\psi_L\rangle=a|0_L\rangle+b|1_L\rangle\), the first ancilla block performs the familiar phase-kickback measurement of \(Z_f^{(12)}\):
\[
(a|0_L\rangle+b|1_L\rangle)|0\rangle
\mapsto
(a|0_L\rangle+b|1_L\rangle)\frac{|0\rangle+|1\rangle}{\sqrt2}
\mapsto
\frac{a|0_L\rangle|0\rangle+a|0_L\rangle|1\rangle+b|1_L\rangle|0\rangle-b|1_L\rangle|1\rangle}{\sqrt2}
\]
followed by a final Hadamard on the ancilla, which yields
\begin{equation}\label{eq:ising-kickback}
(a|0_L\rangle+b|1_L\rangle)|0\rangle
\longmapsto
a|0_L\rangle|0\rangle+b|1_L\rangle|1\rangle.
\end{equation}
Thus a computational-basis measurement of the ancilla records precisely the \((1,2)\)-fusion footprint.  The second block is formally the same calculation, but conjugated by \(F\), so it measures \(Z_f^{(23)}\).  Since \(F\) is the Hadamard matrix in the Ising case, this second footprint measurement is an \(X\)-type measurement on the logical fusion qubit.  The additional braid gate \(B_2\) is included to emphasize that field-theoretic transport and field-theoretic measurement naturally coexist in a single circuit shadow.  If the sole goal is to display the measurement of \(Z_f^{(23)}\) on the original state, one simply deletes the \(B_2\) box or sets it equal to the identity.

The two observables are complementary in the ordinary qubit sense:
\begin{equation}\label{eq:ising-anticommutation}
  Z_f^{(12)}Z_f^{(23)}=-Z_f^{(23)}Z_f^{(12)}.
\end{equation}
This anticommutation is not an arbitrary Pauli convention.  It is the circuit shadow of the fact that the two footprint measurements correspond to two different fusion trees.  Consequently, a projective measurement of the \((1,2)\)-footprint generally disturbs the statistics of the \((2,3)\)-footprint, just as a \(Z\)-measurement disturbs a later \(X\)-measurement on an ordinary qubit.  For the state \(|\psi_L\rangle=a|0_L\rangle+b|1_L\rangle\), the \((1,2)\)-measurement has probabilities
\[
  p_{12}(\one)=|a|^2,
  \qquad
  p_{12}(\psi)=|b|^2.
\]
The \((2,3)\)-measurement, by contrast, is diagonal in the recoupled basis.  Since \(F=F^{-1}\), one obtains
\[
  p_{23}(\one)=\left|\frac{a+b}{\sqrt2}\right|^2,
  \qquad
  p_{23}(\psi)=\left|\frac{a-b}{\sqrt2}\right|^2.
\]
This small calculation is useful operationally.  It shows that footprint measurements are not merely labels attached to the same hidden error.  They are genuine measurements of different local field-theoretic boundary decompositions, with the usual disturbance expected of noncommuting observables.  Thus \(Z_f^{(12)}\) and \(Z_f^{(23)}\) should not be interpreted as simultaneous stabilizer checks.  They generate the full noncommutative diagnostic algebra \(M_2(\bC)\) on the fusion qubit.  In the four-puncture local model they are complementary footprint diagnostics or measurement primitives.  This is consistent with measurement-only topological quantum computation, where topological-charge measurements are deliberately used to enact computational transformations rather than to report a harmless syndrome, and with interferometric realizations of topological-charge measurement \cite{BondersonFreedmanNayak2008,BondersonShtengelSlingerland2008}.  A genuine syndrome-extraction protocol for an unknown encoded state requires a syndrome-admissible commuting footprint algebra whose no-error outcomes do not distinguish logical states.  The next subsection gives such an example.

Local error insertions may be described schematically as follows.
\begin{enumerate}[label=(\roman*)]
\item A local cluster of two \(\sigma\)-defects admits fusion channels \(\one\) and \(\psi\).  A strictly neutral pair-creation event from the vacuum selects the total vacuum channel unless accompanied by compensating charge or embedded in a larger defect history.
\item A \(\psi\)-line encircling or passing between punctures can change relative phases between fusion channels.
\item A defect history that changes the total charge of a local cluster is detectable by a charge measurement.
\item A defect history that is globally neutral but topologically nontrivial may act as a logical operator.
\end{enumerate}
A measured pair charge is therefore not merely a binary stabilizer sign; it is a categorical footprint in the fusion theory.  In the four-puncture qubit the observables just displayed are diagnostics, not safe syndromes for an unknown encoded state.  The six-puncture construction below shows how the same type of pair-charge datum becomes a syndrome once redundancy makes the no-error outcome independent of the logical state.

\subsection{A syndrome-admissible six-\texorpdfstring{\(\sigma\)}{sigma} code with exact recovery}\label{subsec:six-sigma-recovery}

The four-puncture calculation deliberately exhibits diagnostics that are not safe syndromes.  A slightly larger Ising fusion space already gives the complementary phenomenon: a nontrivial pair-charge measurement can be constant on the code in the no-error sector, detect a specified error, and support exact conditional recovery.

Consider six \(\sigma\)-anyons with total vacuum charge,
\begin{equation}\label{eq:ising-six-space}
  \cH_6:=\Hom(\one,\sigma^{\otimes 6}).
\end{equation}
Pair the anyons as \((1,2)\), \((3,4)\), and \((5,6)\).  Write
\[
  |z_1,z_2,z_3\rangle,
  \qquad z_j\in\{+1,-1\},
\]
for the fusion-path basis in which \(z_j=+1\) means that the \(j\)-th pair fuses to \(\one\), while \(z_j=-1\) means that it fuses to \(\psi\).  Since \(\psi\otimes\psi\cong\one\), the total-vacuum constraint is
\begin{equation}\label{eq:ising-six-parity-constraint}
  z_1z_2z_3=+1.
\end{equation}
Hence
\begin{equation}\label{eq:ising-six-basis}
  \cH_6
  =\operatorname{span}\bigl\{
  |+,+,+\rangle,
  |+,-,-\rangle,
  |-,+,-\rangle,
  |-,-,+\rangle
  \bigr\},
\end{equation}
so \(\dim\cH_6=4\).

Choose the two-dimensional code
\begin{equation}\label{eq:ising-six-code}
  \cC_6
  :=\operatorname{span}\{|0_L\rangle,|1_L\rangle\}
  =\operatorname{span}\{|+,+,+\rangle,|-,-,+\rangle\}.
\end{equation}
Thus the last pair \((5,6)\) fuses to vacuum on every code state, while the logical qubit is carried by the remaining correlated pair labels.  Let
\begin{equation}\label{eq:ising-six-syndrome-observable}
  Z_3:=\Pi_{\one}^{(56)}-\Pi_{\psi}^{(56)},
  \qquad
  \Pi_\pm:=\frac{1}{2}(I\pm Z_3).
\end{equation}
With the convention above, \(\Pi_+\) is the vacuum-charge projector for the pair \((5,6)\) and \(\Pi_-\) is the \(\psi\)-charge projector.  If \(P_6\) denotes the projection onto \(\cC_6\), then
\begin{equation}\label{eq:ising-six-noerror-safe}
  \Pi_+P_6=P_6,
  \qquad
  \Pi_-P_6=0.
\end{equation}
The pair-charge measurement therefore reveals no logical information in the no-error case.  It is syndrome-admissible for any error family whose representatives are resolved by these two sectors.

Figure~\ref{fig:six-sigma-syndrome} summarizes the geometry used below.  The measured pair is \((5,6)\); the bilinear \(T=i\gamma_4\gamma_5\) is local on the linear arrangement, whereas \(E'=i\gamma_1\gamma_6\) connects the two ends and produces the same measured footprint.

\begin{figure}[H]
\centering
\begin{tikzpicture}[x=1.35cm,y=0.9cm, every node/.style={font=\small}, line cap=round, line join=round]
  \foreach \x/\j in {0/1,1/2,2/3,3/4,4/5,5/6}{
    \fill (\x,0) circle (2.2pt);
    \node[below=3pt] at (\x,0) {$\sigma_{\j}$};
  }
  % Pairing arcs
  \draw (0,0.12) .. controls (0.25,0.55) and (0.75,0.55) .. (1,0.12);
  \draw (2,0.12) .. controls (2.25,0.55) and (2.75,0.55) .. (3,0.12);
  \draw[line width=1.05pt] (4,0.12) .. controls (4.25,0.55) and (4.75,0.55) .. (5,0.12);
  \node at (4.5,0.72) {measured pair $(5,6)$};
  \node at (4.5,1.08) {$Z_3=+1$ on $\cC_6$};
  % T support
  \draw[line width=1.25pt] (3,-0.42) -- (4,-0.42);
  \node[below=2pt] at (3.5,-0.42) {$T=i\gamma_4\gamma_5$};
  % E' long-range support
  \draw[dashed,line width=1.05pt] (0,1.55) .. controls (1.2,2.35) and (3.8,2.35) .. (5,1.55);
  \node[fill=white, inner sep=1.5pt] at (2.5,2.18) {$E'=i\gamma_1\gamma_6$};
\end{tikzpicture}
\caption{The six-\(\sigma\) syndrome example.  The code \(\cC_6\) lies in the fixed \(Z_3=+1\) sector of the pair \((5,6)\).  Both \(T\) and \(E'\) flip that measured pair charge and hence produce the same nontrivial footprint, but their residual difference acts logically on the code.  In a linear arrangement, \(T\) is adjacent while \(E'\) is long-range; on a cyclic arrangement the endpoints \(1\) and \(6\) may also be adjacent.}
\label{fig:six-sigma-syndrome}
\end{figure}

To give an explicit nontrivial family, use the standard Majorana realization of Ising fusion spaces \cite{NayakSimonSternFreedmanDasSarma2008}.  Let \(\gamma_1,\ldots,\gamma_6\) be Majorana operators satisfying
\begin{equation}\label{eq:majorana-clifford}
  \gamma_j^\dagger=\gamma_j,
  \qquad
  \{\gamma_j,\gamma_k\}=2\delta_{jk}I,
\end{equation}
and choose signs so that the pair-parity observables
\begin{equation}\label{eq:majorana-pair-parities}
  Z_1=i\gamma_1\gamma_2,
  \qquad
  Z_2=i\gamma_3\gamma_4,
  \qquad
  Z_3=i\gamma_5\gamma_6
\end{equation}
are precisely the \(\pm1\) fusion-channel observables above.  The total-vacuum sector is the \(+1\) eigenspace of \(Z_1Z_2Z_3\), up to the fixed overall sign convention already absorbed into the definition of the \(Z_j\)'s.  Thus, within \(\cH_6\),
\[
  Z_3=Z_1Z_2.
\]
The code \(\cC_6\) is therefore the \(+1\)-eigenspace of the single check \(Z_3\) inside the fixed-parity sector.  One may choose logical Pauli operators
\begin{equation}\label{eq:six-sigma-logical-paulis}
  \overline Z=Z_1,
  \qquad
  \overline X=i\gamma_2\gamma_3,
\end{equation}
for which \(\overline X\) flips the two logical basis states up to a common phase and commutes with the check.  In this sense \(\cC_6\) is also a small Majorana fermion stabilizer code, expressed here in fusion-space coordinates; see \cite{BravyiLeemhuisTerhal2010} for the general Majorana-code framework.  This translation is useful because it makes clear which part of the example is specifically categorical and which part is familiar stabilizer structure.

Now consider the adjacent bilinear
\begin{equation}\label{eq:ising-six-error-T}
  T:=i\gamma_4\gamma_5.
\end{equation}
It is Hermitian and unitary:
\[
  T^\dagger=T,
  \qquad
  T^2=I.
\]
Because it contains an even number of Majorana operators, it preserves the fixed total-parity sector.  Its commutation relations with the pair observables are
\begin{equation}\label{eq:ising-six-T-commutation}
  [T,Z_1]=0,
  \qquad
  \{T,Z_2\}=0,
  \qquad
  \{T,Z_3\}=0.
\end{equation}
Consequently, up to basis phases,
\begin{align}\label{eq:ising-six-T-action}
  T|+,+,+\rangle&\propto |+,-,-\rangle,\\
  T|-,-,+\rangle&\propto |-,+,-\rangle.
\end{align}
The logical label \(z_1\) is preserved, whereas the measured footprint \(z_3\) flips.  In particular,
\begin{equation}\label{eq:ising-six-sector-resolution}
  \Pi_+TP_6=0,
  \qquad
  \Pi_-TP_6=TP_6.
\end{equation}
For an arbitrary encoded state
\[
  |\phi_L\rangle=a|0_L\rangle+b|1_L\rangle,
\]
the no-error history gives the deterministic outcome \(+\), while the error \(T\) gives the deterministic outcome \(-\).  The coefficients \(a,b\) are unchanged by the syndrome readout because both logical basis states lie in the same measured sector before the error and in the same orthogonal sector after it.

\begin{proposition}[Exact six-\(\sigma\) footprint correction]\label{prop:six-sigma-footprint-correction}
For the code \(\cC_6\subset\cH_6\) in \eqref{eq:ising-six-code}, the commuting algebra generated by the pair-charge projectors \(\{\Pi_+,\Pi_-\}\) is syndrome-admissible for the error family
\[
  \cE_6=\{I,T\},
  \qquad T=i\gamma_4\gamma_5.
\]
The family is exactly correctable.  After measuring the \((5,6)\)-pair charge, apply the identity on outcome \(+\) and apply \(T\) on outcome \(-\).
\end{proposition}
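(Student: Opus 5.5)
The plan is to check the two conditions of Definition~\ref{def:syndrome-admissible} directly, then invoke Theorem~\ref{thm:footprint-algebra-KL}, and finally confirm that the explicit recovery map works.

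\textbf{Syndrome-admissibility.} Condition~(i) is exactly \eqref{eq:ising-six-noerror-safe}: the code lies in the \(Z_3=+1\) sector, since both \(|+,+,+\rangle\) and \(|-,-,+\rangle\) have \(z_3=+1\). Condition~(ii) has two parts.
\begin{itemize}
\item For \(E=I\), we have \(\Pi_+P_6=P_6\), so \(s(I)=+\).
\item For \(E=T\), the anticommutation \(\{T,Z_3\}=0\) from \eqref{eq:ising-six-T-commutation} gives \(Z_3TP_6=-TZ_3P_6=-TP_6\). Hence \(\Pi_-TP_6=TP_6\) and \(\Pi_+TP_6=0\), which is \eqref{eq:ising-six-sector-resolution}, so \(s(T)=-\).
\end{itemize}
The anticommutation itself is routine Clifford algebra: \(\gamma_4\gamma_5\) shares exactly one Majorana index with \(\gamma_5\gamma_6\). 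Since \(\Pi_\pm\) are spectral projectors of one observable, the algebra is commutative.

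\textbf{Fibrewise Knill--Laflamme.} The two errors lie in distinct sectors, so each fibre contains a single representative, and the scalar equations \eqref{eq:footprint-algebra-KL} reduce to diagonal ones. In the \(+\) fibre, \(PI^\dagger IP=P\). In the \(-\) fibre, \(PT^\dagger TP=PT^2P=P\), because \(T\) is Hermitian and \(T^2=(i\gamma_4\gamma_5)^2=-\gamma_4\gamma_5\gamma_4\gamma_5=\gamma_4^2\gamma_5^2=I\). The cross term \(PITP\) vanishes automatically by sector orthogonality, since \(P_6\Pi_-=0\). Theorem~\ref{thm:footprint-algebra-KL} (or Remark~\ref{rem:fibrewise-KL-equivalence}) then yields exact correctability.

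\textbf{Explicit recovery.} Proposition~\ref{prop:conditional-recovery-fibre} gives the recovery concretely; here the Gram matrices are the \(1\times1\) matrix \((1)\), so the partial isometries are \(I\) and \(T^\dagger=T\). I would check this directly.
\begin{itemize}
\item On outcome \(+\), the state is \(|\phi_L\rangle\), and applying the identity returns it.
\item On outcome \(-\), the state is \(T|\phi_L\rangle\), and applying \(T\) gives \(T^2|\phi_L\rangle=|\phi_L\rangle\), with no logical phase.
\end{itemize}
This holds for arbitrary \(a,b\). It also uses the earlier observation that \(T\) preserves \(z_1\), although \(T^2=I\) alone already suffices.

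\textbf{Main obstacle.} The only delicate point is the sign and phase conventions linking the Majorana realization to the fusion basis. We need \(T\) to genuinely map \(\cH_6\) into itself and to anticommute with the chosen \(Z_3\).

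First, \(T\) preserves \(\cH_6\) because it is even, so it commutes with the total parity \(Z_1Z_2Z_3\). Second, the identification \(Z_3=i\gamma_5\gamma_6\) is the convention fixed in \eqref{eq:majorana-pair-parities}.

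Once these are in place, the argument needs no basis phases. It rests only on the operator identities \(T^2=I\) and \(TZ_3=-Z_3T\), so the \(\propto\) phases in \eqref{eq:ising-six-T-action} never enter.
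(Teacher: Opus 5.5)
Your proposal is correct and follows essentially the same route as the paper. Both arguments use no-error safety from \eqref{eq:ising-six-noerror-safe}, the sector resolution of $T$ obtained from $\{T,Z_3\}=0$, the diagonal Knill--Laflamme entries with cross terms vanishing by sector orthogonality, and then Theorem~\ref{thm:footprint-algebra-KL} together with the direct check using $T^\dagger=T$ and $T^2=I$. The only difference is that you derive explicitly the Clifford-algebra identities (the anticommutation, $T^2=I$, and preservation of total parity) that the paper takes from the equations displayed just before the proposition.
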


\begin{proof}
Equation \eqref{eq:ising-six-noerror-safe} proves no-error safety.  Equations \eqref{eq:ising-six-sector-resolution} show that the two error representatives have definite and orthogonal measured footprints.  The Knill--Laflamme matrix is immediate:
\begin{align*}
  P_6 I^\dagger I P_6 &= P_6,\\
  P_6 T^\dagger T P_6 &= P_6,\\
  P_6 I^\dagger T P_6 &= P_6TP_6=0,\\
  P_6 T^\dagger I P_6 &= P_6TP_6=0.
\end{align*}
The cross terms vanish because \(T\cC_6\subset\Pi_-\cH_6\) while \(\cC_6\subset\Pi_+\cH_6\).  Thus Theorem \ref{thm:footprint-algebra-KL} applies.  More directly, if the outcome is \(+\), no correction is needed; if the outcome is \(-\), then \(T^\dagger=T\) and \(T^2=I\), so applying \(T\) returns every vector \(T|\phi_L\rangle\) to \(|\phi_L\rangle\).
\end{proof}

The same code also exhibits the nontrivial ambiguity that motivates decoding \emph{within} a footprint fibre.  Define
\begin{equation}\label{eq:ising-six-error-Eprime}
  E':=i\gamma_1\gamma_6.
\end{equation}
Then
\begin{equation}\label{eq:ising-six-Eprime-commutation}
  \{E',Z_1\}=0,
  \qquad
  [E',Z_2]=0,
  \qquad
  \{E',Z_3\}=0.
\end{equation}
Thus \(E'\) has the same measured \((5,6)\)-footprint as \(T\):
\[
  \Pi_-E'P_6=E'P_6,
  \qquad
  \Pi_+E'P_6=0.
\]
The two representatives are nevertheless not harmlessly equivalent.

\begin{proposition}[A nontrivial six-\(\sigma\) footprint fibre]\label{prop:six-sigma-ambiguous-fibre}
For the enlarged error family \(\{I,T,E'\}\), the \((5,6)\)-pair-charge measurement remains no-error safe and resolves \(I\) from the two nontrivial errors, but it does not make the family exactly correctable.  The errors \(T\) and \(E'\) lie in the same measured sector and
\[
  P_6T^\dagger E'P_6
\]
acts as a logical bit flip up to the phase convention of the logical basis.  In particular, it is not a scalar multiple of \(P_6\).
\end{proposition}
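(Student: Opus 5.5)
The plan is a direct Majorana computation in the fixed-parity sector, reusing what has already been established for \(T\).

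\textbf{Step 1: the footprint claims.} No-error safety is exactly \eqref{eq:ising-six-noerror-safe}; it involves only \(P_6\) and \(\Pi_\pm\), so enlarging the error family leaves it unchanged. Resolution of \(I\) into sector \(+\) is trivial. Resolution of \(T\) into sector \(-\) is \eqref{eq:ising-six-sector-resolution}. For \(E'\), the anticommutation \(\{E',Z_3\}=0\) in \eqref{eq:ising-six-Eprime-commutation} gives \(Z_3E'P_6=-E'Z_3P_6=-E'P_6\). Hence \(\Pi_-E'P_6=E'P_6\) and \(\Pi_+E'P_6=0\). This shows that \(\{\Pi_\pm\}\) is syndrome-admissible in the sense of Definition~\ref{def:syndrome-admissible}, with \(s(T)=s(E')=-\).

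\textbf{Step 2: the composite operator.} Since \(T\) is Hermitian, \(T^\dagger E'=(i\gamma_4\gamma_5)(i\gamma_1\gamma_6)=-\gamma_4\gamma_5\gamma_1\gamma_6\). Reordering the Majoranas using \eqref{eq:majorana-clifford} expresses this, up to a sign, as the product \((i\gamma_1\gamma_6)(i\gamma_4\gamma_5)\). I would then rewrite it in terms of the pair parities and \(\overline X=i\gamma_2\gamma_3\). The most efficient route is not to expand it fully but to record its commutation relations:
\begin{itemize}
\item with \(Z_3\): both \(T\) and \(E'\) anticommute with \(Z_3\), so the product commutes with \(Z_3\) and preserves the code sector;
\item with \(Z_1\): \(T\) commutes and \(E'\) anticommutes, so the product anticommutes with \(Z_1=\overline Z\);
\item with \(Z_2\): \(T\) anticommutes and \(E'\) commutes, so the product anticommutes with \(Z_2\).
\end{itemize}

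\textbf{Step 3: action on the code.} On the basis \(\{|+,+,+\rangle,|-,-,+\rangle\}\), the operator \(T^\dagger E'\) preserves \(z_3\) and flips both \(z_1\) and \(z_2\). It is a product of Majoranas, so it maps fusion-path basis vectors to unimodular multiples of fusion-path basis vectors. Therefore \(T^\dagger E'|0_L\rangle=c_0|1_L\rangle\) and \(T^\dagger E'|1_L\rangle=c_1|0_L\rangle\) with \(|c_0|=|c_1|=1\). Consequently \(P_6T^\dagger E'P_6\) is an off-diagonal unitary on \(\cC_6\): a logical bit flip up to phases.

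\textbf{Step 4: non-scalarity and conclusion.} A nonzero off-diagonal matrix is not a multiple of the identity. Equivalently, it anticommutes with \(\overline Z|_{\cC_6}\), while any scalar multiple of \(P_6\) commutes with it. This anticommutation argument is the cleanest way to avoid tracking phase conventions. The fibrewise Knill--Laflamme condition of Theorem~\ref{thm:footprint-algebra-KL} therefore fails in sector \(-\), and the family \(\{I,T,E'\}\) is not exactly correctable by any recovery conditioned on the measurement.

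\textbf{Main obstacle.} The only delicate point is the phase-free claim in Step 3 that the composite maps basis vectors to unit multiples of basis vectors. This follows because the Majorana monomial is unitary and anticommutes or commutes with each \(Z_j\), so it permutes the joint eigenspaces, which are one-dimensional in \(\cH_6\). I would state this explicitly rather than fix concrete sign conventions.
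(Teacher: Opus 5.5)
Your proposal is correct and follows essentially the same route as the paper: footprint resolution via the anticommutation relations with \(Z_3\), then the commutation pattern of \(W=T^\dagger E'\) with \(Z_1,Z_2,Z_3\) to show that \(W\) exchanges the one-dimensional sectors \((+,+,+)\leftrightarrow(-,-,+)\), so the within-fibre Knill--Laflamme entry is non-scalar. The one extra step in the paper is the observation that \(T\) and \(E'\) are commuting Hermitian unitaries with disjoint supports, so \(W\) is a Hermitian involution; this forces \(c_0c_1=1\) in your notation and makes the compression exactly \(\overline X\) after rephasing the logical basis, sharpening your ``bit flip up to phases'' to the precise form in the statement.
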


\begin{proof}
The common footprint statement follows from \eqref{eq:ising-six-T-commutation} and \eqref{eq:ising-six-Eprime-commutation}.  The composite \(W=T^\dagger E'=TE'\) preserves \(Z_3\) because both factors anticommute with it.  It anticommutes with \(Z_1\) and with \(Z_2\), so on the code basis it exchanges the two joint eigenvalue patterns
\[
  (+,+,+)\longleftrightarrow(-,-,+).
\]
Hence
\[
  W|0_L\rangle=e^{i\theta_0}|1_L\rangle,
  \qquad
  W|1_L\rangle=e^{i\theta_1}|0_L\rangle
\]
for phases \(\theta_0,\theta_1\).  Since \(T\) and \(E'\) are commuting Hermitian unitaries with disjoint Majorana supports, \(W\) is a Hermitian involution; after rephasing the logical basis, its compression is \(\overline X\).  Thus the within-fibre Knill--Laflamme entry \(P_6T^\dagger E'P_6\) is non-scalar, and Theorem~\ref{thm:footprint-algebra-KL} forbids exact correction of the enlarged family from this syndrome alone.
\end{proof}

This proposition turns the footprint fibre into an actual inference problem.  On a linear arrangement, \(T=i\gamma_4\gamma_5\) is adjacent while \(E'=i\gamma_1\gamma_6\) is long-range, so a locality-based prior can favour \(T\) after the outcome \(-\).  On a cyclic arrangement where modes \(6\) and \(1\) are also adjacent, that simple geometric preference disappears; the residual difference \(T^\dagger E'\) remains logical.  The example is finite, but it is the same logical structure encountered in topological decoding: one measured footprint, several compatible representatives, and a harmful residual class between them.

The Majorana bilinears also give a compact taxonomy.  Bilinears with exactly one endpoint in \(\{5,6\}\) anticommute with \(Z_3\) and are detected by the pair-charge check.  Bilinears supported away from \(\{5,6\}\) commute with that check and may act logically, as \(i\gamma_2\gamma_3=\overline X\) does.  The within-pair parities \(Z_j\) are diagonal in the chosen fusion basis.  This makes explicit both the utility and the limitation of the single measured footprint.

This example is intentionally finite and does not supply a growing distance.  It does, however, realize the central operational sequence of the paper without analogy:
\[
  \text{encoded state}
  \longrightarrow
  \text{specified error}
  \longrightarrow
  \text{measured footprint}
  \longrightarrow
  \text{conditional recovery}.
\]
The distinction from the four-puncture example is exact.  There the natural pair-charge observables act as complementary logical diagnostics.  Here the code is a proper subspace of the six-anyon fusion space, the \((5,6)\)-charge is fixed on the code, and the same type of field-theoretic measurement becomes a genuine syndrome.  This is the simplest role of redundancy in the present fusion-space language.

\subsection{Conformal blocks and a geometry-sensitive likelihood model}

Now place four \(\sigma\)-insertions at complex positions \(z_1,z_2,z_3,z_4\), and let
\begin{equation}\label{eq:crossratio}
x=\frac{(z_1-z_2)(z_3-z_4)}{(z_1-z_3)(z_2-z_4)}
\end{equation}
be the cross-ratio.  By a conformal transformation, the four points may be moved to \(0,x,1,\infty\).  The chiral Ising four-point blocks for the spin field \(\sigma\), in a standard choice of branch and up to a common convention-dependent normalization irrelevant for the normalized ratios, may be written as
\begin{align}
\mathcal F_\one(x)
&=
\frac{1}{\sqrt 2}\,[x(1-x)]^{-1/8}
\sqrt{1+\sqrt{1-x}},
\label{eq:block-one}\\
\mathcal F_\psi(x)
&=
\frac{1}{\sqrt 2}\,[x(1-x)]^{-1/8}
\sqrt{1-\sqrt{1-x}}.
\label{eq:block-psi}
\end{align}
The common prefactor encodes the singular scaling dictated by the external fields, while the square-root factor distinguishes the two possible intermediate fusion channels in \(\sigma\times\sigma\).  In the block-norm model of Definition \ref{def:CBdecoder}, for real \(0<x<1\), the common prefactor cancels from the normalized ratio and one obtains weights
\begin{align}
\Prob(\one\mid x)
&=\frac{1+\sqrt{1-x}}{2},
\label{eq:prob-one}\\
\Prob(\psi\mid x)
&=\frac{1-\sqrt{1-x}}{2}.
\label{eq:prob-psi}
\end{align}
These probabilities should be interpreted as a minimal conformal-block likelihood model, not as a universal physical noise law.  They express the fact that conformal geometry biases the relative likelihood of competing fusion channels.

For example, at the symmetric value \(x=1/2\), the simple block-norm model gives
\[
  \Prob(\one\mid x=1/2)
  =\frac{1+1/\sqrt2}{2}
  \approx 0.8536,
  \qquad
  \Prob(\psi\mid x=1/2)
  =\frac{1-1/\sqrt2}{2}
  \approx 0.1464.
\]
The likelihood ratio is
\[
  \frac{\Prob(\one\mid x)}{\Prob(\psi\mid x)}
  =\frac{1+\sqrt{1-x}}{1-\sqrt{1-x}}.
\]
This ratio diverges as \(x\to0\), where the first two punctures collide and the vacuum channel dominates, and tends to \(1\) as \(x\to1\) in this channel basis.  Thus even this smallest calculation exhibits the intended phenomenon: topology tells us which channels exist, while conformal geometry gives a quantitative preference among them.

Equivalently, the log-likelihood ratio is
\begin{equation}\label{eq:ising-loglikelihood}
  \Lambda(x)=\log\frac{\Prob(\one\mid x)}{\Prob(\psi\mid x)}
  =\log\bigl(1+\sqrt{1-x}\bigr)-\log\bigl(1-\sqrt{1-x}\bigr).
\end{equation}
For \(0<x<1\), \(\Lambda(x)>0\), so the maximum-likelihood channel in this particular \(s\)-channel normalization is always \(\one\), with confidence decreasing as \(x\) approaches \(1\).  This does not make the decoder trivial.  For this chosen pairing and this simple prior, the geometry supplies a graded confidence score.  In a many-puncture problem the same local likelihoods would be combined with competing pairings, physical noise rates, and global total-charge constraints.

Let us spell out what has and has not been normalized in \eqref{eq:prob-one}--\eqref{eq:prob-psi}.  The formulas use the two chiral blocks in a standard branch for \(0<x<1\).  Both blocks contain the same singular prefactor \([x(1-x)]^{-1/8}\), reflecting the external spin fields and the chosen coordinate normalization.  Since a decoder compares channels compatible with the same external insertions and the same observed footprint, this common prefactor cancels from the normalized ratio.  What remains is the relative channel dependence
\[
  1+\sqrt{1-x}
  \quad\text{versus}\quad
  1-\sqrt{1-x}.
\]
It is important to clarify that the probabilities in \eqref{eq:prob-one}--\eqref{eq:prob-psi} are not meant to be absolute four-point probabilities --- rather, they are \emph{normalized relative weights between the two internal channels in the simplified block-norm model}.

If one changes fusion tree, the same two-dimensional space is described by a different pair of channels, related by the \(F\)-matrix.  A decoder written in one channel basis must therefore be transformed before comparison with a footprint measurement naturally associated to another pairing.  This is a useful consistency check: a footprint is local to the region being measured, but the representation of that footprint in a chosen global basis depends on the fusion tree.  The field-theoretic formalism keeps track of this dependence through recoupling rather than through an arbitrary change of coordinates.

The limiting regimes are instructive.  As \(x\to 0\), the insertions at \(0\) and \(x\) approach one another.  The vacuum block dominates:
\[
\Prob(\one\mid x)\to 1,
\qquad
\Prob(\psi\mid x)\to 0.
\]
Thus a nearby pair of \(\sigma\)-insertions is overwhelmingly assigned to the vacuum channel in this simplest block-norm model.  As \(x\to 1\), the two displayed \(s\)-channel weights tend to equality. Equivalently, the geometrically natural description is moving toward a different pairing and hence a recoupled fusion tree.  The decoder's preference depends on geometry as well as topology.

\begin{table}[htbp]
\centering
\renewcommand{\arraystretch}{1.25}
\begin{tabular}{>{\raggedright\arraybackslash}p{0.18\linewidth}>{\raggedright\arraybackslash}p{0.27\linewidth}>{\raggedright\arraybackslash}p{0.43\linewidth}}
\toprule
\textbf{Regime} & \textbf{Dominant conformal block} & \textbf{Decoder interpretation}\\
\midrule
\(x\to 0\) & \(\one\)-channel & Nearby \(\sigma,\sigma\) pair fuses mostly to vacuum.\\
\(0<x<1\) & geometry-dependent mixture & Competing fusion histories carry different weights.\\
\(x\to 1\) & recoupled-channel competition & Alternate pairing becomes geometrically natural.\\
\bottomrule
\end{tabular}
\caption{Geometry-sensitive interpretation of the Ising conformal-block likelihood weights.}
\label{tab:ising-decoder}
\end{table}

Taken together, the Ising examples capture the main thesis in miniature.  The four-puncture system shows that natural charge measurements may be logical diagnostics.  The six-puncture code shows an actual syndrome and exact conditional recovery, while Proposition~\ref{prop:six-sigma-ambiguous-fibre} shows that the same syndrome can leave a harmful residual ambiguity.  Conformal-block weights then provide one possible source of soft information for ranking compatible histories, once a physical likelihood model has been specified.

The worked examples remain within the Ising theory and hence within its Majorana/Clifford shadow.  They establish the diagnostic, syndrome-admissible, and same-footprint ambiguity mechanisms explicitly, but they do not test the framework in a genuinely non-Clifford nonabelian theory or in the presence of nontrivial fusion multiplicities.  Those extensions require separate examples rather than an inference from the Ising calculations.

\subsection{Relation to the Knill--Laflamme condition}

Let \(P\) be the projection onto \(\Hom(\one,\sigma^{\otimes 4})\) inside a physical realization.  A local neutral insertion in a disk disjoint from the punctures acts by a scalar on the conformal-block space, by the same locality reasoning as Proposition \ref{prop:local-scalar}.  Thus the field-theoretic mechanism reproduces the Knill--Laflamme form
\[
P E_a^\dagger E_b P=\lambda_{ab}P
\]
for errors whose composite is locally neutral and contractible.  In contrast, a \(\psi\)-line separating punctures can distinguish the \(\one\) and \(\psi\) fusion channels and hence act nontrivially on the logical qubit.  This is a logical operator rather than a correctable local error.

One may phrase this as follows.

\begin{proposition}[Ising local protection]\label{prop:ising-local}
In the four-\(\sigma\) Ising conformal-block code with fixed total vacuum charge, any error composite represented by a locally neutral contractible defect network disjoint from the punctures acts as a scalar on \(\Hom(\one,\sigma^{\otimes 4})\).  Non-scalar logical action requires a defect network whose topology separates punctures or changes the global fusion-channel decomposition.
\end{proposition}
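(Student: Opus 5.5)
The first assertion follows directly from Proposition~\ref{prop:local-scalar}; the second is its contrapositive, made concrete with the Ising data. Let \(N\) be a locally neutral contractible defect network in a disk \(D\) disjoint from the four \(\sigma\)-punctures, with total charge \(\one\) on \(\partial D\). Since \(D\) contains no puncture and the vacuum boundary sector is simple, \(N\) evaluates to an element of \(\End_{\cC}(\one)\cong\bC\). No unresolved multiplicity space enters: all Ising fusion multiplicities are \(0\) or \(1\), and \(D\) meets no labelled data. Gluing \(D\) back into the sphere therefore multiplies every vector of \(\Hom(\one,\sigma^{\otimes4})\) by \(\operatorname{ev}(N)\). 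Hence \(PE_a^\dagger E_bP=\operatorname{ev}(N)P\) whenever the composite \(E_a^\dagger E_b\) is represented by such an \(N\). This is exactly the setting of Proposition~\ref{prop:categorical-scalarity-correctability} with the trivial syndrome algebra, consistent with Corollary~\ref{cor:four-sigma-full-space-trivial-syndrome}.

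For the second assertion I would argue by contraposition. Suppose the network cannot be isotoped, relative to the punctures, into a disk disjoint from them with neutral boundary. Then one of three things holds: a component of the network separates the punctures into nonempty groups; it ends on punctures; or its neutrality is only global. In each case I would exhibit the possible logical action explicitly, to show these are the only sources of non-scalar action. A closed \(\psi\)-loop enclosing the pair \((1,2)\) acts diagonally in the \((12)|(34)\) basis with eigenvalues \(S_{\psi a}/S_{\one a}=\pm1\) for \(a=\one,\psi\), which gives \(Z_f^{(12)}\). A loop around \((2,3)\) is obtained by conjugating with \(F=H\) and gives \(X\), as in \eqref{eq:Zf23}. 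Networks ending on punctures change the local pair channels, and so realize the fusion-channel-changing operators.

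The main obstacle is making ``topology separates punctures'' precise as an exhaustive dichotomy. I would handle it by decomposing the complement of the network in the four-punctured sphere into regions. If some region's boundary curve encloses punctures on both sides, the action factors through the total-charge decomposition of Theorem~\ref{thm:categorical-footprint-decomp} for that cluster and may be non-scalar. Otherwise every component lies in a puncture-free disk, and the first paragraph applies componentwise. Since the statement says only that non-scalar action \emph{requires} such topology, this necessity direction suffices, and I would not claim that every separating network acts non-scalarly.
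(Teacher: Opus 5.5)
Your route matches the paper's. The first assertion is Proposition~\ref{prop:local-scalar} applied to the Ising category. The second is only the necessity direction, which the paper argues by noting that a non-scalar operator must act nontrivially on the \((12)|(34)\) channel decomposition, and so cannot be a neutral insertion in a puncture-free disk. Your decomposition of the complement of the network into regions is a more careful version of that step, and it is this argument, not the examples, that shows exhaustiveness.

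One illustrative computation is wrong. In the Ising \(S\)-matrix, \(S_{\psi\one}/S_{\one\one}=S_{\psi\psi}/S_{\one\psi}=1\): \(\psi\) has trivial monodromy with both \(\one\) and \(\psi\), and monodromy \(-1\) only with \(\sigma\). So a closed \(\psi\)-loop around the pair \((1,2)\) acts as the identity, not as \(Z_f^{(12)}\). The channel-detecting closed loop is a \(\sigma\)-loop, with eigenvalues \(S_{\sigma a}/S_{\one a}=\pm\sqrt2\) for \(a=\one,\psi\). It gives \(\sqrt2\,Z_f^{(12)}\) around \((1,2)\) and \(\sqrt2\,X\) around \((2,3)\). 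Alternatively, open \(\psi\)-strings between punctures realize the Majorana bilinears \(i\gamma_1\gamma_2\) and \(i\gamma_2\gamma_3\).

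Your necessity argument never uses these examples, so the proof stands. The \(\psi\)-loop is in fact a concrete instance of your own caveat that a puncture-separating network can still act as a scalar.
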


\begin{proof}
The first assertion is Proposition \ref{prop:local-scalar} applied to the Ising modular category.  A locally neutral contractible network disjoint from the four punctures can be enclosed by a disk whose boundary has total vacuum charge.  Its evaluation is therefore an element of \(\End(\one)\cong\bC\), and insertion of the disk into the four-puncture surface multiplies the whole fusion space by that scalar.

For the second assertion, recall that
\[
  \Hom(\one,\sigma^{\otimes 4})
  \cong
  \Hom(\one,\sigma\otimes\sigma)\otimes
  \Hom(\one,\sigma\otimes\sigma)
  \oplus
  \Hom(\psi,\sigma\otimes\sigma)\otimes
  \Hom(\psi,\sigma\otimes\sigma)
\]
in the \((12)|(34)\) channel.  The two summands are the two logical basis sectors.  An operator that distinguishes them, or changes one into the other after recoupling, must interact with the global fusion-channel decomposition.  Diagrammatically this requires a line or defect move which separates punctures, winds around them, or implements a nontrivial recoupling or braiding operation.  Such an operator is no longer a contractible neutral insertion in a puncture-free disk.  It is a logical operation rather than a correctable local error in the clean topological model.
\end{proof}

This proposition also clarifies the limitation of the example.  The four-puncture qubit is protected against local neutral insertions in the ideal topological sense, but it does not by itself provide a distance parameter growing with system size.  A scalable code would require a family of surfaces, punctures, defects, or lattices for which the minimal nontrivial logical network becomes increasingly costly relative to the physical noise model.  The example should therefore be viewed as the local building block of a larger architecture.  Its purpose is to demonstrate the simultaneous presence of all ingredients: field-theoretic state space, footprints, recoupling, conformal weights, and a Knill--Laflamme-type scalar condition.

Once the punctures are embedded in a larger surface or connected to additional defects, the same measured fusion channel can be compatible with several histories: braids, local insertions, pair-creation events, or Wilson-line segments.  A decoder must then compare those histories using the topology, conformal geometry, and physical noise data available in the model.

\section{Extensions and routes toward scalable architectures}\label{sec:scalable}

The finite Ising models in Section~7 are local tests by design.  The four-puncture qubit displays noncommuting diagnostic algebras and conformal-block weights; the six-puncture subspace displays a syndrome-admissible pair-charge measurement and exact recovery for the family \(\{I,i\gamma_4\gamma_5\}\), and then a same-footprint pair of errors whose residual action is logical.  Neither example has a growing distance.  A scalable architecture requires a family of physical realizations whose size grows, together with a measurement schedule, a noise model, a recovery protocol, and a distance or threshold statement.  Existing nonabelian-decoding work provides important benchmarks for this task, from active and fault-tolerant anyon correction to numerical Fibonacci decoding \cite{WoottonHutter2016,DauphinaisPoulin2017,BurtonBrellFlammia2017}; recent work also gives a direct fault-tolerant anyonic-computation scheme under local noise assumptions \cite{LyonsBrown2026}.

This section turns from the local formalism to growing code families.  The categorical core established the meaning of a footprint and the conditions under which measured footprint sectors support exact QEC.  Here we ask how those ingredients can be assembled in topological, conformal, lattice, measurement-based, and hyperbolic architectures.  We also prove a conditional threshold theorem.  It does not assert that every TQFT or CFT code has a threshold; it gives explicit local hypotheses under which a family of footprint codes has exponentially small logical failure below a nonzero noise strength.

\subsection{What must scale?}\label{subsec:what-must-scale}

A family of field-theoretic codes should consist of data
\begin{equation}\label{eq:family-code-data}
  \mathfrak Q_N=(\cF_N,\Sigma_N,\cD_N,\Gamma_N,\cE_N,\mu_N),
  \qquad N=1,2,\ldots,
\end{equation}
where the index may count punctures, lattice cells, tensor-network tensors, genus, hyperbolic volume, or some other geometric size parameter.  For each value of~\(N\), the realization \(\Gamma_N\) produces a physical Hilbert space, a code projector \(P_N\), a set of error representatives \(\cE_N\), and a likelihood model \(\mu_N\).  The field theory supplies the ideal state spaces, topological sectors, defect moves, and amplitude weights, but the engineering question is whether the combined data produce an increasingly robust family of encodings.

One may isolate three quantities which are natural from the footprint point of view.  First, there is a \emph{logical network size}: the least cost of a defect network which preserves all measured local footprints but acts non-scalarly on the encoded space.  In a surface code this is the usual homological distance.  In a fusion-category code it should be replaced by the least cost of a nontrivial anyonic or defect-network representative inside a footprint fibre.  Second, there is a \emph{footprint resolution}: the ability of the chosen measurement algebra to separate error representatives without revealing logical information.  Third, there is a \emph{decoder gap}: the degree to which the likelihood model favors correctable histories over logically nontrivial histories with the same measured footprint.

This leads to the following definition.

\begin{definition}[Footprint distance and footprint fibre]\label{def:footprint-distance}
Fix a field-theoretic code datum \(\mathfrak Q\), a syndrome-admissible footprint algebra \(\cA\), and a cost function \(w\) on error histories or defect networks.  For a measured sector \(s\in\Spec(\cA)\), the \emph{footprint fibre} over \(s\) is the collection
\[
  \cF_s=\{E\in\cE:\ E \text{ has measured footprint }s\}.
\]
The \emph{footprint distance} of \((\mathfrak Q,\cA,w)\) is the minimum value of \(w(E^\dagger F)\), among pairs \(E,F\in\cE\) lying in a common footprint fibre, such that the composite \(E^\dagger F\) acts non-scalarly on the code space.
\end{definition}

The definition leaves ``cost'' realization-dependent.  It may be Hamming weight on qubits, string length, number of violated local terms, spacetime volume of a measurement history, conformal action, negative log-likelihood, or a weighted combination of these.  The relevant distance is therefore not simply the support size of a local operator.  It is the least cost of a logically nontrivial ambiguity that remains after the selected footprints have been measured.

The same distinction appears in ordinary stabilizer decoding.  A short error chain and a long error chain can have the same boundary; only their difference cycle determines whether a logical operator has occurred.  Definition~\ref{def:footprint-distance} transfers this idea to defect networks: two histories with the same measured footprint may differ by a closed Wilson line, a braid, a condensed-wall loop, a nontrivial element of a tube algebra, or a monodromy operation on conformal blocks.  The scalable challenge is to make such harmful differences costly and statistically suppressed.

\begin{principle}[Scalable footprint design]\label{prin:scalable-footprint-design}
A field-theoretic architecture should be designed so that local footprint measurements are syndrome-admissible, local neutral composites remain scalar on the code, and the least-cost non-scalar composite inside a footprint fibre grows with the size of the realization.

The preceding sections also suggest a more precise mathematical task for architecture design.  It is not enough to exhibit a large family of state spaces or a large family of local projectors.  One must prove a compatibility statement of the following form: the chosen local projectors assemble into commuting measurement rounds; the no-error sector is independent of the encoded state; every allowed low-weight error has a definite measured footprint; and every same-footprint ambiguity either evaluates as a local scalar, belongs to a gauge subsystem, or has cost above the intended distance.  In other words, the categorical construction must be accompanied by a proof that the selected measurements are syndrome-admissible and that the harmful fibres are pushed to high cost.

This point is especially important in nonabelian theories.  Nonabelian fusion gives more diagnostic data, but more diagnostic data is not automatically better syndrome data.  Measuring an intermediate channel may distinguish logical basis states, as in the four-\(\sigma\) Ising qubit, and is therefore a logical measurement rather than an error syndrome for an unknown encoded state.  Scalable constructions should therefore separate three uses of local charge measurements: stabilizer-like checks that are syndrome-admissible, gauge measurements whose outcomes may be tracked but not corrected immediately, and logical measurements or gates that deliberately reveal or transform encoded information.

\end{principle}

Principle~\ref{prin:scalable-footprint-design} parallels the usual distance principle for topological stabilizer codes.  The difference is that the objects inside a fibre need not be binary chains.  They may be labelled string nets, defect worldsheets, annular tube-algebra sectors, conformal-block monodromies, or geometric degenerations in a moduli space.

\subsection{Local stochastic noise and residual histories}\label{subsec:local-stochastic-footprints}

The preceding design principle admits the following precise form.  The result is modeled on the Peierls counting mechanism behind topological-memory threshold arguments, but it is stated in the language of footprint decoding.  It should not be confused with a universal fault-tolerance theorem for arbitrary field theories.  A field theory may supply state spaces and defect labels without supplying a growing distance, a local decoder, or a physically meaningful noise model.  The theorem below says that, once these additional pieces satisfy explicit local hypotheses, the footprint formalism gives a nonzero threshold for quantum memory.  In this respect it sits between the exact Knill--Laflamme theorem of Section~3 and the more architecture-specific threshold theorems for surface-code and cluster-state models \,\cite{AharonovBenOr2008,DennisKitaevLandahlPreskill2002,RaussendorfHarringtonGoyal2007}.  The result concerns quantum memory, not passive self-correction, and is compatible with no-go phenomena for low-dimensional self-correcting stabilizer memories \,\cite{BravyiTerhal2009}.

Let \(L\) be a linear size parameter.  A \emph{spacetime footprint-decoding datum of size \(L\)} consists of the following data:
\begin{enumerate}[label=(\roman*)]
\item a finite set \(\Omega_L\) of elementary fault locations, usually the cells of a bounded-degree spacetime complex or a set of labelled local defect moves;
\item a set \(\mathsf{Hist}_L\) of allowed labelled error histories, each with a support \(\supp(E)\subseteq\Omega_L\) and size \(|E|:=|\supp(E)|\);
\item a measured-footprint set \(\mathsf S_L\) and a footprint map
\[
        \partial_{\mathrm{fp}}:\mathsf{Hist}_L\longrightarrow \mathsf S_L;
\]
\item a decoder
\[
        D_L:\mathsf S_L\longrightarrow \mathsf{Hist}_L,
\]
where \(D_L(s)\) is a chosen recovery history compatible with the measured footprint \(s\);
\item a residual-composition operation assigning to each actual history \(E\) a closed residual history
\[
        \mathcal R_L(E)=E\star D_L(\partial_{\mathrm{fp}}E),
        \qquad
        \partial_{\mathrm{fp}}\mathcal R_L(E)=0;
\]
\item a logical-action map from closed residual histories to operators on the encoded space, considered up to scalar.
\end{enumerate}
Here \(\partial_{\mathrm{fp}}=0\) denotes the no-footprint sector.  The notation \(E\star D_L(\partial_{\mathrm{fp}}E)\) is schematic because the relevant composition depends on the realization.  In a Pauli stabilizer code it is multiplication of Pauli errors modulo phase.  In a chain-complex model it is addition of chains.  In an anyonic or string-net model it is concatenation of defect histories followed by local fusion and isotopy moves.  Only the following coarse features of this composition are used in the threshold estimate: the residual history has trivial measured footprint, decomposes into connected residual components, and has a well-defined logical action.  We use the standard memory-threshold convention that every noisy syndrome-extraction round under consideration is included in the spacetime fault complex, followed by an ideal final classical decoding step and ideal application of the selected recovery.  A theorem with noisy recovery circuitry would require those additional locations to be included in \(\Omega_L\).

The set \(\Omega_L\) is equipped with an adjacency relation.  A subset \(K\subseteq\Omega_L\) is connected if it is connected in this adjacency graph.  When histories carry labels, the word ``connected history'' includes both a connected support and the compatible local labels on that support.  All label multiplicities are absorbed into the counting constant below.

\begin{lemma}[A crude connected-region bound]\label{lem:connected-region-count}
Let \(G\) be a graph of maximum degree \(\Delta\ge 2\).  For a fixed vertex \(x\), the number of connected \(m\)-vertex subsets containing \(x\) is at most
\[
  \Delta^{2(m-1)}.
\]
If each support of size \(m\) admits at most \(M^m\) compatible local labellings, the number of labelled connected regions is at most \(M^m\Delta^{2(m-1)}\).
\end{lemma}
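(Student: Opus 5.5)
The plan is to encode each connected \(m\)-vertex subset containing \(x\) injectively by a closed walk in \(G\) starting at \(x\) of length \(2(m-1)\), and then to count such walks. Let \(K\) be connected with \(|K|=m\) and \(x\in K\). Choose a spanning tree \(T_K\) of the induced subgraph \(G[K]\); for definiteness, take the breadth-first search tree from \(x\) with respect to a fixed ordering of the neighbours of each vertex. The tree has \(m-1\) edges, so a depth-first traversal of \(T_K\) from \(x\) crosses each edge exactly twice and returns to \(x\). This gives a closed walk \(W_K=(x=v_0,v_1,\ldots,v_{2(m-1)}=x)\) in \(G\) whose vertex set is exactly \(K\).

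The assignment \(K\mapsto W_K\) is injective, because \(K\) is recovered from \(W_K\) as its set of visited vertices. It therefore suffices to bound the number of walks of length \(2(m-1)\) in \(G\) starting at \(x\). At each step there are at most \(\Delta\) choices for the next vertex, so there are at most \(\Delta^{2(m-1)}\) such walks, which gives the first bound. The hypothesis \(\Delta\ge 2\) is only used to keep the bound meaningful and monotone; for \(m=1\) the count is \(1=\Delta^0\). The case \(\Delta\le 1\) is not needed, but the same argument still applies there.

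For the labelled statement, I would use a product count. Each connected support \(K\) of size \(m\) containing \(x\) carries at most \(M^m\) compatible local labellings by hypothesis, so the number of pairs (support, labelling) is at most \(M^m\cdot\Delta^{2(m-1)}\). No step is a genuine obstacle. The only point needing care is that the Euler-tour walk is well defined and determined by \(K\); injectivity only requires that distinct \(K\) give distinct vertex sets, and this holds for any choice of walk. The constant is crude, and the sharper bound \((e\Delta)^m\) is not required.
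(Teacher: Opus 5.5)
Your proposal is correct and follows the paper's proof essentially verbatim: a canonical breadth-first spanning tree of \(G[K]\), its depth-first traversal as a walk of length \(2(m-1)\) from \(x\) whose vertex set recovers \(K\), the bound \(\Delta^{2(m-1)}\) on the number of such walks, and a product count for the labellings. Your remark that injectivity holds for any choice of walk, since \(K\) is recovered as the visited vertex set, slightly clarifies the paper's appeal to canonical tie-breaking orders.
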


\begin{proof}
Fix an ordering of the incident edges at every vertex.  Every connected \(m\)-vertex set \(K\) containing \(x\) has a canonical rooted spanning tree, obtained for example by breadth-first search with these fixed tie-breaking orders.  The depth-first traversal of this tree is a walk of length \(2(m-1)\) starting at \(x\), and its visited vertex set is exactly \(K\).  Thus the map from \(K\) to its canonical traversal is injective.  There are at most \(\Delta^{2(m-1)}\) walks of that length from \(x\).  Multiplying by the assumed label count gives the labelled version.
\end{proof}

Thus bounded-degree spacetime geometry with uniformly bounded local label growth implies a condition of the form (P1) below, albeit with deliberately crude constants.

\begin{definition}[Local stochastic noise]\label{def:local-stochastic-noise}
Following the standard locally decaying stochastic model used in fault-tolerance theory \cite{Gottesman2014ConstantOverhead}, a probability distribution on histories \(E\in\mathsf{Hist}_L\) is \emph{local stochastic with strength \(p\)} if, for every finite set \(F\subseteq\Omega_L\) of elementary fault locations,
\[
        \Prob(F\subseteq\supp(E))\leq p^{|F|}.
\]
Equivalently, the probability that any prescribed collection of \(|F|\) elementary faults all occurs is at most \(p^{|F|}\), without any independence assumption between distinct faults.
\end{definition}

This is the usual adversarially correlated local stochastic model.  It is more general than independent Bernoulli noise, since it permits correlations, but it still excludes arbitrary adversarial noise by requiring prescribed sets of faults to be exponentially suppressed.

\begin{definition}[Peierls footprint family]\label{def:peierls-footprint-family}
A family of spacetime footprint-decoding data \(\{\mathfrak Q_L\}_{L\geq 1}\) is called a \emph{Peierls footprint family} if there are constants
\[
        A>0,
        \qquad B>0,
        \qquad \delta>0,
        \qquad \kappa>0,
\]
independent of \(L\), with the following properties.
\begin{enumerate}[label=(P\arabic*)]
\item \textbf{Bounded connected-region growth.}  For every \(m\geq 1\) and every \(x\in\Omega_L\), the number of labelled connected residual regions of size \(m\) containing \(x\) is at most \(A B^m\).

\item \textbf{Local neutralizability below scale \(L\).}  Every closed residual history all of whose connected residual components have size strictly smaller than \(\delta L\) acts as a scalar on the encoded space.  Equivalently, every non-scalar closed residual history contains a connected residual component of size at least \(\delta L\).  In punctured or bounded realizations, this hypothesis includes the requirement that a component treated by contractible-vacuum scalarity lie in a neighbourhood free of protected punctures, nontrivial boundary sectors, and unresolved multiplicity data; components meeting such structures must be controlled by a separate local neutralization or gauge argument.

\item \textbf{Componentwise decoder balance.}  If \(K\) is a connected residual component of \(\mathcal R_L(E)=E\star D_L(\partial_{\mathrm{fp}}E)\), then the actual error has at least a fixed fraction of faults inside \(K\):
\[
        |\supp(E)\cap K|\geq \kappa |K|.
\]

\end{enumerate}

\end{definition}

The second condition is the distance condition in field-theoretic form.  It requires every closed residual history built from small connected pieces to be harmless: each component can be locally neutralized, evaluated as a scalar, or absorbed into a gauge subsystem.  Proposition~\ref{prop:categorical-scalarity-correctability} supplies one sufficient mechanism only for puncture-free contractible vacuum components; (P2) is intentionally broader and must account separately for boundaries, punctures, and other protected defects.  The third condition is the decoding condition.  It prevents the decoder from creating, at no cost, a long residual component whose support is mostly recovery rather than actual noise.  For ordinary minimum-weight decoding of binary chain errors, this balance is automatic with \(\kappa=1/2\), as recorded below.

\begin{theorem}[Peierls threshold for footprint decoding]\label{thm:peierls-footprint-threshold}
Let \(\{\mathfrak Q_L\}_{L\geq 1}\) be a Peierls footprint family with constants \(A,B,\delta,\kappa\).  Suppose that, for each \(L\), the actual error history is drawn from a local stochastic noise model of strength \(0\leq p\leq 1\).  Set
\[
        q(p)=2B p^\kappa .
\]
If \(q(p)<1\), then the logical failure probability of footprint decoding satisfies
\begin{equation}\label{eq:peierls-threshold-bound}
        \Prob_L(\mathrm{fail})
        \leq
        \frac{A|\Omega_L|}{1-q(p)}\, q(p)^{\delta L}.
\end{equation}
In particular, there is a nonzero threshold
\[
        p_0=\min\{1,(2B)^{-1/\kappa}\}>0
\]
such that, for every \(p<p_0\), there are constants \(C,c>0\), independent of \(L\), for which
\[
        \Prob_L(\mathrm{fail})\leq C|\Omega_L|e^{-cL}.
\]
If \(|\Omega_L|\) grows at most polynomially in \(L\), then the logical failure probability tends to zero exponentially in \(L\).
\end{theorem}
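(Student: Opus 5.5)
The plan is a standard Peierls union bound over large connected residual components. First I reduce failure to the existence of a large component. If decoding fails, the closed residual history \(\mathcal R_L(E)\) acts non-scalarly on the encoded space. By (P2), \(\mathcal R_L(E)\) then contains a connected residual component \(K\) with \(|K|=m\ge \delta L\). By (P3), the actual history satisfies \(|\supp(E)\cap K|\ge \kappa m\). Hence the failure event is contained in the union, over labelled connected regions \(K\) of size \(m\ge\delta L\), of the events
\[
\mathcal B_K=\{\exists F\subseteq K,\ |F|\ge\lceil\kappa m\rceil,\ F\subseteq\supp(E)\}.
\]

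Second, I bound each \(\mathcal B_K\) by the local stochastic hypothesis (Definition~\ref{def:local-stochastic-noise}). It suffices to consider subsets \(F\subseteq K\) of size exactly \(\lceil\kappa m\rceil\), since any larger subset contains one of these. There are at most \(2^m\) such subsets, each contained in \(\supp(E)\) with probability at most \(p^{\lceil\kappa m\rceil}\le p^{\kappa m}\) (using \(p\le1\)). So \(\Prob(\mathcal B_K)\le 2^m p^{\kappa m}\).

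Third, I count the regions. Every \(K\) contains some \(x\in\Omega_L\), so (P1) (cf. Lemma~\ref{lem:connected-region-count}) gives at most \(|\Omega_L|AB^m\) labelled connected regions of size \(m\). Summing,
\[
\Prob_L(\mathrm{fail})\le \sum_{m\ge \lceil\delta L\rceil}|\Omega_L|A B^m 2^m p^{\kappa m}
=A|\Omega_L|\sum_{m\ge\lceil\delta L\rceil} q(p)^m
\le \frac{A|\Omega_L|}{1-q(p)}\,q(p)^{\delta L},
\]
using \(q<1\) and \(q^{\lceil\delta L\rceil}\le q^{\delta L}\). The sum is finite anyway, since \(m\le|\Omega_L|\). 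For \(p<p_0\) we have \(q(p)<1\), so setting \(c=-\delta\log q(p)>0\) and \(C=A/(1-q(p))\) gives the exponential form. Polynomial growth of \(|\Omega_L|\) is then absorbed by \(e^{-cL}\).

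The main obstacle is the first step: the labels. The union bound ranges over labelled regions, but the event that forces \(K\) to be a residual component depends on \(E\) only through the decoder. I will therefore take the coarser event that a labelled connected region occurs as a component of \(\mathcal R_L(E)\) with \(\kappa|K|\) actual faults inside it. The per-region bound then uses only the support condition, so the label multiplicity enters only through the count in (P1). I also need to make sure that the map from \(E\) to \(K\) is not needed to be injective. The union bound only requires that every failing \(E\) lie in some \(\mathcal B_K\), which (P2)–(P3) guarantee.
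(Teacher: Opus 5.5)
Your proposal is correct and follows essentially the same route as the paper: you use (P2) and (P3) to reduce failure to a connected residual region of size $m\ge\delta L$ containing at least $\kappa m$ actual faults, bound each such region by $2^m p^{\kappa m}$ via local stochasticity, and sum over the at most $A|\Omega_L|B^m$ regions from (P1) to obtain the geometric series and the constants $c=-\delta\log q(p)$, $C=A/(1-q(p))$. Your extra care with $\lceil\delta L\rceil$ and $\lceil\kappa m\rceil$ is a harmless refinement of details the paper passes over, as is your remark that only the support of $K$ enters the per-region bound, so that labels affect only the count.
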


\begin{proof}
Let \(E\) be the actual error history and let
\[
        R=D_L(\partial_{\mathrm{fp}}E)
\]
be the recovery chosen by the decoder.  The residual history
\[
        \mathcal R_L(E)=E\star R
\]
has trivial measured footprint.  A logical failure occurs exactly when this closed residual history acts nontrivially on the encoded space, up to scalar.

By local neutralizability below scale \(L\), a non-scalar closed residual history must contain a connected residual component \(K\) with
\[
        |K|=m\geq \delta L .
\]
By componentwise decoder balance, the actual error has at least \(\kappa m\) elementary faults in \(K\):
\[
        |\supp(E)\cap K|\geq \kappa m.
\]
Thus a logical failure implies the existence of a connected residual region \(K\subseteq\Omega_L\) of some size \(m\geq \delta L\) containing at least \(\kappa m\) actual faults.

Fix such a connected region \(K\) of size \(m\).  The event that at least \(\kappa m\) actual faults occur inside \(K\) is contained in the union, over all subsets \(F\subseteq K\) with \(|F|=\lceil\kappa m\rceil\), of the events \(F\subseteq\supp(E)\).  By local stochasticity,
\[
        \Prob(F\subseteq\supp(E))\leq p^{|F|}\leq p^{\kappa m}.
\]
There are at most \(2^m\) such subsets.  Therefore
\[
        \Prob\bigl(|\supp(E)\cap K|\geq \kappa m\bigr)
        \leq 2^m p^{\kappa m}.
\]

It remains to sum over possible connected regions.  By bounded connected-region growth, the number of labelled connected residual regions of size \(m\) containing a fixed location is at most \(AB^m\).  Summing first over a marked location in \(\Omega_L\) gives the harmless overcount
\[
        \#\{K:\ K\text{ connected},\ |K|=m\}\leq A|\Omega_L|B^m.
\]
Hence
\[
\begin{aligned}
        \Prob_L(\mathrm{fail})
        &\leq
        \sum_{m\geq \delta L}
        A|\Omega_L|B^m 2^m p^{\kappa m}  \\
        &=
        A|\Omega_L|
        \sum_{m\geq \delta L} (2Bp^\kappa)^m .
\end{aligned}
\]
If \(q(p)=2Bp^\kappa<1\), the last sum is bounded by
\[
        \frac{A|\Omega_L|}{1-q(p)}q(p)^{\delta L},
\]
which proves \eqref{eq:peierls-threshold-bound}.  For fixed \(p<p_0\), put
\[
        c=-\delta\log q(p)>0,
        \qquad
        C=\frac{A}{1-q(p)}.
\]
Then \(q(p)^{\delta L}=e^{-cL}\), giving the stated exponential bound.
\end{proof}

\begin{remark}[What the theorem does and does not prove]\label{rem:threshold-scope}
Theorem~\ref{thm:peierls-footprint-threshold} proves a threshold for a family once the geometric and decoding hypotheses have been verified.  It does not assert that an arbitrary fusion category, modular category, conformal-block theory, or string-net Hamiltonian automatically gives such a family.  In particular, the theorem contains the hard architecture questions as hypotheses: one must still construct syndrome-admissible measurement rounds, prove a growing neutralization scale, and analyze a decoder.  The value of the theorem is that these requirements are expressed in the same footprint language used throughout the paper.
\end{remark}

\begin{proposition}[Minimum-weight balance in the chain case]\label{prop:minimum-weight-balance}
Suppose that the histories form a binary chain model: errors and recoveries are chains with support in a bounded-degree cell complex, the measured footprint is the boundary, and the residual is the mod-two sum \(E+R\).  If \(D_L\) chooses a minimum-weight chain with the measured boundary, then the componentwise balance condition holds with \(\kappa=1/2\).
\end{proposition}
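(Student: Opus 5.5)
The plan is to use the standard exchange argument behind minimum-weight matching bounds. Write \(E\) for the actual chain, \(R=D_L(\partial E)\) for the decoder output, and \(\mathcal R=E+R\) for the residual cycle. Decompose the support of \(\mathcal R\) into connected components \(K_1,\dots,K_r\) in the adjacency graph of \(\Omega_L\), and write \(\mathcal R=\sum_j \mathcal R|_{K_j}\). The first step is to show that each restriction \(\mathcal R|_{K_j}\) is itself a cycle. The boundary of a chain supported on \(K_j\) is supported on cells incident to \(K_j\). Distinct components are non-adjacent, so their boundary contributions cannot cancel. Since \(\partial\mathcal R=0\), each \(\partial(\mathcal R|_{K_j})=0\). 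We assume the adjacency relation on \(\Omega_L\) is chosen so that two cells sharing a boundary face are adjacent, which is the natural convention for a cell complex. Because the paper leaves adjacency abstract, this requirement should be stated explicitly in the proof.

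The second step is the exchange. Fix a component \(K\) and consider \(R'=R+\mathcal R|_K\). Since \(\mathcal R|_K\) is a cycle, \(\partial R'=\partial R=\partial E\), so \(R'\) is a competitor with the measured boundary. On \(K\) the residual is nonzero exactly where \(E\) and \(R\) disagree. Hence \(K=(\supp E\cap K)\sqcup(\supp R\cap K)\) up to cells where both vanish, and such cells do not lie in the support of \(\mathcal R\), so they can be excluded from \(K\).

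Adding \(\mathcal R|_K\) to \(R\) replaces \(R\) by \(E\) on \(K\) and leaves \(R\) unchanged off \(K\). Therefore
\[
|R'|-|R| = |\supp E\cap K|-|\supp R\cap K|.
\]
Minimality of \(|R|\) gives \(|\supp E\cap K|\ge|\supp R\cap K|\). Combining this with \(|\supp E\cap K|+|\supp R\cap K|=|K|\) yields \(|\supp E\cap K|\ge |K|/2\), which is (P3) with \(\kappa=1/2\).

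The main obstacle is bookkeeping, not estimation. First, one must pin down that components are taken of \(\supp\mathcal R\) and not of \(\supp E\cup\supp R\); the latter could contain cells where \(E\) and \(R\) cancel, which would weaken the inequality. Second, one must justify that restrictions to components are cycles, which is where the adjacency convention enters. If the paper's adjacency graph were coarser or finer than face-sharing, I would instead define \(K\) as a connected component of the support of the cycle in the face-sharing graph and note that (P1) still holds by Lemma~\ref{lem:connected-region-count}, since bounded degree is preserved. Ties in the minimum-weight choice are harmless, because the argument only uses \(|R|\le|R'|\).
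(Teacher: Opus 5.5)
Your proof is correct and follows the paper's exchange argument: restrict the residual cycle to a component \(K\), replace the recovery by the error on \(K\), and combine minimality with \(|K|\le|E\cap K|+|R\cap K|\) to obtain \(\kappa=1/2\). Your explicit check that the restriction of the residual to a component is a cycle fills in a step the paper asserts without comment. That check requires the adjacency relation to include cells sharing a boundary face.
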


\begin{proof}
Let \(K\) be a connected component of the residual cycle \(E+R\).  Since \(E+R\) has zero boundary, the restrictions \(E|_K\) and \(R|_K\) have the same boundary.  Both restrictions are supported inside \(K\), so replacing the recovery on \(K\) leaves every other residual component unchanged.  If \(|R\cap K|>|E\cap K|\), then replacing \(R|_K\) by \(E|_K\) gives another recovery chain with the same measured boundary and strictly smaller total weight.  This contradicts the minimum-weight property of \(R\).  Hence \(|R\cap K|\leq |E\cap K|\).  Since the residual component is contained in the union of the error and recovery supports on \(K\),
\[
        |K|\leq |E\cap K|+|R\cap K|\leq 2|E\cap K|.
\]
Therefore \(|E\cap K|\geq |K|/2\), which is the balance condition with \(\kappa=1/2\).
\end{proof}

\begin{example}[Recovery of the surface-code Peierls mechanism]\label{ex:surface-code-peierls}
For a surface-code memory, \(\Omega_L\) may be taken to be the set of spacetime fault locations or, in the simplest phenomenological model, the set of lattice edges on which error chains live.  The measured footprint is the boundary of an error chain, the decoder chooses a minimum-weight chain with the observed boundary, and a closed residual chain is harmful exactly when it contains a homologically nontrivial component.  If the code has distance proportional to \(L\), then every homologically nontrivial residual component has size at least \(\delta L\).  Proposition~\ref{prop:minimum-weight-balance} supplies \(\kappa=1/2\), while bounded-degree lattice geometry supplies constants \(A\) and \(B\).  Theorem~\ref{thm:peierls-footprint-threshold} then recovers the standard qualitative conclusion: below a nonzero error rate, the logical failure probability is exponentially small in the linear size.  The constants are crude because the argument isolates the field-theoretic mechanism without attempting numerical threshold optimization.
\end{example}

\begin{remark}[Nonabelian and conformal versions]\label{rem:nonabelian-threshold}
In a nonabelian anyonic or conformal-block architecture, the same theorem applies only after the three Peierls hypotheses are checked in the labelled defect calculus.  The bounded-region constant \(B\) must absorb the growth of admissible labels and local fusion multiplicities.  The neutralizability hypothesis must assert scalarity, or harmless gauge action, for all small closed labelled residual components.  The balance hypothesis must be proved for the chosen decoder; fusion rules alone do not imply it.  The abstract formulation isolates the mathematical work required to turn the local footprint formalism into a scalable architecture.
\end{remark}

\subsection{Many-punctured conformal-block codes}\label{subsec:many-punctures}

The most direct scaling of the Ising example uses \(2m\) Ising \(\sigma\)-punctures with total vacuum charge.  The fusion space grows as \(2^{m-1}\), as the following elementary count shows.  This growth is not a distance statement.  It gives a family of fusion spaces on which one may impose redundancy, local footprint measurements, and recovery rules.

\begin{proposition}[Ising fusion-space count]\label{prop:ising-count}
Let \(N_n(a)\) denote the number of fusion paths of \(n\) Ising \(\sigma\)-objects with total charge \(a\).  Then
\[
  N_{2m}(\one)=N_{2m}(\psi)=2^{m-1},
  \qquad
  N_{2m+1}(\sigma)=2^m,
\]
and all other total charges have multiplicity zero.
\end{proposition}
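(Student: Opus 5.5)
We prove this by induction on the number \(n\) of \(\sigma\)-objects, using a left-to-right fusion tree \((\cdots((\sigma\sigma)\sigma)\cdots)\sigma\). By Proposition~\ref{prop:fusion-tree-footprint-algebra}, fusion paths in this tree are sequences of intermediate charges \(c_1=\sigma,c_2,\ldots,c_n\), where each \(c_{k+1}\) is a summand of \(c_k\otimes\sigma\). The Ising fusion rules \eqref{eq:ising-fusion} give
\[
  \one\otimes\sigma=\sigma,\qquad \psi\otimes\sigma=\sigma,\qquad \sigma\otimes\sigma=\one\oplus\psi,
\]
and every multiplicity is \(0\) or \(1\). So \(N_n(a)\) is just the number of such sequences ending in \(a\), and it satisfies the transfer recursion
\[
  N_{n+1}(\sigma)=N_n(\one)+N_n(\psi),\qquad
  N_{n+1}(\one)=N_{n+1}(\psi)=N_n(\sigma).
\]

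The parity claim comes first. Since \(c_1=\sigma\), and each step sends \(\{\one,\psi\}\) to \(\sigma\) and \(\sigma\) to \(\{\one,\psi\}\), induction shows that \(c_k=\sigma\) exactly when \(k\) is odd. Hence \(N_{2m}(\sigma)=0\) and \(N_{2m+1}(\one)=N_{2m+1}(\psi)=0\), which is the "all other total charges" clause.

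For the counts, the base case is \(N_1(\sigma)=1=2^0\). Suppose \(N_{2m-1}(\sigma)=2^{m-1}\). The recursion gives \(N_{2m}(\one)=N_{2m}(\psi)=2^{m-1}\). Then
\[
  N_{2m+1}(\sigma)=N_{2m}(\one)+N_{2m}(\psi)=2\cdot 2^{m-1}=2^m,
\]
which closes the induction. As a check, \(m=2\) gives \(N_4(\one)=2\), matching the dimension count of \(\Hom(\one,\sigma^{\otimes4})\) computed earlier, and \(m=3\) gives \(4\), matching \(\dim\cH_6\) in \eqref{eq:ising-six-basis}.

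The one point that needs care is why counting paths in this particular tree gives numbers that do not depend on the tree. The fix is to identify \(N_n(a)\) with \(\dim\Hom(a,\sigma^{\otimes n})\). This follows from iterating the semisimple decomposition used in the proof of Theorem~\ref{thm:categorical-footprint-decomp}. Because every vertex multiplicity is \(0\) or \(1\), the path sectors are one-dimensional. The hom-space dimension itself is independent of the choice of fusion tree, since associators are isomorphisms.
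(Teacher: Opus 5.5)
Your proposal is correct and follows essentially the same route as the paper. Both use the Ising transfer recursion \(N_{n+1}(\one)=N_{n+1}(\psi)=N_n(\sigma)\), \(N_{n+1}(\sigma)=N_n(\one)+N_n(\psi)\), together with the parity argument and doubling every two tensor factors. Your extra remark identifying \(N_n(a)\) with \(\dim\Hom(a,\sigma^{\otimes n})\) to justify independence of the fusion tree is a harmless addition that the paper makes only implicitly, and only for the vacuum case.
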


\begin{proof}
For \(n=1\), the only possible total charge is \(\sigma\), so \(N_1(\sigma)=1\).  Fusion by another \(\sigma\) uses the Ising rules
\[
  \one\otimes\sigma=\sigma,
  \qquad
  \psi\otimes\sigma=\sigma,
  \qquad
  \sigma\otimes\sigma=\one\oplus\psi.
\]
Hence, after an odd number of \(\sigma\)-labels, only total charge \(\sigma\) is possible, while after an even number only \(\one\) and \(\psi\) are possible.  The multiplicities satisfy
\[
  N_{2m+1}(\sigma)=N_{2m}(\one)+N_{2m}(\psi),
  \qquad
  N_{2m+2}(\one)=N_{2m+1}(\sigma)=N_{2m+2}(\psi).
\]
Starting from \(N_2(\one)=N_2(\psi)=1\), this recursion doubles the number of paths every two tensor factors.  Induction gives
\[
  N_{2m}(\one)=N_{2m}(\psi)=2^{m-1},
  \qquad
  N_{2m+1}(\sigma)=2^m.
\]
The remaining multiplicities are zero by the parity statement just proved.  In particular, fixing total vacuum charge for \(2m\) punctures gives \(2^{m-1}\) fusion paths, which is \(\dim\Hom(\one,\sigma^{\otimes 2m})\).
\end{proof}

For many punctures, the geometry of marked spheres is part of the model rather than auxiliary data.  The natural base is
\begin{equation}\label{eq:M0n-base}
  \left((\bP^1)^n\setminus \bigcup_{i<j}\{z_i=z_j\}\right)/\operatorname{PGL}_2,
\end{equation}
or, after compactification, a version of \(\overline M_{0,n}\).  Collision divisors record operator-product expansions.  Boundary strata record fusion degenerations.  Paths in the configuration space record braid histories.  Thus conformal geometry supplies likelihood data for compatible histories, while the categorical footprint algebra supplies the measured local sectors.

The practical question is how to choose a commuting family of local total-charge measurements.  One simple pattern is a fusion tree with adjacent pair or block measurements.  Another is a set of overlapping clusters measured in different rounds.  In one round the measured algebra must be commutative, but different rounds may involve noncommuting diagnostic algebras.  The four-anyon Ising example exhibits this pattern: the \((12)\)-measurement is naturally \(Z\)-type, while the \((23)\)-measurement becomes \(X\)-type after recoupling.  In a large puncture system, a schedule of local recouplings and total-charge measurements can therefore be understood as a field-theoretic analogue of alternating stabilizer checks.

Such a schedule can be encoded by a set of clusters
\[
  \mathcal I_t=\{I_{t,1},\ldots,I_{t,r_t}\}
\]
for each measurement time \(t\).  The clusters within a fixed \(\mathcal I_t\) should be compatible, so that their projectors generate a commutative algebra \(\cA_t\).  The algebras \(\cA_t\) and \(\cA_{t+1}\) need not commute.  The complete syndrome record is then a spacetime word
\[
  s=(s_1,s_2,\ldots,s_T),
  \qquad s_t\in\Spec(\cA_t),
\]
and decoding is inference over defect histories whose time-sliced footprints match this word.  Thus the measurement schedule converts a static fusion space into a spacetime decoding problem.

The difficult problem is the organization of fault tolerance around these state spaces.  One must specify how punctures are created, moved, measured, and possibly removed.  One must also specify how leakage out of the intended total charge sector is detected, how local footprints are repeatedly measured without revealing the logical state, and how the resulting classical data are decoded.  These are precisely the points at which the field-theoretic datum \(\mathfrak Q\) must be supplemented by \(\Gamma\), \(\cE\), and \(\mu\).

\subsection{A toy Ising-chain schedule}\label{subsec:toy-ising-chain}

For a concrete schedule, consider \(2m\) Ising \(\sigma\)-punctures arranged along a line or circle with total vacuum charge.  A first round measures the fusion outcomes of disjoint adjacent pairs
\[
  (1,2), (3,4),\ldots,(2m-1,2m).
\]
This is a commuting family of pairwise total-charge measurements.  In the fusion basis adapted to these pairings, the outcomes are diagonal labels in \(\{\one,\psi\}\), subject to the total-vacuum constraint.  This round is therefore a many-anyon analogue of a \(Z\)-type stabilizer layer.

A second round may measure shifted pairs
\[
  (2,3), (4,5),\ldots,(2m-2,2m-1),
\]
or larger overlapping clusters.  These measurements are not diagonal in the first fusion basis.  They are obtained by a sequence of \(F\)-moves, a local total-charge projection, and inverse \(F\)-moves.  In the four-anyon case this is exactly the calculation giving \(FZF^{-1}=X\).  In a longer chain, the shifted measurements produce local recoupling probes of the fusion path.  Thus alternating pairings give a categorical analogue of alternating check layers.

One should not confuse this toy schedule with a complete code.  Pair measurements of all adjacent anyons would generally reveal too much information if used naively on a small logical subspace.  A scalable design must choose a protected subspace, a gauge structure, or a redundancy constraint so that the measured pair and block outcomes reveal error data but not logical data.  Nevertheless, the toy schedule shows how a measurement pattern can be built from the primitive operations of the theory.

For such a schedule, an elementary local fault may have several effects:
\begin{enumerate}[label=(\roman*)]
\item it may change a pair-fusion outcome in the current round;
\item it may create a charge which is invisible until a later shifted round;
\item it may change the recoupling path between two fusion bases;
\item it may produce a braid or monodromy phase without changing a coarse charge label;
\item it may leak out of the total vacuum sector.
\end{enumerate}
The footprint record must decide which of these effects are actually measured.  The unmeasured effects remain inside the footprint fibre and must be controlled by distance, likelihood, or gauge redundancy.

This example also points to a useful graphical object: the \emph{recoupling graph}.  Its vertices are fusion trees for \(2m\) leaves and its edges are elementary \(F\)-moves.  A local measurement is diagonal at some vertex of this graph.  A measurement in a different channel is diagonal after moving to another vertex.  Error histories can therefore be represented by paths in an enlarged graph whose edges include recouplings, braids, local insertions, and measurement outcomes.  A decoder on a many-puncture conformal-block code is, in part, an inference problem on this graph.

The recoupling graph uses the categorical data essentially.  The \(F\)-symbols change basis between diagnostic contexts, the \(R\)-symbols govern braid phases and mixing, and quantum dimensions enter state-sum weights.  A stabilizer description retains the special case in which these transformations reduce to Clifford-compatible binary data; the footprint formalism retains the full categorical labels.

\subsection{Repeated measurement and spacetime defects}\label{subsec:spacetime-defects}

A static topological code becomes fault tolerant only after repeated noisy syndrome extraction is included.  The same is true here.  If footprints are measured in time, errors become spacetime defect histories rather than purely spatial defect segments.  A false measurement outcome, a missed detection, or a faulty recoupling move is itself part of the spacetime history that must be decoded.

In the surface-code setting, repeated syndrome extraction produces a three-dimensional decoding complex: two spatial dimensions plus time.  Error chains can end both in space and along temporal measurement defects.  The field-theoretic analogue is a labelled stratified spacetime.  Spatial defects carry anyon labels or categorical charges.  Time-like sheets carry changing boundary conditions or measurement choices.  Junctions carry fusion or condensation data.  The measured footprint is the restriction of this spacetime network to the observed measurement slices.

Repeated measurements lead to a spacetime footprint map:
\begin{equation}\label{eq:spacetime-footprint-map}
  \Foot_T:\{\text{spacetime defect histories}\}\longrightarrow
  \prod_{t=1}^T\Spec(\cA_t).
\end{equation}
Two histories in the same fibre of \(\Foot_T\) are observationally indistinguishable under the chosen measurement schedule.  A decoder should choose a recovery class by minimizing an effective cost or maximizing a posterior weight over that fibre.  Topological amplitudes, conformal-block norms, and physical noise probabilities can all contribute to this weight.

This formulation does not require an initial dichotomy between ``data errors'' and ``measurement errors'': both are parts of a spacetime defect network.  In a circuit implementation, they may have different microscopic probabilities.  In the field theory, they are different labelled local pieces of the same history.  This is especially natural for measurement-based computation, where the distinction between resource-state geometry and measurement schedule is itself part of the computational model.

\subsection{Decoder graphs, hypergraphs, and nonabelian fibres}\label{subsec:decoder-hypergraphs}

In an abelian stabilizer code, elementary faults can often be encoded by a linear boundary map.  A fault produces a syndrome vector, syndromes add mod~\(2\), and decoding becomes a minimum-weight problem over chains with a fixed boundary.  Nonabelian field-theoretic codes need not admit such a linearization.  Fusion outcomes may be noninvertible, multiplicity spaces may occur, and the order of local operations may matter.  Even so, one can often build a useful decoder graph or hypergraph by linearizing only the measured footprint data.

Let \(\mathcal F\) be a set of elementary fault types.  A measurement schedule determines the set \(\mathcal S\) of elementary detection events: changed charge labels, unexpected pair outcomes, leaked total charge, failed condensation, or time-like measurement inconsistencies.  Each fault \(f\in\mathcal F\) has an incidence pattern
\[
  \partial_{\mathrm{fp}} f\subseteq \mathcal S.
\]
In an abelian code this incidence pattern is a vector over \(\bZ_2\).  In a nonabelian code it may be a labelled hyperedge: the same fault can create several detection events with correlated charge labels.  A decoder graph is recovered only after forgetting enough label data to obtain pairwise binary events.

At the circuit level, this construction is close to the detector-error-model formalism used in contemporary stabilizer QEC: elementary fault mechanisms are mapped to sets of detector flips and logical frame changes, with graph or hypergraph structure retained when faults trigger correlated events \cite{Gidney2021Stim,DerksEtAl2024}.  The dictionary is
\[
\begin{array}{c|c}
\text{footprint language} & \text{detector-error-model language}\\ \hline
\text{elementary fault history} & \text{fault mechanism}\\
\text{detection event} & \text{detector flip}\\
\text{incidence hyperedge} & \text{correlated detector pattern}\\
\text{coarse footprint decoder} & \text{graph/hypergraph decoder}
\end{array}
\]
The extra structure sought here is the retention of nonabelian labels, fusion constraints, residual logical action, and geometry-dependent weights after this coarse detector pattern has been formed.

This motivates a two-stage decoder: a coarse abelian pass identifies candidate fault regions, after which categorical labels and amplitudes refine the inference inside those regions.  Such a strategy mirrors practical decoders for nonabelian or highly correlated noise: an initial combinatorial pass reduces the search space, while a local tensor-network or dynamic-programming pass handles the richer internal labels.

The footprint fibre is the natural mathematical object behind this two-stage strategy.  The coarse decoder identifies a large fibre over an abelianized footprint.  The refined decoder decomposes this fibre into categorical subfibres.  A field-theoretic likelihood then assigns weights to the remaining histories.  Symbolically, one has maps
\[
  \{\text{histories}\}
  \xrightarrow{\ \Foot\ }
  \{\text{categorical footprints}\}
  \xrightarrow{\ \mathrm{ab}\ }
  \{\text{abelianized syndromes}\}.
\]
A conventional stabilizer decoder operates at the right-hand end.  A fully categorical decoder operates at the middle term.  A geometry-sensitive decoder also uses additional weights on the left-hand term.

This hierarchy allows incremental development of the programme.  Early examples need not begin with a complete nonabelian decoder.  It may be enough, in early examples, to use field-theoretic data to improve a conventional decoder locally: adjust edge weights near defects, distinguish degeneracies that a binary syndrome would merge, or identify when a nominally correct recovery differs by a nontrivial anyonic loop.

\subsection{String-net and Turaev--Viro realizations}\label{subsec:string-net-tv}

Turaev--Viro codes and Levin--Wen string-net models provide a direct route to lattice realizations.  Given a unitary fusion category \(\cC\), the string-net Hilbert space is spanned by edge labelings subject to fusion constraints, and the ground space on a surface is a topological code.  For the Fibonacci category, the doubled Fibonacci theory is computationally powerful, and thresholds have been estimated numerically for the corresponding Turaev--Viro code \cite{KoenigKuperbergReichardt2010,SchotteZhuBurgelmanVerstraete2022}.

From the footprint point of view, violations of vertex or plaquette constraints are local measured sectors.  Equivalently, they are emergent anyon charges in \(Z(\cC)\), the Drinfeld center of the input category.  Decoding is inference over compatible string-net histories.  A fully developed state-sum decoder would assign weights to those histories by Turaev--Viro amplitudes, tube-algebra data, or conformal-block likelihood factors when a CFT realization is available.

The phrase \emph{state-sum decoder} should be understood cautiously.  It need not mean that one literally sums the full Turaev--Viro invariant for every possible error history.  Rather, it means that the local ingredients of the decoder should be inherited from the same state-sum data as the code itself: \(F\)-symbols, quantum dimensions, admissibility constraints, tube-algebra sectors, and boundary conditions.  A statistical decoder built from these data would be the nonabelian analogue of the random-plaquette gauge model associated with surface-code decoding.

In such a model, a defect history \(H\) would be assigned a weight of the schematic form
\begin{equation}\label{eq:statesum-weight}
  W(H)=\mu(H)\,\prod_{v} A_v(H)\prod_e A_e(H)\prod_f A_f(H),
\end{equation}
where \(\mu(H)\) is the physical noise probability and the remaining factors are local categorical or state-sum amplitudes.  Equation~\eqref{eq:statesum-weight} is a template whose ingredients depend on the model.  In some models the amplitudes may be signs, phases, quantum dimensions, or local Boltzmann weights.  In others, a tensor-network contraction may replace the explicit product.  The essential point is that the decoder should know the same local fusion data that define the code.

\subsection{Condensation, domain walls, and code deformation}\label{subsec:condensation}

Anyon condensation gives a parallel language for boundaries, domain walls, and code deformation.  In categorical terms, a boundary or wall is often described by a module category, a bimodule category, or a condensable algebra object; lattice and categorical models of gapped boundaries and domain walls provide a standard reference point \cite{KitaevKong2012}.  A topological charge which condenses at a boundary becomes locally invisible there; a charge which does not condense remains detectable.  In footprint language, the boundary changes which local charges are visible.

In the color-code setting, fault-tolerant logical operations and Floquet-style dynamical codes can be interpreted through condensation processes and spacetime domain walls \cite{KesselringEtAl2024}; the dynamically generated logical-qubit construction of Hastings--Haah is a canonical reference for the Floquet-code viewpoint \cite{HastingsHaah2021}.  In the present terminology, a time-like domain wall is a spacetime defect.  A measurement sequence is a controlled evolution of boundary conditions or condensable algebras.  The footprint is the pattern of charges that fail to condense, are transported across the wall, or are converted into other sectors.

In this language, code deformation changes the footprint map in a controlled way.  Suppose a wall \(W\) separates phases \(\cC_1\) and \(\cC_2\).  A charge \(a\in\cC_1\) approaching the wall may map to a sum of charges in \(\cC_2\), may be absorbed, or may be confined.  The relevant diagnostic question is how the local sector data transform across the wall:
\[
  a\quad \longmapsto\quad \sum_b N^W_{a,b}\,b.
\]
The integers or multiplicity spaces \(N^W_{a,b}\) encode a wall-crossing footprint.  A decoder across a deformed code must track these transformations in spacetime.

Condensation becomes essential once fault-tolerant operations are included.  Boundaries and code deformations are how many topological codes implement logical gates, prepare states, and change layouts.  A treatment of fault-tolerant operations must therefore include dynamical changes of the measured sector algebra.  Condensation supplies one categorical mechanism for doing so.

\subsection{Tensor-network and holographic realizations}\label{subsec:tensor-holographic}

Tensor-network codes provide another intermediate setting between abstract field theory and hardware-level circuits.  A tensor network assigns finite-dimensional tensors to vertices and contracts internal legs according to a graph.  In many topological examples, these tensors can be interpreted as discretized pair-of-pants decompositions, string-net tensors, or state-sum amplitudes.  In holographic examples, the network geometry controls which bulk degrees of freedom are protected against which boundary erasures, in the broader QEC interpretation of bulk reconstruction \cite{AlmheiriDongHarlow2015,PastawskiYoshidaHarlowPreskill2015}.

The footprint language has a natural interpretation in such networks.  A local bulk error can be pushed through tensors toward the boundary.  A boundary region sees the induced boundary action of the bulk operator, or the obstruction to pushing that operator away.  Thus the boundary signature of a bulk error is a footprint.  In stabilizer tensor networks this footprint may be an ordinary Pauli syndrome or a boundary stabilizer pattern.  In a categorical tensor network it may be a boundary charge sector, an annular idempotent, or a fusion-channel label.

The usual erasure-correction question then admits a field-theoretic refinement.  Besides asking whether a bulk operator can be reconstructed on a boundary region, one can ask which local footprints are visible there.  Let \(R\) be a boundary subsystem and \(R^{c}\) its complement.  A bulk defect history \(H\) may leave a footprint on \(R\), on \(R^c\), or only on their union.  The QEC question is whether these footprints distinguish correctable local errors without distinguishing logical bulk information.

Tensor networks are also useful computationally.  A categorical decoder may be expensive if formulated as a sum over labelled histories.  Tensor-network contraction gives an approximate or exact way to perform such sums when the geometry has bounded treewidth, hyperbolic hierarchy, or exploitable local structure.  This computational role has direct precedents in QEC, including tensor-network formulations of decoding and maximum-likelihood surface-code decoding \cite{FerrisPoulin2014,BravyiSucharaVargo2014}.  It is especially relevant for string-net and conformal-block models, where the same graphical calculus that defines the state space can also be used to contract likelihoods.

A distinction is needed here.  Holographic codes often emphasize erasure correction, complementary recovery, and bulk reconstruction, whereas the present paper concerns local noise, measured footprints, and syndrome-based recovery.  These questions overlap but are not identical.  A tensor-network realization of the footprint programme must specify the physical errors, the available boundary or bulk measurements, and the decoder.  Holographic geometry can organize these data, but it does not replace syndrome-admissibility.

\subsection{Measurement-based and hyperbolic routes}\label{subsec:mbqc-hyperbolic}

Measurement-based quantum computation supplies another route from local field-theoretic data to scalable architectures.  In the Raussendorf--Harrington--Goyal construction, topological fault tolerance is obtained from a three-dimensional cluster state, with defects and boundary conditions implementing protected operations \cite{RaussendorfHarringtonGoyal2007}.  The later programme of generating fault-tolerant cluster states from crystal structures expands this idea by using tilings and lattice geometry as design variables \cite{NewmanDeCastroBrown2020}.  In this language, the measurement pattern also defines a spacetime geometry.

The footprint formalism fits naturally with this point of view.  A cluster-state measurement pattern produces a classical record.  The record is interpreted by local parity constraints, correlation surfaces, and defect boundaries.  These are stabilizer shadows of a broader field-theoretic situation in which the observed data are local footprints of a spacetime history.  When the resource state has a topological or categorical origin, the same record can be lifted from Pauli parity language to charge-sector language.

Hyperbolic geometry is relevant because negative curvature can improve asymptotic rate or overhead relative to Euclidean constructions.  Canonical constructions include two-dimensional hyperbolic surface codes with constant-rate tradeoffs and threshold studies \cite{BreuckmannTerhal2016}, as well as homological codes from higher-dimensional arithmetic hyperbolic manifolds \cite{GuthLubotzky2014}.  Recent work develops and benchmarks CSS codes on hyperbolic lattices \cite{MahmoudAliRayan2026}, while hyperbolic cluster states have been proposed as constant-rate three-dimensional MBQC resources with threshold behaviour comparable to Euclidean cluster-state constructions \cite{MahmoudTournaireBachmannRayan2026}.  These results do not follow from the categorical footprint theorem.  They instead provide concrete geometric families in which footprint-based questions can be tested.

The field-theoretic question is what additional structure negative curvature should carry.  A hyperbolic lattice already has non-Euclidean combinatorics.  It may also support natural moduli, holonomy, Fuchsian group symmetries, and representation-theoretic data.  A footprint decoder on such a geometry could therefore combine several layers:
\[
\begin{array}{ccl}
\text{local check data} &\rightsquigarrow& \text{stabilizer or categorical footprints},\\
\text{hyperbolic combinatorics} &\rightsquigarrow& \text{growth, rate, and logical-network structure},\\
\text{holonomy or moduli} &\rightsquigarrow& \text{geometry-sensitive likelihood corrections},\\
\text{physical noise} &\rightsquigarrow& \text{hardware-specific weights}.
\end{array}
\]
There is no claim that every hyperbolic code should be recast in conformal or categorical terms.  Rather, hyperbolic geometry brings topology, representation theory, and decoding into contact, making it a plausible setting in which to test the footprint formalism.

\subsection{Logical operations as controlled motion in footprint space}\label{subsec:logical-motion}

A code architecture must support both memory and logical operations.  In a topological anyon model, logical gates arise from braiding, fusion measurement, magic-state injection, or code deformation.  In a surface-code architecture, logical operations arise from lattice surgery, twist defects, boundary motion, and measurement-based protocols.  In the footprint language, all of these are controlled motions through a space of measurement contexts.

Let \(\cA_0,\cA_1,\ldots,\cA_T\) be a sequence of syndrome-admissible algebras.  A fault-tolerant operation should transport the code space through this sequence while keeping harmful histories distinguishable or unlikely.  The ideal operation is the induced parallel transport, braid, wall-crossing functor, or recoupling transformation on the protected subspace.  Faults during the operation are spacetime defects whose footprints are measured by the changing algebras.

This separates a logical gate into two components:
\[
  \text{logical gate}
  =
  \text{controlled field-theoretic transport}
  +
  \text{fault-tolerant footprint monitoring}.
\]
The first component may be topological, as in braiding.  It may be conformal, as in monodromy of conformal blocks.  It may be categorical, as in a sequence of \(F\)- and \(R\)-moves.  It may be geometric, as in transport over a moduli space.  The second component is the QEC requirement: the monitoring must not reveal the logical state, and the remaining ambiguities must be correctable.

This framing is particularly useful for distinguishing mathematical universality from fault-tolerant universality.  A modular category may have braid group representations dense in a unitary group, as in the standard modular-functor universality results \cite{FreedmanLarsenWang2002}, or it may require supplemental measurements to become universal.  That is a statement about the ideal protected space.  A fault-tolerant architecture must also show that the corresponding spacetime histories can be monitored by syndrome-admissible footprints with a growing distance or suppressing likelihood.  The footprint formalism keeps these two requirements separate.

\subsection{From local tests to architecture principles}\label{subsec:architecture-principles}

The examples above suggest a practical workflow for designing field-theoretic QEC architectures.

The mathematical requirements can be organized as a three-level test.  At the first level one checks 
\emph{sector existence}: the proposed local measurement should arise from an actual decomposition of the state space, such as a fusion-tree edge, a tube-algebra idempotent, a boundary-condition summand, or a conformal-block factorization channel.  At the second level one checks 
\emph{sector compatibility}: the measurements used in one round should define a commuting algebra or a specified ordered instrument.  At the third level one checks 
\emph{sector harmlessness}: the information not measured inside each fibre should satisfy the fibrewise Knill--Laflamme scalar equations.  The first level is categorical, the second is operational, and the third is genuinely error-correcting.

A concrete architecture requires a field-theoretic input; a physical or combinatorial realization \(\Gamma\); a family of local footprint projectors; a syndrome-admissible commuting subalgebra for each measurement round; a noise model and likelihood rule on histories; and, finally, a distance, threshold, or finite-size scaling analysis.

These checks prevent algebraic size, geometric structure, or categorical richness from being mistaken for fault tolerance; robustness still requires the compatibility and scaling conditions above.

For each proposed architecture, four questions should be answered.
\begin{enumerate}[label=(\roman*)]
\item What is the local field-theoretic datum measured by the syndrome apparatus?
\item Which local histories have the same measured footprint?
\item Which differences between such histories act logically?
\item What geometry, topology, or physical noise model suppresses those logical differences?
\end{enumerate}
The questions do not depend on a particular hardware platform; they apply equally to anyons, surface-code lattices, cluster states, conformal blocks, oscillator codes, and geometric quantizations.

The next two sections change register.  Sections~2--8 contain the formal definitions, exact-correction results, worked Ising examples, and the conditional Peierls threshold criterion.  Sections~9--10 develop representation-theoretic and geometric extensions whose status ranges from concrete test constructions to longer-term proposals.  Unless a theorem or proposition is stated explicitly, these sections should be read as directions to be verified in specific code families, not as consequences of the preceding results.

\section{Representation-theoretic enlargements}\label{sec:repr-directions}

The categorical core was phrased in terms of unitary fusion categories, many of which arise from representation theory.  Representation-theoretic models also make projectors, braidings, traces, and sector decompositions explicit.  We consider several such directions, all based on the observation that a measured footprint may be realized by an idempotent or isotypic projection for an algebra action.

\subsection{Hopf-algebraic realizations of sector projectors}\label{subsec:hopf-sector-projectors}

Quantum groups and Hopf algebras give one standard algebraic source for the categories appearing above.  In Reshetikhin--Turaev theory, quantum-group representation categories provide braided and, after appropriate semisimplification or root-of-unity constructions, modular tensor categories from which three-dimensional TQFTs are built \cite{Drinfeld1985,Drinfeld1986,Jimbo1985,ReshetikhinTuraev1991,Turaev1994,ChariPressley1994,Kassel1995}.

Let \(H\) be a quasi-triangular Hopf algebra and let \(\operatorname{Rep}(H)\) be a suitable semisimple category of finite-dimensional representations.  The coproduct
\[
  \Delta:H\longrightarrow H\otimes H
\]
controls tensor products.  The antipode gives duality, and a universal \(R\)-matrix gives braiding operators.  A collection of local insertions labelled by modules \(V_1,\ldots,V_m\) therefore has a composite footprint obtained by decomposing
\[
  V_1\otimes\cdots\otimes V_m
\]
into simple or admissible summands.  In this setting the footprint projectors are representation-theoretic sector projectors.

In the semisimple case, central idempotents give a concrete model.  If
\[
  V_1\otimes\cdots\otimes V_m\cong \bigoplus_a W_a\otimes M_a,
\]
where \(W_a\) runs over simple modules and \(M_a\) is the multiplicity space, then the central idempotent \(e_a\) projects onto the \(a\)-isotypic component.  The coarse footprint records \(a\).  A refined footprint may record a measured multiplicity label or, more invariantly, a projector in \(M_a\).  These idempotents are algebraic sector projectors.  Their physical measurability is a separate condition on the realization \(\Gamma\) and on the measurement model.

Fusion multiplicities should therefore be retained.  In a multiplicity-free category, a total charge label may determine a local sector completely.  In a category with nontrivial fusion multiplicities, the same total charge may occur through several channels.  A footprint can then be coarse, recording only the simple summand \(W_a\), or fine, recording part of the multiplicity space \(M_a\).  Whether the fine data are measurable without disturbing logical information is an operational question.  Mathematically, however, the multiplicity space is part of the footprint fibre.

\subsection{Tube algebras and annular footprints}\label{subsec:tube-algebra}

String-net models naturally lead from local fusion projectors to annular algebras.  The Ocneanu tube algebra records labelled annuli and their composition; its simple modules classify bulk anyons in the Drinfeld center \(Z(\cC)\) \cite{Ocneanu1994Tube}.  This point of view is closely related to the theorem that the center of a suitable tensor category is modular \cite{Mueger2003}.  For footprint decoding, the annular picture is attractive because many local measurements are not point-like.  They surround a region and ask what total charge is enclosed.

A disk-like footprint answers: what charge exits a cluster?  An annular footprint answers: what charge threads an annulus?  In a lattice realization, this is the difference between a vertex violation and a loop or plaquette measurement enclosing a region.  In a modular category, the annular measurement is controlled by the action of closed labelled loops.  These loop operators are simultaneously diagonalized by topological charge sectors, and their eigenvalues are expressed through modular data.

The footprint algebra can therefore be enlarged beyond projectors attached to tensor-factor clusters.  One may also include annular operators generated by tube-algebra idempotents.  A measured sector can then record the bulk anyon type enclosed by a loop in addition to the fusion outcome of a small cluster.  In surface-code language this remains a syndrome measurement; in nonabelian language it measures a central sector of the tube algebra.

The annular language also clarifies the relation between stabilizer checks and categorical traces.  A stabilizer plaquette is a loop operator with eigenvalues \(\pm1\).  A string-net plaquette term is a weighted sum of loop insertions, with weights involving quantum dimensions.  Thus the usual binary check is recovered when the annular algebra has only abelian characters, while the general tube-algebra check records a richer charge sector.

\subsection{Yangians, affine Yangians, and commuting transfer data}\label{subsec:yangian-directions}

Yangians and affine Yangian-type algebras require more care.  Their inclusion would be justified only when the code family carries additional structure, such as a spectral parameter, an integrable transfer-matrix description, a quiver-variety realization, or a rational or trigonometric degeneration of conformal-block transport \cite{Molev2007,MaulikOkounkov2019,Tsymbaliuk2017}.  Mere formal analogy is not enough.

The most conservative role for such algebras is to organize commuting families of operators.  In integrable systems, transfer matrices often produce commuting observables indexed by a spectral parameter.  If a code family admits a compatible transfer-matrix construction, then some footprint algebras might be realized as specializations or degenerations of these commuting observables.  The measured sector would then be a joint eigenspace, and the decoder would use the residual ambiguity inside the joint spectrum.

A second possible role is geometric.  Yangian actions appear naturally in the quantum cohomology of quiver varieties and related symplectic resolutions \cite{MaulikOkounkov2019}.  If a family of codes is built from a moduli space whose geometry carries such an action, then the representation-theoretic decomposition may organize logical spaces, defect sectors, or wall-crossing operations.  The proposal is speculative but testable: one must identify a code state space, an algebra action on it, and syndrome-admissible sector projectors.

A third role is through rational conformal limits.  KZ-type connections, quantum-group monodromy, and Yang--Baxter structures are already linked in conformal field theory and integrable models.  In a decoding context, monodromy around collision divisors may generate recoupling and braid operations, while commuting Hamiltonians may control likelihood flows or asymptotics.  Here again, the key question is not whether Yangians are present in the surrounding mathematics.  It is whether their idempotents or joint spectra define measured footprints that satisfy the operational constraints of QEC.

\subsection{Nonsemisimple and modified-trace questions}\label{subsec:nonsemisimple}

The present paper uses a finite semisimple unitary fusion-category core because it gives a vacuum-sector scalar conclusion: a neutral contractible composite evaluates in \(\End_{\cC}(\one)\cong\bC\).  This conclusion is central to the categorical Knill--Laflamme argument.  Many interesting representation categories, however, are not semisimple.  Root-of-unity quantum groups before semisimplification, logarithmic conformal field theories, and categories with projective modules all suggest possible extensions beyond the present hypotheses.

In a nonsemisimple setting, the phrase ``acts as a scalar'' may need to be replaced by a statement involving negligible morphisms, modified traces, projective ideals, or a quotient category.  The footprint projectors themselves may no longer be honest orthogonal projections in a unitary Hilbert space.  They may instead be idempotents in a Karoubian envelope, projectors after semisimplification, or generalized eigenspace projectors for a nonnormal operator.

The difficulty lies in the QEC interpretation rather than in the footprint idea itself.  Exact correction requires an inner product, distinguishable measurement sectors, and recovery maps.  A nonsemisimple categorical model must therefore be paired with an analytic or physical realization that explains how generalized sector data become observable.  The mathematical direction is nevertheless promising, because nonsemisimple categories contain logarithmic and critical phenomena that are invisible in purely semisimple theories.

\section{Algebro-geometric and analytic directions}\label{sec:geometric-directions}

The conformal-block construction already introduces geometry through puncture positions, sewing parameters, and block norms.  This section asks how much further that dependence can be developed.  Besides topological sector labels, a decoder may in principle use information carried by meromorphic connections, spectral curves, Jacobians, and abelian varieties.

\subsection{Conformal blocks over moduli and sewing likelihoods}\label{subsec:moduli-sewing}

For conformal-block codes on marked curves, the state spaces form vector bundles over moduli.  In genus zero, the relevant open moduli space is \(M_{0,n}\), and its stable compactification \(\overline M_{0,n}\) records collisions and degenerations of marked points.  The conformal-block bundle carries a projectively flat connection, such as the Knizhnik--Zamolodchikov connection in Wess--Zumino--Witten models \cite{KnizhnikZamolodchikov1984,TsuchiyaUenoYamada1989,BeauvilleLaszlo1994}.  Parallel transport along paths in the configuration space implements braid and monodromy operations.

Near a boundary divisor of \(\overline M_{0,n}\), sewing coordinates identify a degeneration in which the curve splits into two components joined by a node.  Factorization expresses the conformal-block space as a direct sum over intermediate charges:
\begin{equation}\label{eq:sewing-factorization-expanded}
  \cV_{\vec X}(C)
  \cong
  \bigoplus_{a\in\Irr(\cC)}
  \cV_{\vec X_I,a}(C_1)\otimes
  \cV_{a^*,\vec X_{I^c}}(C_2),
\end{equation}
up to the usual choices and normalizations.  This is the analytic counterpart of the categorical total-charge decomposition.  The intermediate label \(a\) is precisely the footprint sector associated with the degeneration.

A geometry-sensitive decoder can use this factorization in two ways.  First, it can interpret a measured local charge as selecting one summand in \eqref{eq:sewing-factorization-expanded}.  Second, it can use the asymptotic behavior of conformal blocks in the sewing parameter \(q\).  A typical expansion has the schematic form
\begin{equation}\label{eq:conformal-block-asymptotic}
  \mathcal F_a(q)=q^{h_a-h_I-h_{I^c}}\bigl(c_a+O(q)\bigr),
\end{equation}
where \(h_a\) is a conformal weight and the exponent depends on the chosen channel.  Thus the same footprint sector may be more or less likely depending on the geometry.  In a topological theory only the label \(a\) remains.  In a conformal theory the position in moduli changes the relative weights of the sectors.

This makes the distinction between footprint and likelihood especially clear.  The footprint is the sector label \(a\).  The likelihood is the weight attached to that sector, which may depend on \(q\), on Hermitian metrics, on normalization conventions, and on the physical noise model.  A conformal-block decoder therefore has an inference problem that varies over moduli.

\subsection{Higgs bundles and spectral curves}\label{subsec:higgs-spectral-expanded}

There is a natural algebro-geometric extension of the same formalism over \(\bP^1\).  The conformal-block side supplies the connection picture: conformal-block bundles over \(M_{0,n}\) carry projectively flat meromorphic connections.  In semiclassical regimes one may pass to a parallel Higgs-bundle-theoretic picture, where a meromorphic connection is replaced by a Higgs field and its spectral curve, as in nonabelian Hodge theory and the spectral correspondence \cite{Hitchin1987,BeauvilleNarasimhanRamanan1989,Simpson1992}.

Let \(L=\mathcal O_{\bP^1}(t)\), let \(V\) be a rank \(r\) vector bundle on \(\bP^1\), and let
\[
  \Phi:V\longrightarrow V\otimes L
\]
be an \(L\)-twisted Higgs field.  The spectral curve \(S_\Phi\subset\operatorname{Tot}(L)\) is defined by
\begin{equation}\label{eq:spectral-equation-expanded}
  \det(\eta-\pi^*\Phi)=0,
\end{equation}
where \(\pi:\operatorname{Tot}(L)\to\bP^1\) is the projection and \(\eta\) is the tautological section of \(\pi^*L\).  For generic \(\Phi\), this is an \(r\)-fold cover of \(\bP^1\).  The discriminant of \eqref{eq:spectral-equation-expanded} has degree \(r(r-1)t\).  By Riemann--Hurwitz, the genus of a smooth generic spectral curve is
\begin{equation}\label{eq:spectral-genus-expanded}
  g(S_\Phi)=1-r+\frac{1}{2}r(r-1)t.
\end{equation}

Equation \eqref{eq:spectral-genus-expanded} matters because the generic Hitchin fibre is described by line bundles on \(S_\Phi\).  In the simplest smooth case it is an open subset of \(\Pic(S_\Phi)\), often subject to determinant or norm constraints.  The Hitchin system thereby replaces a nonabelian problem on \(\bP^1\) by abelian data on a spectral cover.  For decoding, the question is whether this abelianization can turn nonabelian local information into computable geometric weights.

For a QEC interpretation, additional code data are required.  The proposed dictionary is as follows:
\[
\begin{array}{ccl}
\text{Higgs field }\Phi &\rightsquigarrow& \text{semiclassical field configuration},\\
\text{spectral curve }S_\Phi &\rightsquigarrow& \text{branched cover carrying abelianized data},\\
\text{Jacobian }\Jac(S_\Phi) &\rightsquigarrow& \text{space of line-bundle or phase data},\\
\text{discriminant locus} &\rightsquigarrow& \text{collision, singularity, or enhanced-error locus},\\
\text{Hitchin base} &\rightsquigarrow& \text{parameter space for geometry-sensitive likelihoods}.
\end{array}
\]
The footprint formalism enters through the discriminant and factorization data.  A local collision of eigenvalues of \(\Phi\), or a degeneration of \(S_\Phi\), leaves locally visible algebraic data.  These data are not an error history themselves.  They are the algebro-geometric footprint of such a history.

\subsection{Rank-$2$ test cases on \texorpdfstring{\(\bP^1\)}{P1}}\label{subsec:rank-2-test-cases}

The formula \eqref{eq:spectral-genus-expanded} gives immediate small laboratories.  For rank \(r=2\) and twist \(t=2\), a generic spectral curve has genus
\[
  g(S_\Phi)=1-2+\frac{1}{2}\cdot2\cdot1\cdot2=1.
\]
Thus the generic Hitchin fibre is elliptic in nature.  For rank \(r=2\) and twist \(t=3\), the genus is
\[
  g(S_\Phi)=1-2+\frac{1}{2}\cdot2\cdot1\cdot3=2.
\]
The passage from \(t=2\) to \(t=3\) therefore changes the abelianized fibre from elliptic-curve data to genus-two Jacobian data.  These are small enough to compute explicitly but already rich enough to exhibit singular fibres and nontrivial discriminants.

For a rank-$2$ field, the spectral equation may be written locally as
\begin{equation}\label{eq:rank-2-spectral}
  \eta^2-a_1(z)\eta+a_2(z)=0,
\end{equation}
where \(a_1\in H^0(\bP^1,L)\) and \(a_2\in H^0(\bP^1,L^2)\), after choosing a trivialization.  The branch divisor is controlled by the discriminant
\begin{equation}\label{eq:rank-2-discriminant}
  \Delta(z)=a_1(z)^2-4a_2(z).
\end{equation}
When \(L=\mathcal O(t)\), the discriminant is a section of \(\mathcal O(2t)\), and its zeros are the branch points of the double cover.  Collisions of zeros of \(\Delta\) produce singular spectral curves.  From the footprint perspective, the local pattern of such collisions is a candidate algebraic footprint.

One possible toy decoder would proceed as follows.  Choose a family of Higgs fields whose spectral curves remain in a controlled region of the Hitchin base.  Interpret small perturbations of the coefficients \(a_i\) as geometric error histories.  Measure only coarse local data of the discriminant, such as whether branch points have entered specified windows or whether a local degeneration type has occurred.  The footprint fibre then consists of all perturbations with the same measured discriminant data.  A recovery rule chooses a correction by minimizing a geometric cost on the Hitchin base or on the corresponding Jacobian.

At this stage the construction remains an algebraic model of the same inverse problem.  A code construction additionally requires a Hilbert space, an encoding, an inner product, and physical measurement operations.  Nevertheless, such rank-$2$ models are useful because every piece can be computed: spectral curves, discriminants, Jacobians, singular fibres, and monodromy around discriminant strata.

\subsection{Matrix-valued polynomial models}\label{subsec:matrix-polynomial-models}

The rank-$2$ discussion extends to matrix-valued polynomial models.  Suppose
\[
  V=\bigoplus_{i=1}^r\mathcal O(d_i),
  \qquad L=\mathcal O(t),
\]
and consider an \(L\)-twisted Higgs field \(\Phi:V\to V\otimes L\).  The entry \(\Phi_{ij}\) is a section of
\[
  \mathcal O(-d_i+d_j+t).
\]
Thus the choice of splitting type \((d_1,\ldots,d_r)\) and twist \(t\) determines an explicit space of polynomial matrices.  Stability, spectral curves, and the Hitchin map can then be studied by direct algebraic methods.

For the footprint programme, this gives a bridge between abstract moduli and computable finite-dimensional data.  A local perturbation of a matrix entry can change the spectral discriminant.  A gauge transformation can move the perturbation without changing the underlying Higgs bundle.  A singular spectral curve can indicate a collision of eigenvalues.  These are algebraic analogues of local errors, gauge redundancies, and measured footprints.

The quotient by bundle automorphisms is important.  Two polynomial matrices may represent the same Higgs-bundle point.  Thus a geometric decoder should not infer histories in the raw affine space of matrix entries unless the gauge redundancy is included.  The correct footprint fibre is a quotient object: histories with the same measured local algebraic data, modulo transformations that do not change the encoded geometric state.

This mirrors the stabilizer situation.  In a stabilizer code, two error operators differing by a stabilizer act the same on the code.  In a Higgs-bundle model, two perturbations differing by a gauge transformation may represent the same geometric state.  The analogy is structural: in both cases, decoding must be performed modulo an equivalence relation.

\subsection{Jacobians, abelian varieties, and GKP analogies}\label{subsec:jacobians-gkp}

The appearance of Jacobians suggests a bridge to continuous-variable and oscillator codes.  The original Gottesman--Kitaev--Preskill construction encodes finite-dimensional quantum information into an oscillator using phase-space lattice symmetries \cite{GottesmanKitaevPreskill2001}.  A recent algebro-geometric framework of Mayrand and Royer relates GKP-type codes to polarized complex abelian varieties: in their dictionary, symplectically integral lattices define polarized abelian varieties, finite-dimensional code spaces are spaces of theta functions, logical Pauli gates arise from the theta group, and concatenation with stabilizer codes corresponds to isogeny \cite{MayrandRoyer2026}.

This fits naturally with the geometric direction above.  A smooth spectral curve \(S\) has a Jacobian \(\Jac(S)\), a complex abelian variety.  Quantization of line bundles on abelian varieties produces theta-function spaces.  If a Hitchin fibre or compactified Jacobian is used as an auxiliary geometric space for a field-theoretic code, then the Mayrand--Royer dictionary suggests a way to connect the resulting abelian variety to oscillator-code ideas.  The resulting connection would be a common geometric language in which conformal-block and GKP constructions use abelian varieties and theta functions to organize quantum states.

The most concrete speculative path is the following.  Start from a conformal-block or fusion-space code whose semiclassical limit is controlled by a Hitchin system.  Move to a region of the Hitchin base where the spectral curve is smooth.  Abelianize the data on \(S\), producing a Jacobian or Prym variety.  Quantize a polarization on this abelian variety, obtaining a theta-function space.  Then ask whether local field-theoretic footprints correspond to short displacement errors, theta-group characters, or isogeny kernels in the associated abelian variety.  If so, the GKP and fusion-code languages would meet through spectral geometry.

Several distance-like quantities can be compared in this setting.  In GKP codes, failure is controlled by short nontrivial displacements relative to a lattice.  In topological codes, it is controlled by short nontrivial cycles or defect networks.  Jacobians bring lattices, polarizations, theta groups, and systolic invariants into the same discussion.  One concrete question is whether footprint distance admits a systolic or polarization-dependent interpretation in an appropriate geometric limit.

\subsection{Continuous-variable footprints}\label{subsec:cv-footprints}

The abelian-variety direction also points toward a version of the footprint idea for continuous-variable codes.  In a GKP-type code, small phase-space displacements are correctable up to lattice equivalence.  The measured data record displacement modulo the stabilizer lattice, while a logical error corresponds to a displacement that crosses into a nontrivial coset.  Thus the syndrome is a phase-space footprint: a local displacement leaves a measured residue, but that residue does not determine the exact displacement history.

In the usual square-lattice GKP picture, the relevant geometry is flat phase space with a symplectic lattice.  In the Mayrand--Royer algebro-geometric picture, the same structure is organized by polarized complex abelian varieties and theta functions \cite{MayrandRoyer2026}.  This replacement is significant for the present programme because it puts continuous-variable QEC into a language closer to Jacobians, polarizations, and moduli.  The field-theoretic question becomes: when does a geometric quantization of an abelian variety carry a footprint measurement analogous to a stabilizer displacement syndrome?

One possible abstraction is the following.  Let \(A\) be a polarized abelian variety with a finite-dimensional theta-function space \(H^0(A,L)\).  Let \(K\) be a finite subgroup or finite quotient associated with the polarization, theta group, or an isogeny.  Small displacement errors are elements of a continuous group, while measured syndromes record their images modulo a lattice or finite subgroup.  The footprint fibre consists of displacements with the same measured residue.  The logical ambiguity is the residual action on \(H^0(A,L)\).

The comparison with the topological case is concrete.  A surface-code error chain is observed through its boundary and remains ambiguous up to homology.  A GKP displacement is observed modulo a lattice and remains ambiguous up to a finite logical quotient.  On a spectral-curve Jacobian, cycles give a lattice, theta functions give a state space, and polarizations define finite quotients.  Whether these structures can be connected by an actual QEC model is the substantive question behind the Jacobian direction.

The continuous-variable setting also emphasizes approximate correction.  Ideal GKP codewords are not normalizable, and physical realizations use finite-energy approximations.  Thus the exact scalar condition of the present paper would need to be relaxed to an approximate Knill--Laflamme condition.  This changes the role of likelihoods because tails of approximate wavefunctions contribute to failure probabilities.  A geometric footprint theory for oscillator codes would therefore have to combine the exact sector language of theta groups with analytic estimates coming from finite-energy states.

\subsection{Singular spectral curves and compactified fibres}\label{subsec:singular-spectral}

Any geometric extension must also treat singular loci.  Spectral curves become singular along the discriminant, conformal blocks degenerate at boundary divisors, and physical punctures may collide.  Error histories can drive the system toward these loci, so excluding them would remove precisely the regimes in which the geometric model may become most informative or most unstable.

When \(S_\Phi\) is singular, the ordinary Jacobian is replaced by a compactified Jacobian or a moduli space of torsion-free rank-one sheaves.  The fibre may acquire multiple components, singular strata, or vanishing cycles.  From a decoding perspective, these features may indicate enhanced ambiguity.  A singular point can mean that several sectors have become difficult to distinguish, that a local approximation has broken down, or that additional massless or low-cost histories have appeared.

There is a limited but potentially useful analogy with footprint fibres.  At a discriminant point of the Hitchin base, the abelianized description ceases to be generic.  A footprint fibre, by contrast, consists of histories not distinguished by the chosen measurement.  In both cases, the available data fail to separate objects that are distinct before projection to the base.  This raises the question of whether singularity theory, compactified Jacobians, or wall-crossing can diagnose ill-conditioning in a geometric decoder.

For \(\bP^1\)-based twisted Higgs bundles, these questions are especially concrete.  If \(V=\bigoplus_i\mathcal O(d_i)\) and \(L=\mathcal O(t)\), then the entries of an \(L\)-twisted Higgs field have degrees
\[
  \deg(\Phi_{ij})=-d_i+d_j+t.
\]
The result is an explicit matrix-valued polynomial model.  In low rank and small twist, the discriminant, spectral curve, and singular fibres can be computed directly.  These cases provide concrete tests of the proposed correspondence between local algebraic degenerations and QEC footprints \cite{Rayan2013NYJM,RayanSundbo2018,Rayan2018SIGMA}.

\subsection{Meromorphic connections, Stokes data, and irregular footprints}\label{subsec:stokes-footprints}

The connection side of nonabelian Hodge theory also suggests a refinement.  Meromorphic connections with irregular singularities have monodromy data enriched by Stokes matrices.  Boalch's symplectic approach to moduli spaces of meromorphic connections gives a geometric framework for such generalized monodromy data \cite{Boalch2001}.  In a field-theoretic decoding problem, Stokes data can be viewed as an analytic footprint of an irregular singularity.

The analogy has concrete content.  A regular singular point contributes monodromy around a puncture.  An irregular singular point contributes additional sectorial data depending on asymptotic directions.  If an error history produces or interacts with an irregular defect, then the locally visible data may include not just a charge label but also a Stokes sector.  A decoder would then need to infer histories compatible with both monodromy and Stokes footprints.

Such a direction would move the present framework from rational conformal blocks toward wild character varieties and irregular connections.  It is significantly more analytic than the finite fusion-category core, and it would require a careful Hilbert-space interpretation.  Nevertheless, it fits the same pattern: local singular behavior leaves sector data, the measurement model decides which sector data are visible, and decoding is inference over histories with fixed measured data.

\subsection{Geometric likelihoods and optimization over moduli}\label{subsec:geometric-likelihoods}

The geometric directions above all point to a broader idea: likelihoods may themselves be moduli-dependent.  In ordinary decoding, one often assumes independent identically distributed local noise.  In a field-theoretic model, the probability or amplitude of an error history may depend on geometry.  Puncture separations, conformal cross-ratios, hyperbolic distances, spectral-curve degenerations, and abelian-variety systoles can all affect which histories dominate.

A geometry-sensitive decoder would have two tasks: decode for a fixed geometry and help select the geometry itself.  For example, one might optimize puncture positions to maximize the gap between likely correctable histories and likely logical histories.  One might choose a polarization on an abelian variety to improve a shortest-displacement invariant.  One might choose a hyperbolic lattice or tiling to increase rate while maintaining a useful footprint distance.  These are design problems over moduli spaces.

The categorical theorem and the geometric discussion play different roles.  The theorem gives the local exact-correction constraint: neutral composites in the relevant footprint fibres must act as scalars.  Geometry can then supply additional data for ranking non-neutral or globally ambiguous histories.  A successful architecture would use the categorical condition to control local logical leakage and geometric information to suppress or distinguish global ambiguities.

\subsection{A consolidated geometric dictionary}\label{subsec:geometric-dictionary}

The geometric material can be summarized in a dictionary.  The entries are not definitions of a single theory; rather, they identify recurring roles played by geometric objects in possible field-theoretic decoders.

\begin{center}
\renewcommand{\arraystretch}{1.2}
\begin{tabular}{>{\raggedright\arraybackslash}p{0.28\linewidth}>{\raggedright\arraybackslash}p{0.64\linewidth}}
\toprule
Geometric object & Possible QEC role \\
\midrule
Marked curve or punctured sphere & Spatial datum carrying conformal-block state spaces and braid histories. \\
Conformal-block bundle & Family of code or ambient spaces varying over moduli. \\
Projectively flat connection & Ideal transport, braid representation, or monodromy operation. \\
Sewing parameter & Local coordinate controlling factorization and sector likelihoods. \\
Boundary divisor in \(\overline M_{0,n}\) & Collision or degeneration locus where a local footprint sector becomes visible. \\
Higgs field & Semiclassical field configuration or connection-side limit. \\
Spectral curve & Abelianized cover carrying eigenvalue and line-bundle data. \\
Discriminant & Local degeneration data; possible algebraic footprint of a perturbation. \\
Jacobian or Prym variety & Abelian variety organizing line-bundle, theta-function, or phase-space data. \\
Polarization & Choice of quantization and possible distance/systolic structure. \\
Compactified Jacobian & Replacement for the smooth fibre over singular spectral curves. \\
Stokes data & Irregular analytic footprint of meromorphic singularities. \\
\bottomrule
\end{tabular}
\end{center}

The table also indicates where the programme is most vulnerable.  Each row becomes a QEC statement only after a physical Hilbert space, an encoding, an allowed error family, and a measurement model are supplied.  The geometric object alone is not the code.  Its role is to organize sector data and likelihoods in a way that may be useful for decoding.

\section{Outlook and problems}\label{sec:outlook-problems}

The formalism developed in the paper leaves a concrete research programme.  The following problems are intended to make that programme concrete.  They are grouped around the themes of the paper: categorical footprints, scalable architectures, conformal likelihoods, representation theory, and geometry.

\begin{problem}[Classification of syndrome-admissible footprint algebras]\label{prob:classify-syndrome-algebras}
Let \(\cC\) be a unitary fusion category and let \(\cH_L\subset\Hom(\one,X_1\otimes\cdots\otimes X_n)\) be a chosen code subspace.  Classify the commutative subalgebras generated by compatible total-charge projectors which are syndrome-admissible for a specified family of errors.  In particular, determine when a local measurement algebra separates error representatives without revealing logical information.
\end{problem}

\begin{problem}[Multiplicity-sensitive footprints]\label{prob:multiplicity-sensitive}
Develop a version of the footprint formalism that records fusion multiplicity data in a controlled way.  Determine which projectors in multiplicity spaces can be measured without violating the no-logical-information condition, and identify examples where coarse total-charge footprints are insufficient but multiplicity-refined footprints give a correctable syndrome algebra.
\end{problem}

\begin{problem}[From the six-\(\sigma\) code to growing syndrome families]\label{prob:six-sigma-growing}
Starting from Proposition~\ref{prop:six-sigma-footprint-correction}, construct growing Ising-anyon families in which fixed-charge subspaces provide redundant logical encodings and local pair- or block-charge measurements form syndrome-admissible commuting rounds.  Determine which physically local error generators are resolved by those rounds, characterize same-footprint residual ambiguities, and prove a distance or threshold statement for the resulting family.
\end{problem}

\begin{problem}[State-sum decoders]\label{prob:state-sum-decoders}
For a Levin--Wen or Turaev--Viro code associated to a unitary fusion category \(\cC\), construct a decoder whose local weights are built from the same categorical data as the code: admissibility rules, \(F\)-symbols, quantum dimensions, tube-algebra sectors, and boundary labels.  Compare the resulting decoder with conventional minimum-weight or tensor-network decoders.
\end{problem}

\begin{problem}[Verifying the Peierls hypotheses]\label{prob:footprint-distance-threshold}
For concrete families of string-net, anyonic, or conformal-block codes, make Definition~\ref{def:footprint-distance} and the Peierls hypotheses of Definition~\ref{def:peierls-footprint-family} explicit.  Determine when local neutralizability below scale \(L\) follows from a topological, categorical, or geometric distance, and prove componentwise balance for natural decoders beyond the binary chain case of Proposition~\ref{prop:minimum-weight-balance}.
\end{problem}

\begin{problem}[Spacetime footprint decoding]\label{prob:spacetime-footprint-decoding}
Develop a spacetime decoding theory for time-dependent footprint algebras \(\cA_t\).  Treat data errors, measurement errors, recoupling faults, braid faults, and domain-wall faults as components of a single labelled spacetime defect history.  Identify conditions under which repeated noisy measurements yield a reliable recovery protocol.
\end{problem}

\begin{problem}[Condensation and dynamically changing syndrome algebras]\label{prob:condensation-syndrome}
Describe how anyon condensation, gapped boundaries, and domain walls transform footprint projectors.  Given a spacetime sequence of condensable algebras or module categories, determine the induced transformation of measured sectors and formulate a Knill--Laflamme-type condition for the resulting dynamical code deformation.
\end{problem}

\begin{problem}[Conformal-block likelihoods]\label{prob:conformal-likelihoods}
For conformal-block codes over \(M_{0,n}\), define likelihood models using Hermitian metrics, projectively flat connections, and sewing asymptotics.  Determine how the log-likelihood ratio between footprint sectors varies with cross-ratios and how this variation affects decoding.
\end{problem}

\begin{problem}[Hyperbolic field-theoretic architectures]\label{prob:hyperbolic-architectures}
Investigate whether hyperbolic surface-code and cluster-state constructions admit natural categorical or conformal enhancements.  In particular, determine whether hyperbolic holonomy, Fuchsian group representations, or moduli of hyperbolic structures can be used to refine footprint weights without compromising the operational simplicity of existing hyperbolic decoders.
\end{problem}

\begin{problem}[Yangian and transfer-matrix sector projectors]\label{prob:yangian-projectors}
Identify code families for which a Yangian, affine Yangian, quantum loop algebra, or related integrable structure acts on the relevant state spaces.  Determine whether commuting transfer matrices or Bethe subalgebras produce syndrome-admissible footprint algebras, and whether their spectra improve decoding or logical-gate design.
\end{problem}

\begin{problem}[Nonsemisimple scalar conditions]\label{prob:nonsemisimple-scalar}
Extend the categorical Knill--Laflamme argument beyond semisimple unitary fusion categories.  Determine the correct replacement for the scalar conclusion in nonsemisimple settings: for instance, whether composites should become scalar only after semisimplification, after quotienting negligible morphisms, or after evaluating modified traces on projective sectors.
\end{problem}

\begin{problem}[Higgs-bundle and spectral-curve decoders]\label{prob:higgs-decoder}
Construct explicit examples in which Higgs-bundle spectral data produce geometry-sensitive footprint weights.  In low-rank twisted Higgs-bundle models on \(\bP^1\), compute discriminants, singular spectral curves, compactified Jacobians, and candidate footprint fibres.  Determine whether these data can be connected to a concrete QEC recovery problem.
\end{problem}

\begin{problem}[Jacobians, theta functions, and GKP-type limits]\label{prob:jacobian-gkp}
Explore the relationship between spectral-curve Jacobians and GKP-type oscillator codes.  Determine whether theta-function spaces arising from Jacobians or Prym varieties can be interpreted as code spaces, whether theta-group actions realize logical Pauli operations, and whether footprint distance has a systolic interpretation on the underlying polarized abelian variety.
\end{problem}

\begin{problem}[Sharper and architecture-specific thresholds]\label{prob:field-theoretic-threshold}
Refine Theorem~\ref{thm:peierls-footprint-threshold} for specific architectures.  Optimize the constants for surface-code, string-net, Turaev--Viro, and measurement-based realizations; incorporate measurement noise and circuit-level faults; and compare minimum-weight, tensor-network, and state-sum decoders within the same footprint language.  A particularly important goal is to replace the abstract balance hypothesis by verifiable categorical or geometric criteria.
\end{problem}

These problems return to the distinction on which the paper is based.  Error correction is an inverse problem over histories compatible with measured local data, and the measurement model determines which part of the underlying field-theoretic footprint becomes classical syndrome information.  Topological stabilizer QEC is the most developed abelian instance of this pattern.  The open problem is to determine how far nonabelian, conformal, representation-theoretic, and algebro-geometric versions can be carried.

\medskip
\noindent\textbf{Acknowledgements.} The author is grateful to Mahmud Azam for questions and feedback during a reading of a draft version of the manuscript.  The author also acknowledges Maxence Mayrand for some useful comments.  The author is also grateful to the Abdus Salam International Centre for Theoretical Physics (ICTP) and the organizers of the workshop Strings \& Geometry 2025 for their hospitality and for a stimulating programme during which formative steps in this work were completed.  Finally, the author acknowledges the Natural Sciences and Engineering Research Council of Canada (NSERC) Discovery Grant program for partial support during the preparation of this manuscript.


\begin{thebibliography}{99}

\bibitem{KnillLaflamme1997}
E. Knill and R. Laflamme,
\emph{Theory of quantum error-correcting codes},
Phys. Rev. A \textbf{55} (1997), 900--911;
arXiv:quant-ph/9604034, doi:10.1103/PhysRevA.55.900.

\bibitem{Gottesman1997}
D. Gottesman,
\emph{Stabilizer codes and quantum error correction},
Ph.D. thesis, California Institute of Technology, 1997;
arXiv:quant-ph/9705052.

\bibitem{CalderbankShor1996}
A. R. Calderbank and P. W. Shor,
\emph{Good quantum error-correcting codes exist},
Phys. Rev. A \textbf{54} (1996), 1098--1105;
arXiv:quant-ph/9512032, doi:10.1103/PhysRevA.54.1098.

\bibitem{Steane1996}
A. M. Steane,
\emph{Error correcting codes in quantum theory},
Phys. Rev. Lett. \textbf{77} (1996), 793--797;
doi:10.1103/PhysRevLett.77.793.

\bibitem{KribsLaflammePoulin2005}
D. Kribs, R. Laflamme, and D. Poulin,
\emph{A unified and generalized approach to quantum error correction},
Phys. Rev. Lett. \textbf{94} (2005), 180501;
arXiv:quant-ph/0412076, doi:10.1103/PhysRevLett.94.180501.


\bibitem{BlumeKohoutNgPoulinViola2010}
R. Blume-Kohout, H. K. Ng, D. Poulin, and L. Viola,
\emph{Information preserving structures: A general framework for quantum zero-error information},
Phys. Rev. A \textbf{82} (2010), 062306;
arXiv:1006.1358, doi:10.1103/PhysRevA.82.062306.

\bibitem{BravyiLeemhuisTerhal2010}
S. Bravyi, B. Leemhuis, and B. M. Terhal,
\emph{Majorana fermion codes},
New J. Phys. \textbf{12} (2010), 083039;
arXiv:1004.3791, doi:10.1088/1367-2630/12/8/083039.

\bibitem{BondersonGurarieNayak2011}
P. Bonderson, V. Gurarie, and C. Nayak,
\emph{Plasma analogy and non-Abelian statistics for Ising-type quantum Hall states},
Phys. Rev. B \textbf{83} (2011), 075303;
arXiv:1008.5194, doi:10.1103/PhysRevB.83.075303.

\bibitem{Gottesman2014ConstantOverhead}
D. Gottesman,
\emph{Fault-tolerant quantum computation with constant overhead},
Quantum Inf. Comput. \textbf{14} (2014), 1338--1371;
arXiv:1310.2984, doi:10.26421/QIC14.15-16-5.

\bibitem{Gidney2021Stim}
C. Gidney,
\emph{Stim: a fast stabilizer circuit simulator},
Quantum \textbf{5} (2021), 497;
arXiv:2103.02202, doi:10.22331/q-2021-07-06-497.

\bibitem{DerksEtAl2024}
P.-J. H. S. Derks, A. Townsend-Teague, A. G. Burchards, and J. Eisert,
\emph{Designing fault-tolerant circuits using detector error models},
arXiv:2407.13826 (2024).

\bibitem{BreuckmannTerhal2016}
N. P. Breuckmann and B. M. Terhal,
\emph{Constructions and noise threshold of hyperbolic surface codes},
IEEE Trans. Inf. Theory \textbf{62} (2016), 3731--3744;
arXiv:1506.04029, doi:10.1109/TIT.2016.2555700.

\bibitem{GuthLubotzky2014}
L. Guth and A. Lubotzky,
\emph{Quantum error-correcting codes and 4-dimensional arithmetic hyperbolic manifolds},
J. Math. Phys. \textbf{55} (2014), 082202;
arXiv:1310.5555, doi:10.1063/1.4891487.

\bibitem{EllisonEtAl2023}
T. D. Ellison, Y.-A. Chen, A. Dua, W. Shirley, N. Tantivasadakarn, and D. J. Williamson,
\emph{Pauli topological subsystem codes from Abelian anyon theories},
Quantum \textbf{7} (2023), 1137;
arXiv:2211.03798, doi:10.22331/q-2023-10-12-1137.

\bibitem{BondersonShtengelSlingerland2008}
P. Bonderson, K. Shtengel, and J. K. Slingerland,
\emph{Interferometry of non-Abelian anyons},
Ann. Phys. \textbf{323} (2008), 2709--2755;
arXiv:0707.4206, doi:10.1016/j.aop.2008.01.012.

\bibitem{HastingsHaah2021}
M. B. Hastings and J. Haah,
\emph{Dynamically generated logical qubits},
Quantum \textbf{5} (2021), 564;
arXiv:2107.02194, doi:10.22331/q-2021-10-19-564.

\bibitem{FerrisPoulin2014}
A. J. Ferris and D. Poulin,
\emph{Tensor networks and quantum error correction},
Phys. Rev. Lett. \textbf{113} (2014), 030501;
arXiv:1312.4578, doi:10.1103/PhysRevLett.113.030501.

\bibitem{BravyiSucharaVargo2014}
S. Bravyi, M. Suchara, and A. Vargo,
\emph{Efficient algorithms for maximum likelihood decoding in the surface code},
Phys. Rev. A \textbf{90} (2014), 032326;
arXiv:1405.4883, doi:10.1103/PhysRevA.90.032326.

\bibitem{BondersonFreedmanNayak2008}
P. Bonderson, M. Freedman, and C. Nayak,
\emph{Measurement-only topological quantum computation},
Phys. Rev. Lett. \textbf{101} (2008), 010501;
arXiv:0802.0279, doi:10.1103/PhysRevLett.101.010501.

\bibitem{NayakSimonSternFreedmanDasSarma2008}
C. Nayak, S. H. Simon, A. Stern, M. Freedman, and S. Das Sarma,
\emph{Non-Abelian anyons and topological quantum computation},
Rev. Mod. Phys. \textbf{80} (2008), 1083--1159;
arXiv:0707.1889, doi:10.1103/RevModPhys.80.1083.

\bibitem{BurtonBrellFlammia2017}
S. Burton, C. G. Brell, and S. T. Flammia,
\emph{Classical simulation of quantum error correction in a Fibonacci anyon code},
Phys. Rev. A \textbf{95} (2017), 022309;
arXiv:1506.03815, doi:10.1103/PhysRevA.95.022309.

\bibitem{WoottonHutter2016}
J. R. Wootton and A. Hutter,
\emph{Active error correction for Abelian and non-Abelian anyons},
Phys. Rev. A \textbf{93} (2016), 022318;
arXiv:1506.00524, doi:10.1103/PhysRevA.93.022318.

\bibitem{DauphinaisPoulin2017}
G. Dauphinais and D. Poulin,
\emph{Fault-tolerant quantum error correction for non-Abelian anyons},
Comm. Math. Phys. \textbf{355} (2017), 519--560;
arXiv:1607.02159, doi:10.1007/s00220-017-2923-9.

\bibitem{LyonsBrown2026}
A. Lyons and B. J. Brown,
\emph{Quantum computing with anyons is fault tolerant},
arXiv:2602.11258, 2026.

\bibitem{Kitaev2003}
A. Yu. Kitaev,
\emph{Fault-tolerant quantum computation by anyons},
Ann. Phys. \textbf{303} (2003), 2--30;
arXiv:quant-ph/9707021, doi:10.1016/S0003-4916(02)00018-0.

\bibitem{Kitaev2006}
A. Kitaev,
\emph{Anyons in an exactly solved model and beyond},
Ann. Phys. \textbf{321} (2006), 2--111;
arXiv:cond-mat/0506438, doi:10.1016/j.aop.2005.10.005.

\bibitem{DennisKitaevLandahlPreskill2002}
E. Dennis, A. Kitaev, A. Landahl, and J. Preskill,
\emph{Topological quantum memory},
J. Math. Phys. \textbf{43} (2002), 4452--4505;
arXiv:quant-ph/0110143, doi:10.1063/1.1499754.

\bibitem{AharonovBenOr2008}
D. Aharonov and M. Ben-Or,
\emph{Fault-tolerant quantum computation with constant error rate},
SIAM J. Comput. \textbf{38} (2008), 1207--1282;
arXiv:quant-ph/9906129, doi:10.1137/S0097539799359385.

\bibitem{BravyiTerhal2009}
S. Bravyi and B. Terhal,
\emph{A no-go theorem for a two-dimensional self-correcting quantum memory based on stabilizer codes},
New J. Phys. \textbf{11} (2009), 043029;
arXiv:0810.1983, doi:10.1088/1367-2630/11/4/043029.

\bibitem{LevinWen2005}
M. A. Levin and X.-G. Wen,
\emph{String-net condensation: A physical mechanism for topological phases},
Phys. Rev. B \textbf{71} (2005), 045110;
arXiv:cond-mat/0404617, doi:10.1103/PhysRevB.71.045110.

\bibitem{KoenigKuperbergReichardt2010}
R. Koenig, G. Kuperberg, and B. W. Reichardt,
\emph{Quantum computation with Turaev--Viro codes},
Ann. Phys. \textbf{325} (2010), 2707--2749;
arXiv:1002.2816, doi:10.1016/j.aop.2010.08.001.

\bibitem{SchotteZhuBurgelmanVerstraete2022}
A. Schotte, G. Zhu, L. Burgelman, and F. Verstraete,
\emph{Quantum error correction thresholds for the universal Fibonacci Turaev--Viro code},
Phys. Rev. X \textbf{12} (2022), 021012;
arXiv:2012.04610, doi:10.1103/PhysRevX.12.021012.

\bibitem{KesselringEtAl2024}
M. S. Kesselring, J. C. Magdalena de la Fuente, F. Thomsen, J. Eisert, S. D. Bartlett, and B. J. Brown,
\emph{Anyon condensation and the color code},
PRX Quantum \textbf{5} (2024), 010342;
arXiv:2212.00042, doi:10.1103/PRXQuantum.5.010342.

\bibitem{PastawskiYoshidaHarlowPreskill2015}
F. Pastawski, B. Yoshida, D. Harlow, and J. Preskill,
\emph{Holographic quantum error-correcting codes: Toy models for the bulk/boundary correspondence},
JHEP \textbf{2015} (2015), 149;
arXiv:1503.06237, doi:10.1007/JHEP06(2015)149.

\bibitem{Backens2014}
M. Backens,
\emph{The ZX-calculus is complete for stabilizer quantum mechanics},
New J. Phys. \textbf{16} (2014), 093021;
arXiv:1307.7025, doi:10.1088/1367-2630/16/9/093021.

\bibitem{CoeckeKissinger2017}
B. Coecke and A. Kissinger,
\emph{Picturing Quantum Processes: A First Course in Quantum Theory and Diagrammatic Reasoning},
Cambridge University Press, 2017.

\bibitem{deBeaudrapHorsman2020}
N. de Beaudrap and D. Horsman,
\emph{The ZX calculus is a language for surface code lattice surgery},
Quantum \textbf{4} (2020), 218;
arXiv:1704.08670, doi:10.22331/q-2020-01-09-218.

\bibitem{ChancellorEtAl2016}
N. Chancellor, A. Kissinger, J. Roffe, S. Zohren, and D. Horsman,
\emph{Graphical structures for design and verification of quantum error correction},
Quantum Sci. Technol. \textbf{8} (2023), 045028;
arXiv:1611.08012, doi:10.1088/2058-9565/acf157.

\bibitem{BombinLitinskiNickersonPastawskiRoberts2023}
H. Bombin, D. Litinski, N. Nickerson, F. Pastawski, and S. Roberts,
\emph{Unifying flavors of fault tolerance with the ZX calculus},
Quantum \textbf{8} (2024), 1379;
arXiv:2303.08829, doi:10.22331/q-2024-06-18-1379.


\bibitem{GottesmanKitaevPreskill2001}
D. Gottesman, A. Kitaev, and J. Preskill,
\emph{Encoding a qubit in an oscillator},
Phys. Rev. A \textbf{64} (2001), 012310;
arXiv:quant-ph/0008040, doi:10.1103/PhysRevA.64.012310.

\bibitem{RaussendorfHarringtonGoyal2007}
R. Raussendorf, J. Harrington, and K. Goyal,
\emph{Topological fault-tolerance in cluster state quantum computation},
New J. Phys. \textbf{9} (2007), 199;
arXiv:quant-ph/0703143, doi:10.1088/1367-2630/9/6/199.

\bibitem{NewmanDeCastroBrown2020}
M. Newman, L. A. de Castro, and K. R. Brown,
\emph{Generating fault-tolerant cluster states from crystal structures},
Quantum \textbf{4} (2020), 295;
arXiv:1909.11817, doi:10.22331/q-2020-07-13-295.

\bibitem{MooreRead1991}
G. Moore and N. Read,
\emph{Nonabelions in the fractional quantum Hall effect},
Nucl. Phys. B \textbf{360} (1991), 362--396;
doi:10.1016/0550-3213(91)90407-O.

\bibitem{ReadMoore1992}
N. Read and G. Moore,
\emph{Fractional quantum Hall effect and nonabelian statistics},
Prog. Theor. Phys. Suppl. \textbf{107} (1992), 157--166;
arXiv:hep-th/9202001, doi:10.1143/PTPS.107.157.

\bibitem{KnizhnikZamolodchikov1984}
V. G. Knizhnik and A. B. Zamolodchikov,
\emph{Current algebra and Wess--Zumino model in two dimensions},
Nucl. Phys. B \textbf{247} (1984), 83--103;
doi:10.1016/0550-3213(84)90374-2.

\bibitem{TsuchiyaUenoYamada1989}
A. Tsuchiya, K. Ueno, and Y. Yamada,
\emph{Conformal field theory on universal family of stable curves with gauge symmetries},
Adv. Stud. Pure Math. \textbf{19} (1989), 459--566;
doi:10.2969/aspm/01910459.

\bibitem{BeauvilleLaszlo1994}
A. Beauville and Y. Laszlo,
\emph{Conformal blocks and generalized theta functions},
Comm. Math. Phys. \textbf{164} (1994), 385--419;
arXiv:alg-geom/9309003.

\bibitem{TuraevViro1992}
V. G. Turaev and O. Y. Viro,
\emph{State sum invariants of 3-manifolds and quantum 6j-symbols},
Topology \textbf{31} (1992), 865--902;
doi:10.1016/0040-9383(92)90015-A.

\bibitem{ReshetikhinTuraev1991}
N. Reshetikhin and V. G. Turaev,
\emph{Invariants of 3-manifolds via link polynomials and quantum groups},
Invent. Math. \textbf{103} (1991), 547--597;
doi:10.1007/BF01239527.

\bibitem{Turaev1994}
V. G. Turaev,
\emph{Quantum Invariants of Knots and 3-Manifolds},
De Gruyter Studies in Mathematics, vol. 18, Walter de Gruyter, 1994.

\bibitem{BakalovKirillov2001}
B. Bakalov and A. Kirillov Jr.,
\emph{Lectures on Tensor Categories and Modular Functors},
University Lecture Series, vol. 21, American Mathematical Society, 2001.



\bibitem{Mueger2003}
M. M\"uger,
\emph{From subfactors to categories and topology II: The quantum double of tensor categories and subfactors},
J. Pure Appl. Algebra \textbf{180} (2003), 159--219;
arXiv:math/0111205, doi:10.1016/S0022-4049(02)00248-7.

\bibitem{Drinfeld1985}
V. G. Drinfeld,
\emph{Hopf algebras and the quantum Yang--Baxter equation},
Soviet Math. Dokl. \textbf{32} (1985), no. 1, 254--258.

\bibitem{Drinfeld1986}
V. G. Drinfeld,
\emph{Quantum groups},
in \emph{Proceedings of the International Congress of Mathematicians, Berkeley, 1986},
vol. 1, American Mathematical Society, Providence, RI, 1987, pp. 798--820.

\bibitem{Jimbo1985}
M. Jimbo,
\emph{A \(q\)-difference analogue of \(U(\mathfrak g)\) and the Yang--Baxter equation},
Lett. Math. Phys. \textbf{10} (1985), 63--69;
doi:10.1007/BF00704588.

\bibitem{ChariPressley1994}
V. Chari and A. Pressley,
\emph{A Guide to Quantum Groups},
Cambridge University Press, 1994.

\bibitem{Molev2007}
A. Molev,
\emph{Yangians and Classical Lie Algebras},
Mathematical Surveys and Monographs, vol. 143, American Mathematical Society, 2007.

\bibitem{MaulikOkounkov2019}
D. Maulik and A. Okounkov,
\emph{Quantum groups and quantum cohomology},
Ast\'erisque No. \textbf{408} (2019), ix+209 pp.;
arXiv:1211.1287, doi:10.24033/ast.1074.

\bibitem{Tsymbaliuk2017}
A. Tsymbaliuk,
\emph{The affine Yangian of \(\mathfrak{gl}_1\) revisited},
Adv. Math. \textbf{304} (2017), 583--645;
arXiv:1404.5240, doi:10.1016/j.aim.2016.08.041.

\bibitem{Kassel1995}
C. Kassel,
\emph{Quantum Groups},
Graduate Texts in Mathematics, vol. 155, Springer, 1995;
doi:10.1007/978-1-4612-0783-2.


\bibitem{Boalch2001}
P. Boalch,
\emph{Symplectic manifolds and isomonodromic deformations},
Adv. Math. \textbf{163} (2001), 137--205;
doi:10.1006/aima.2001.1998.

\bibitem{Hitchin1987}
N. J. Hitchin,
\emph{Stable bundles and integrable systems},
Duke Math. J. \textbf{54} (1987), 91--114;
doi:10.1215/S0012-7094-87-05408-1.

\bibitem{BeauvilleNarasimhanRamanan1989}
A. Beauville, M. S. Narasimhan, and S. Ramanan,
\emph{Spectral curves and the generalised theta divisor},
J. Reine Angew. Math. \textbf{398} (1989), 169--179;
doi:10.1515/crll.1989.398.169.

\bibitem{Simpson1992}
C. T. Simpson,
\emph{Higgs bundles and local systems},
Publ. Math. Inst. Hautes Etudes Sci. \textbf{75} (1992), 5--95;
doi:10.1007/BF02699491.

\bibitem{Rayan2013NYJM}
S. Rayan,
\emph{Co-Higgs bundles on \(\mathbb P^1\)},
New York J. Math. \textbf{19} (2013), 925--945;
arXiv:1010.2526.

\bibitem{RayanSundbo2018}
S. Rayan and E. Sundbo,
\emph{Twisted argyle quivers and Higgs bundles},
Bull. Sci. Math. \textbf{146} (2018), 1--32;
arXiv:1803.04531, doi:10.1016/j.bulsci.2018.03.003.

\bibitem{Rayan2018SIGMA}
S. Rayan,
\emph{Aspects of the topology and combinatorics of Higgs bundle moduli spaces},
SIGMA \textbf{14} (2018), Paper No. 129, 18 pp.;
arXiv:1809.05732, doi:10.3842/SIGMA.2018.129.

\bibitem{MayrandRoyer2026}
M. Mayrand and B. Royer,
\emph{Complex abelian varieties and quantum error correction: a mathematical framework for GKP codes},
arXiv:2605.28784, 2026.

\bibitem{AzamRayan2024}
M. Azam and S. Rayan,
\emph{TQFTs and quantum computing},
Bull. Sci. Math. \textbf{194} (2024), 103454;
arXiv:2210.03556, doi:10.1016/j.bulsci.2024.103454.

\bibitem{MahmoudAliRayan2026}
A. A. Mahmoud, K. M. Ali, and S. Rayan,
\emph{Systematic approach to hyperbolic quantum error correction codes},
Phys. Rev. A \textbf{113} (2026), 042426;
arXiv:2504.07800, doi:10.1103/95mp-w7kr.

\bibitem{MahmoudTournaireBachmannRayan2026}
A. A. Mahmoud, G. Tournaire, S. Bachmann, and S. Rayan,
\emph{Hyperbolic cluster states for fault-tolerant measurement-based quantum computing},
arXiv:2603.27004, 2026.



\bibitem{Ocneanu1994Tube}
A. Ocneanu,
\emph{Chirality for operator algebras},
in \emph{Subfactors (Kyuzeso, 1993)},
H. Araki, Y. Kawahigashi, and H. Kosaki (eds.),
World Scientific, 1994, pp.~39--63.

\bibitem{KitaevKong2012}
A. Kitaev and L. Kong,
\emph{Models for gapped boundaries and domain walls},
Comm. Math. Phys. \textbf{313} (2012), 351--373;
arXiv:1104.5047, doi:10.1007/s00220-012-1500-5.

\bibitem{FreedmanLarsenWang2002}
M. H. Freedman, M. J. Larsen, and Z. Wang,
\emph{A modular functor which is universal for quantum computation},
Comm. Math. Phys. \textbf{227} (2002), 605--622;
arXiv:quant-ph/0001108, doi:10.1007/s002200200645.

\bibitem{AlmheiriDongHarlow2015}
A. Almheiri, X. Dong, and D. Harlow,
\emph{Bulk locality and quantum error correction in AdS/CFT},
J. High Energy Phys. \textbf{04} (2015), 163;
arXiv:1411.7041, doi:10.1007/JHEP04(2015)163.

\end{thebibliography}
\end{document}